\documentclass[runningheads]{file}

\usepackage[T2A]{fontenc}

\usepackage{graphicx}
\usepackage{tikz}
\usepackage{color, amsmath, amssymb}
\usepackage{lineno}
   
\textheight 23cm \textwidth 15cm

\oddsidemargin=+0.8cm  
\evensidemargin = +0.8cm
\topmargin -1cm
\date{}

\sloppy

\DeclareMathOperator{\longg}{long}
\DeclareMathOperator{\short}{short}
\DeclareMathOperator{\zeros}{zeros}
\DeclareMathOperator{\fix}{fix}

\newcommand{\qedSymC}{\hfill$\lhd$}

\begin{document}

\title{Recovery of cyclic words by their subwords} 

\institute{}

\author{Sergey Luchinin\inst{1}, Svetlana Puzynina\inst{1}, Micha\"el Rao\inst{2}}

\authorrunning{}

\institute{Saint Petersburg State University, Russia 
\and  ENS de Lyon, CNRS, Universit\'e de Lyon, France\\
\email{\{serg2000lrambler.ru,s.puzynina\}@gmail.com,michael.rao@ens-lyon.fr}
}

\maketitle  

\begin{abstract}
The problem of reconstructing words from their subwords involves determining the minimum amount of information needed, such as multisets of scattered subwords of a specific length or the frequency of scattered subwords from a given set, in order to uniquely identify a word. In this paper we show that a cyclic word on a binary alphabet can be reconstructed by its scattered subwords of length $\frac34n+4$, and for each $n$ one can find two cyclic words of length $n$ which have the same set of scattered subwords of length $\frac34n-\frac32$.   
\end{abstract}

\section{Introduction}

The problem of reconstruction of words by their subwords is an important topic in combinatorics of words and related fields of mathematics and theoretical computer science. There are many  variants of this problem: reconstruction of normal or cyclic words, from the sets of scattered or contiguous subwords, with or without multiplicities, restricting to a particular family of words and so on. It has been known for more than 40 years that one can reconstruct a word of length $n$ from the set of all its scattered subwords of length $\lfloor\frac{n-1}{2}\rfloor$, and this bound is optimal \cite{Lo83}. The same bound holds for contiguous subwords with multiplicities, while without multiplicities it is $(n-1)$ \cite{Manuch}.
 
The question of reconstruction of words of length $n$ by their scattered subwords of length $k$ with multiplicities turned out to be more complicated. One of the first results related to this question states that for words over a binary alphabet and for $k \geqslant \lfloor \frac{n}2\rfloor$ we can uniquely reconstruct the word, and for $k < \log_2(n)$ we cannot do it \cite{MMSSS91}. One of the best upper bounds of this problem with subword length $\lfloor\frac{16}{7} \sqrt{n}\rfloor$ follows from a result related to polynomials introduced in \cite{BEK99}. One of the best lower bounds is $k = \Omega(e^{\log^{\frac12}(n)})$ \cite{DS03}.

Other results about words reconstruction include reconstruction from the number of occurrences of scattered factors of some special form. For example, a word $w \in \{a,b\}^*$ can be reconstructed from the number of occurrences of at most $\min(|w|_a, |w|_b)+1$ scattered factors of the form $a^ib$, where $|w|_a$ is the number of occurrences of the letter $a$ in $w$ \cite{FLM2021}. A word can also be reconstructed from the number of occurrences as scattered subwords of $O(\ell^2\sqrt{n\log(n)})$ words we have chosen, where $\ell$ is the cardinality of alphabet \cite{RR23} (this improves the results from \cite{FLM2021}). Besides that, there are some results about reconstruction of words of special types. For example, reconstruction of DNA-words has been studied in \cite{SL04}, and reconstruction of words that are cyclic shifts of some fixed word has been explored in \cite{L12}.

In this paper we consider the problem of reconstructing a cyclic word by its scattered subwords without multiplicities:

\medskip

\textbf{Problem:} Consider a cyclic binary word $w$ of length $n$ and the set of its scattered cyclic subwords of length at most $k$ (without multiplicities). For which $k$ can we always recover $w$?

\medskip

Computational results giving the minimal length $k$ of subwords which allows to recover cyclic words of length $n$ for $n \leqslant 32$ are provided in Table 1.

\begin{table}[h]
\centering
\caption{Computational results for small values of $n$}
\begin{tabular}{|c|c|c|c|c|c|c|c|c|c|c|c|c|c|c|c|c|c|c|c|c|c|c|c|c|c|c|c|c|c|c|c|c|}
\hline $n$ & 1 & 2 & 3 & 4 & 5 & 6 & 7 & 8 & 9 & 10 & 11 & 12 & 13
& 14
& 15 & 16 & 17 & 18 & 19 & 20 & 21 & 22 & 23 & 24 & 25 & 26 & 27 & 28 & 29 & 30 & 31 & 32\\
\hline $k$ & 1 & 1 & 2 & 4 & 4 & 6 & 6 & 7 & 8 & 8 & 9 & 10 & 11 & 11 & 12 & 12 & 14 & 14 & 14 & 15 & 17 & 16 & 17 & 18 & 20 & 19 & 20 & 21 & 23 & 22 & 23 & 24\\
\hline
\end{tabular}
\end{table}

The main result of this paper is the following theorem:

\begin{theorem} \label{th:main}
For any two distinct cyclic binary words $u$ and $v$ of length $n$ there exists a word $w$ of length at most $\frac34n+4$ which is a subword of exactly one of the words $u$ and $v$. 
\end{theorem}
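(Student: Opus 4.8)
The plan is to argue by minimality: let $w$ be a shortest word that is a cyclic (scattered) subword of exactly one of $u$ and $v$, say of $u$ but not of $v$, and show $|w| \le \tfrac34 n + 4$. Minimality gives a strong local condition, namely that deleting any single letter of $w$ produces a word that is a cyclic subword of \emph{both} $u$ and $v$: it is still a subword of $u$, and being strictly shorter than $w$ it cannot distinguish. Before exploiting this, I would record the cheap invariants that the common subword sets already determine. The largest $k$ with $a^k$ a cyclic subword of a word equals its number of $a$'s, and likewise for $b$, so I first dispose of the case $|u|_a \ne |v|_a$ — where a suitable monochromatic or nearly monochromatic power distinguishes within the target length once one balances the two letters — and otherwise assume $|u|_a = |v|_a =: k$ with $k \le n/2$. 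I would then pass to the run-length (block) encoding, viewing each cyclic word as a cyclic sequence of $k$ gaps recording the number of $b$'s between consecutive $a$'s; the two words coincide iff these gap sequences agree up to rotation.

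The core is a forward/backward greedy analysis. Fix a rotation $u'$ of $u$ carrying an embedding of $w$, and for each letter position of $w$ fix a rotation of $v$ absorbing the corresponding single deletion. Reading $w$ greedily from the left into a rotation of $v$ matches a maximal prefix $w_1\cdots w_\alpha$ using the shortest possible arc of $v$, and reading greedily from the right matches a maximal suffix $w_{\beta}\cdots w_m$; because $w$ itself fails in every rotation while every single deletion succeeds, these two arcs must almost meet yet cannot be joined, which forces their lengths inside $v$ to be large. Translating ``$w$ fits in $u'$'' and ``each deletion fits in $v$'' into the gap encoding, I would show that matching the prefix and the suffix of $w$ consumes essentially disjoint portions of $v$ whose total length, together with the unavoidable scattered slack, exceeds $n$ unless $w$ is short. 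Balancing the length contributed by the $a$-runs against that contributed by the $b$-runs and using $k \le n/2$ then yields a bound of the form $|w| \le \tfrac34 n + O(1)$, the additive constant arising from the endpoints of the two arcs and from the choice of cut points for the rotations.

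\textbf{Main obstacle.} The chief difficulty, absent in the linear case, is the rotational ambiguity: the rotation of $v$ accommodating the deletion at one position of $w$ need not be the one accommodating the deletion at another, so the forward and backward greedy embeddings a priori live in different linearizations of $v$. I expect the heart of the argument to be a lemma pinning these down to a single well-chosen cut of $v$ — using the $a$-gap structure to align the embedding of $w$ in $u'$ with a near-embedding in $v$ — after which the two-sided coverage estimate can be carried out in one linear word. Obtaining the constant in that estimate exactly (so the bound is $\tfrac34 n$ rather than something weaker) is where the careful accounting of overlapping runs, and the regime $k$ close to $n/2$, will matter most.
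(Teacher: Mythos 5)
Your proposal is a plan rather than a proof, and the two steps you yourself flag as the heart of the matter are precisely the ones that are missing. First, the rotational-ambiguity lemma (``pinning these down to a single well-chosen cut of $v$'') is not an expected technicality but the entire difficulty of the cyclic problem; you give no argument for it, and it is not clear it holds in the form you need. The paper never obtains such a reduction to one linearization: instead it keeps all $1$-overlays of $v$ on $u$ in play simultaneously, builds a subword of $u$ containing all $n_1$ ones that is \emph{unioccurrent} (or has a controlled number of occurrences), and then, for each of the few surviving overlays, appends one extra block of $0$'s witnessing a discrepancy. Bounding the number of overlays and the total length of the appended blocks forces a detailed structural analysis — periodic words, the three types of ``special'' words, the number $a$ of maximal blocks $0^x$, the second-largest block length $y$ — and that case analysis (five lemmas) is where essentially all the work lies. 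Your minimal-counterexample/greedy scheme bypasses none of it.

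Second, the constant $\tfrac34$ is asserted, not derived. Deletion-of-one-letter plus two-sided greedy matching is the standard route to the $\approx n/2$ bound for \emph{linear} words, and nothing in your accounting explains why the cyclic answer should be $\tfrac34 n$ rather than $\tfrac12 n$ or $n$. The extremal pairs (Proposition~1 of the paper, e.g.\ $0^{m-1}10^m10^{m+1}10^{m+1}1$ versus $0^m10^{m-1}10^{m+1}10^{m+1}1$ with $n=4m+5$) have only four $1$'s and four blocks of $0$'s of length $\approx n/4$; a distinguishing subword must contain essentially three of those blocks, so the slack in any embedding is concentrated in a single huge block rather than spread over many positions. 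A correct count must therefore be sensitive to the block-length distribution, which your per-letter greedy bookkeeping does not see. As written, the proposal has a genuine gap at both of its load-bearing points.
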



We also show that for each $n \geqslant 7$ there are pairs of words for which sets of subwords of length at most $\frac34n-\frac32$ are equal (see Proposition~\ref{pr:lower_bound}). So, we found a lower bound $k \geqslant \frac34n-\frac{3}{2}$ and an upper bound $k\leqslant\frac34n+4$. In other words, we calculated $k$ modulo some constant which is at most $5$, and therefore our bound is almost sharp.

The paper is organized as follows. In Section 2, we introduce necessary definitions and notation used throughout the text. In Section 3, we prove some auxiliary propositions that are needed for the proof of the theorem. Besides that, in this section we provide a lower bound for $k$. In Section 4, we prove the main theorem. The section is divided into two subsections: in Subsection 4.1  we introduce the notation used throughout the section and the general structure of the proof, and Subsection 4.2 contains the proof of the theorem divided into five lemmas.   

\section{Definitions and notation}

Let $\Sigma$ be an alphabet. In the paper, we assume $\Sigma$ to be binary. A \textit{word} is a finite or infinite sequence of symbols from $\Sigma$, and $\Sigma^*$ denotes the set of all finite words. 

Two words $x$ and $y$ are said to be \textit{conjugate} if there exist words $u$ and $v$ such that $x=uv$ and $y=vu$. We define a \textit{cyclic word} as an equivalence class of the conjugacy relation on $\Sigma^*$. Thus, if $w \in \Sigma^*$, then the cyclic word represented by $w$ is the set $ \{vu \in \Sigma^* \mid u,v\in\Sigma^*, uv = w \}$. For the rest of this paper, when referring to a cyclic word, we write a representative of this class, slightly abusing the notion to avoid cluttering the text. For a finite or a cyclic word its \textit{length} is the number of letters in it.

For a cyclic word $w$ with a representative $w_1\cdots w_n$, a \textit{subword} of $w$ is a cyclic word with a representative of the form $w_{i_1}\cdots w_{i_k}$, where $1\leqslant i_1 < \ldots < i_k\leqslant n$. 
A factor of some conjugate of $w_1\cdots w_n$ of the form $0^+$ (resp., $1^+$) continued to the left and to the right with 1 (resp., 0) is called a {\it block} of 0's (resp., 1's).

For two cyclic words $u$ and $v$, we say that a cyclic word $w$ is a \textit{distinguishing} subword if it is a subword of only one of the words $u$ and $v$. Using this notion, Theorem~\ref{th:main} can be reformulated as follows: any two distinct cyclic binary words of length $n$ have a distinguishing subword of length at most $\frac{3}{4} n + 4$.

We say that a cyclic word $w$ is \textit{periodic} if $w = (1^{\alpha_1}0^{\beta_1}1^{\alpha_2}0^{\beta_2}\cdots 1^{\alpha_s}0^{\beta_s})^r$ for some $\alpha_i \geqslant  1$, $\beta_i \geqslant  1$, $s\geq 1$ and $r \geqslant  2$.

\bigskip

Let $w$ be a cyclic word of length $n$. Throughout the paper, we make use of the following notation:

\begin{itemize}


\item $n_{0,w}$ and $n_{1,w}$ are the numbers of 0's and 1's in $w$.

\item $2l_w$ is the total  number of blocks in $w$, $l_w$ blocks of 0's and $l_w$ blocks of 1's (here we consider a representative of $w$ in which the first and the last letters are distinct).

\item $x_w$ is the length of the longest block of 0's in $w$.

\item Blocks $0^{x_w}$ are called {\it big blocks}, other blocks are called {\it small blocks}.

\item $y_w$ is the length of the longest block in $w$ which is smaller than $x_w$ (we might have $y_w = 0$).

\item $n_{\longg, w}$ is the number of 0's in big blocks $0^{x_w}$ and $n_{\short, w} = n_{0,w} - n_{\longg, w}$ is the number of 0's in small blocks.

\item $w_{\longg}$ is the subword of $w$ which contains all 1's of $w$ and all 0's from big blocks of 0's (all blocks $0^{x_w}$).

\item $w_{\short}$ is the subword of $w$ which contains all 1's of $w$ and all 0's from small blocks of 0's (all blocks of 0's except for the blocks $0^{x_w}$).

\item $w_{\zeros}$ is the subword of $w$ which contains all 1's of $w$ and one $0$ from each block of 0's.
\end{itemize}

For the rest of this paper, we omit the subscript $w$ when no confusion arises. The notation $n_0, n_1$ is justified by the following. In the beginning of the next section we prove Proposition~\ref{prop n_0, n_1, l} stating that cyclic words $u$ and $v$ with the same set of subwords of length $\frac{3}{4} n + 4$ have the same numbers of 0's, 1's and also the same number of blocks. So, for the rest of the paper $n_0, n_1$ and $l$ are fixed.

\begin{example}
Let $w = 10^3101^20^31^20^2$. Then $n_1 = 6, n_0 = 9, x = 3, 
n_{\longg} = 6, n_{\short} = 3, w_{\longg} = 1^30^31^30^3, w_{\zeros} = 10101^201^20$. 
\end{example} 

For the proof of the main result, we need to treat words of certain specific forms separately. We hence introduce the following definition:

\begin{definition}
Let $w$ be a word with $l_w\geqslant  2$. We say that $w$ is \emph{special} if it is of one of the following three types:

\begin{itemize}
\item  \emph{first type}: 
$$w = (0^t1^m)^{l_w} $$ 
\item \emph{second type}:  
$$w = (0^t1^m)^{l_w-1} 0^t1^{2m} $$
\item \emph{third type}: 
$$w = (0^t1^m)^i 0^t1^{2m} (0^t1^m)^{l_w-i-2}  0^t1^{2m}$$
for some positive integers $t$, $m$ and $i$. \end{itemize} \end{definition}

So, a word of the first type has $l_w$ blocks $1^m$, a word of the second type has $l_w-1$ blocks $1^m$ and one block $1^{2m}$, a word of the third type has $l_w-2$ blocks $1^m$ and two blocks $1^{2m}$, and the number of blocks of 0's is at least two in each case. 


For a special word $w$, we let $dist(w)$ denote the length of the shortest block of 1's, i.e., the minimum distance between blocks of 0's. Note that the value of $dist(w)$ is equal to the number $m$ from the definition of a special word. 

When working with cyclic words, sometimes we  need to index their letters, either just one letter or both letters. For example, for a cyclic word generated by $00101$ we could either index 1's as $001_101_2$, or as $001_201_1$. We remark that it is not exactly the same as choosing a representative from the conjugacy class (the difference comes up in periodic cyclic words).

\begin{definition} \label{def:2}
    Let $u$ and $v$ be cyclic words with indexed 1's such that $n_{0,u}=n_{0,v}$ and $n_{1,u}=n_{1,v}$. We then define a \emph{1-overlay} of the word $u$ on the word $v$ as a bijection between indexed 1's in $u$ and in $v$ which, for some integer $i$, translates each $1_j$ in $u$ to $1_{j+i}$ in $v$ (indices are taken modulo~$n_1$). 
\end{definition}

An example of a $1$-overlay is provided on Fig.~\ref{fig:overlay}.

\begin{center}
\begin{figure}
\begin{tikzpicture}

\draw (3,0) circle (1);
\draw (4,0) node [right]  {$1_2$}
(3,1) node [above] {$1_1$}
(3,-1) node [below] {$1_3$}
(2,0) node [left]  {$1_4$}
(3.7,0.7) node [above right] {$0^3$}
(2.3,0.7) node [above left] {$0^2$}
(3.7,-0.7) node [below right] {}
(2.3,-0.7) node [below left] {$0^2$};

\draw (7,0) circle (1);
\draw (8,0) node [right]  {\color{blue}$1_2$}
(7,1) node [above] {\color{blue}$1_1$}
(7,-1) node [below] {\color{blue}$1_3$}
(6,0) node [left]  {\color{blue}$1_4$}
(7.7,0.7) node [above right] {\color{blue}$0$}
(6.3,0.7) node [above left] {\color{blue}$0^3$}
(7.7,-0.7) node [below right] {\color{blue}$0^3$}
(6.3,-0.7) node [below left] {$ $};

\draw (13,0) circle (1);
\draw (14,0) node [right]  {$1_2$ \color{blue}($1_4$)}
(13,1) node [above] {$1_1$ \color{blue}($1_3$)}
(13,-1) node [below] {$1_3$ \color{blue}($1_1$)}
(12,0) node [left]  {$1_4$ \color{blue}($1_2$)}
(13.7,0.7) node [above right] {$0^3$ \color{blue}()}
(12.3,0.7) node [above left] {$0^2$ \color{blue}($0^3$)}
(13.7,-0.7) node [below right] {\color{blue}($0^3$)}
(12.3,-0.7) node [below left] {$0^2$ \color{blue}($0$)};

\end{tikzpicture}\label{fig:overlay}
\caption{Example for Definition~\ref{def:2}: $u=1_10^31_21_30^21_40^2$ (left),  $v=1_101_20^31_31_40^3$ (center), 1-overlay of $u$ on $v$ for $i=2$ (right).}
\end{figure}
\end{center}

In other words, a 1-overlay can be seen as an order-preserving bijection between 1's in the words $u$ and $v$, or simply as a shift of indices of 1's in $v$ relative to in $u$. This bijection between 1's induces a bijection between blocks of 0's in the following sense. If 
$$u = 1_10^{\alpha_{u,1}}1_20^{\alpha_{u,2}}\cdots 1_{n_1}0^{\alpha_{u,n_1}}, \quad v = 1_10^{\alpha_{v,1}}1_20^{\alpha_{v,2}}\cdots 1_{n_1}0^{\alpha_{v,n_1}},$$
with $\alpha_{u,i},\alpha_{v,i}\geqslant 0$, then $0^{\alpha_{u,i}}$ is translated to  $0^{\alpha_{v,i+j}}$. We remark that $\alpha_{u,i}$ and $\alpha_{v,i+j}$ can be equal to 0. If $\alpha = 0$, we say that $0^{\alpha}$ is {\it empty place} or $\varnothing$ .

In addition, if $u'$ and $v'$ are equal subwords of $u$ and $v$, respectively, and $n_{1,u'} = n_{1,u} = n_{1,v} = n_{1,v'}$, then we can consider a 1-overlay of $u$ on $v$ such that $u'$ and $v'$ coincide. In other words, if
$$u' = 1_10^{\beta_{u,1}}1_20^{\beta_{u,2}}\cdots 1_{n_1}0^{\beta_{u,n_1}}, \quad v' = 1_10^{\beta_{v,1}}1_20^{\beta_{v,2}}\cdots 1_{n_1}0^{\beta_{v,n_1}};$$
where $0 \leqslant  \beta_{u,j} \leqslant  \alpha_{u,j}$, $0 \leqslant  \beta_{v,j} \leqslant  \alpha_{v,j}$ and $\beta_{u,j} = \beta_{v,j+i}$, then a 1-overlay of $u$ on $v$ can be considered as a bijection between indexed 1's in $u$ and in $v$ which translates $1_j$ in $u$ to $1_{j+i}$ in $v$. For example, if  $u = 1_10^21_201_30$, $u' = 1_10^21_201_3$, $v = 1_11_20^31_30$, and $v' = 1_20^21_301_1$, then for a 1-overlay of $u$ on $v$ with $i=1$ the subwords $u'$ and $v'$ coincide: $\frac{1_1\, 0^2 \, 1_2\, 0\, 1_3\, 0}{1_2 \,0^3 \, 1_3 \, 0\,  1_1\, \,\,}$.

\begin{definition} We say that a subword $s$ of a cyclic word $w$ is \emph{unioccurrent} if there is only one occurrence of $s$ in $w$, i.e., if we index 0's and 1's in $s$ and in $w$, then there is a unique injection from indices of letters in $s$ to indices of letters in $w$ giving an occurrence of $s$ in $w$.
\end{definition}

\begin{example}
Let $w = 01011 = 0_11_10_21_21_3$. Then the word $111$ is not unioccurrent, since we can choose it in three ways $1_11_21_3, 1_21_31_1, 1_31_11_2$, and the word $0110$ is unioccurrent, since we can take in one way $0_21_21_30_1$.
\end{example}

We remark that in a periodic cyclic word $w$ there are no unioccurrent subwords: indeed, we can shift the indices by the period. Similarly, a periodic subword of any cyclic word cannot be unioccurrent.

\begin{definition}\label{def:turn}
Let $w$ be a cyclic word with indexed 1's: $w = 1_1 0^{\alpha_1} 1_2 0^{\alpha_2}1_3 0^{\alpha_3}\cdots 1_{n_1} 0^{\alpha_{n_1}}$ ($\alpha_i \geqslant  0$). 
A {\emph{turn}} of $w$ is a word of the form $1_1 0^{\alpha_i} 1_2 0^{\alpha_{i+1}}1_3 0^{\alpha_{i+2}}\cdots 1_{n_1} 0^{\alpha_{i-1}}$ for some $i$, where indices are taken modulo $n_1$ (from 1 to $n_1$). 
\end{definition} 


An example illustrating the above definition is provided on Fig.~\ref{fig:turn}.

\begin{center}
\begin{figure}\label{fig:turn}
\begin{tikzpicture}

\draw (1,0) circle (1);
\draw (2,0) node [right]  {$1_2$}
(1,1) node [above] {$1_1$}
(1,-1) node [below] {$1_3$}
(0,0) node [left]  {$1_4$}
(1.7,0.7) node [above right] {$0^3$}
(0.3,0.7) node [above left] {$0$}
(1.7,-0.7) node [below right] {$0$}
(0.3,-0.7) node [below left] {$0^2$};

\draw (5,0) circle (1);
\draw (6,0) node [right]  {$1_2$}
(5,1) node [above] {$1_1$}
(5,-1) node [below] {$1_3$}
(4,0) node [left]  {$1_4$}
(5.7,0.7) node [above right] {$0$}
(4.3,0.7) node [above left] {$0^2$}
(5.7,-0.7) node [below right] {$0^3$}
(4.3,-0.7) node [below left] {$0$};

\draw (9,0) circle (1);
\draw (10,0) node [right]  {$1_2$}
(9,1) node [above] {$1_1$}
(9,-1) node [below] {$1_3$}
(8,0) node [left]  {$1_4$}
(9.7,0.7) node [above right] {$0^2$}
(8.3,0.7) node [above left] {$0$}
(9.7,-0.7) node [below right] {$0$}
(8.3,-0.7) node [below left] {$0^3$};

\draw (13,0) circle (1);
\draw (14,0) node [right]  {$1_2$}
(13,1) node [above] {$1_1$}
(13,-1) node [below] {$1_3$}
(12,0) node [left]  {$1_4$}
(13.7,0.7) node [above right] {$0$}
(12.3,0.7) node [above left] {$0^3$}
(13.7,-0.7) node [below right] {$0^2$}
(12.3,-0.7) node [below left] {$0$};

\end{tikzpicture}
\caption{Example for Definition~\ref{def:turn}: four turns of the word $1_10^31_201_30^21_40$.}
\end{figure}
\end{center}

Essentially, a turn of $w$ corresponds to a 1-overlay from Definition~\ref{def:2} of a word with itself. Note that a turn of the word $w$ generates a turn of its subwords in the following sense. Let $w$ be a cyclic word with indexed 1's: $w = 1_1 0^{\alpha_1} 1_2 0^{\alpha_2}1_3 0^{\alpha_3}\cdots 1_{n_1} 0^{\alpha_{n_1}}$ ($\alpha_i \geqslant  0$). Let $s_1 = 1_1 0^{\beta_1} 1_2 0^{\beta_2}1_3 0^{\beta_3}\cdots 1_{n_1} 0^{\beta_{n_1}}$, $\beta_i \leqslant  \alpha_i$, and $s_2 = 1_1 0^{\gamma_1} 1_2 0^{\gamma_2}1_3 0^{\gamma_3}\cdots 1_{n_1} 0^{\gamma_{n_1}}$, $\gamma_i \leqslant  \alpha_i$ be two subwords of $w$ containing all 1's. We say that $s_1$ and $s_2$ \textit{differ by a turn} if there exists a turn of $w$ such that it generates a turn of $s_1$ which is equal to $s_2$, i.e., there exists $j$ such that $\beta_i = \gamma_{i+j}$ for each $i$ ($i+j$ is taken modulo $n_1$). We further say that the turn of $w$ translates the subword $s_1$ to the equal subword $s_2$. We would like to emphasize that for the notion of a turn, letters are not treated symmetrically.

In the proof of the main result, we make use of the following notation:

\begin{definition}\label{def_cong} Let $u$ and $v$ be cyclic words of equal length, such that $n_{1,u} = n_{1,v}$ and $x_u=x_v = x$.  Let $u_1$ and $v_1$ be subwords of $u_{short}$ and $v_{short}$ correspondingly. We write $u_1 \cong v_1$ if $u_1 = v_1$ and we can add one block $0^x$ to $u_1$ and one block $0^x$ to $v_1$ such that the obtained words are subwords of $u$ and $v$, respectively, and they are equal. \end{definition} 
    
Note that if there are no such blocks $0^x$, then $u_{\short} \ncong v_{\short}$, even if $u_{\short} = v_{\short}$.

\begin{example}
    Let $x > 1$ and $u = 0^x 1011 0^x 1011 111$, $v = 0^x 1011 1011 0^x 111$ and $w = 0^x 1101 0^x 1101 111$. Then $u_{\short} = v_{\short} = w_{\short} = 1011 1011 111$ and $u_{\short} \cong v_{\short}$, since we can add a block $0^x$ to $u_{\short}$ and add a block $0^x$ to $v_{\short}$ and get the word $0^x 1011 1011 111$. However, we have $u_{\short} \ncong w_{\short}$ and $v_{\short} \ncong w_{\short}$.
\end{example}

\begin{remark}
    This notion is used only for special words.
\end{remark}

\section{Preliminary observations and auxiliary statements}

In the following proposition, we show that for each $n \geqslant 7$ there are pairs of words for which the sets of their subwords of length at most $\frac34n-\frac32$ are equal.

\begin{proposition}\label{pr:lower_bound}
    For $n = 4m+3$, $n=4m+4$, $n=4m+5$ and $n = 4m+6$  the following pairs of words:
    $$0^{m-2}1 0^{m-1} 1 0^{m+1}10^{m+1}1, \quad 0^{m-1}1 0^{m-2}1 0^{m+1}10^{m+1}1;$$ 
    $$0^{m-2}1 0^m 1 0^{m+1}10^{m+1}1, \quad 0^{m}1 0^{m-2}1 0^{m+1}10^{m+1}1;$$ 
    $$0^{m-1}1 0^m 1 0^{m+1}10^{m+1}1, \quad 0^{m}1 0^{m-1}1 0^{m+1}10^{m+1}1;$$ 
    $$0^{m-2}1 0^m1 0^{m+2}10^{m+2}1, \quad 0^{m}1 0^{m-2}1 0^{m+2}10^{m+2}1;$$ 
    can be distinguished by subwords of length $3m+2$, $3m+3$, $3m+5$ and $3m+3$, respectively, and not smaller than that.

\bigskip
\end{proposition}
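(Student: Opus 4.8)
The plan is to exploit the evident symmetry of each pair. In every one of the four pairs both words contain exactly four $1$'s and have the same cyclic multiset of $0$-blocks; the two words are related by reversing the cyclic order of their blocks, so the second word is the reversal $\widetilde u$ of the first. Since reversing a linear representative reverses all scattered subwords, a cyclic word $s$ is a subword of $\widetilde u$ if and only if its reversal $\widetilde s$ is a subword of $u$. Hence $s$ is a distinguishing subword of the pair $(u,\widetilde u)$ precisely when $s\in\mathrm{Sub}(u)$ but $\widetilde s\notin\mathrm{Sub}(u)$. I encode a subword with $j\le 4$ ones by its cyclic gap-vector $N=(c_1,\dots,c_j)$, the tuple of numbers of $0$'s between consecutive chosen $1$'s, read as a necklace. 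Writing $N\preceq N'$ for coordinatewise domination of equal-length vectors, one has $N\in\mathrm{Sub}(u)$ iff some rotation of $N$ is $\preceq$ the gap-vector of some selection of $j$ of the four $1$'s of $u$ (such a vector is obtained by summing the blocks of $u$ lying between consecutively chosen $1$'s). For $j\le 2$ every such $N$ is a palindrome as a necklace, so $\mathrm{Sub}(u)$ is closed under reversal on these subwords; thus no distinguishing subword has at most two $1$'s, and only the cases $j=3,4$ matter.

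For the upper bound I exhibit an explicit distinguishing subword of the claimed length, of the uniform shape $s=1\,1\,0^{a}\,1\,0^{a}\,1\,0^{b}$, i.e.\ $N=(0,a,a,b)$, with $(a,b)=(m-1,m)$ for the first pair, $(a,b)=(m-1,m+1)$ for the second and fourth, and $(a,b)=(m,m+1)$ for the third. In each case a direct coordinate check shows $N\preceq$ the base rotation of the block-vector of $u$, so $N\in\mathrm{Sub}(u)$, while $\widetilde N=(b,a,a,0)$ is dominated by no rotation of that block-vector, so $\widetilde N\notin\mathrm{Sub}(u)$. The resulting subword has length $4+(2a+b)$, which equals $3m+2,\ 3m+3,\ 3m+5$ and $3m+3$ for the four pairs, respectively, as required.

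For the lower bound I must show that no shorter subword distinguishes, i.e.\ that for gap-sum below the corresponding threshold membership in $\mathrm{Sub}(u)$ is invariant under reversal. I split according to $j$. For $j=3$ the four selection vectors of $u$ each consist of two small entries (at most a big block) and one large ``merged'' entry; from this one reads off a lower bound on the gap-sum of any necklace lying outside $\mathrm{Sub}(u)$, and hence on the length of any $3$-one distinguishing subword, showing it is at least the claimed value. For $j=4$, which is the decisive case, I classify the gap-vectors of small gap-sum and prove that for gap-sum strictly below the threshold, $N\in\mathrm{Sub}(u)$ iff $\widetilde N\in\mathrm{Sub}(u)$. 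The engine here is that the two maximal blocks of $u$ are equal and adjacent, so an embedding always has a mirror-symmetric rotation available; only when the subword is long enough to saturate both big blocks and force the specific order of the two small blocks does an asymmetry appear, which is exactly where $(0,a,a,b)$ sits.

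The main obstacle is the $j=4$ threshold analysis. Because $u$ has two equal adjacent big blocks together with two adjacent small blocks differing by $1$ or $2$, testing $N\preceq(\text{rotation of }u)$ breaks into several alignment cases according to which arc of the chosen $1$'s absorbs the two big blocks and how the small pair is oriented; one must verify, uniformly in $m$ (the comparisons are linear inequalities in $m$, with a few small-$m$ values checked by hand), that an asymmetric $N$ of minimal gap-sum occurs exactly at the stated length and not one shorter. Assembling these alignment cases into a clean monotone argument, and making sure the $j=3$ bound never undercuts the $j=4$ optimum, is where the real work lies.
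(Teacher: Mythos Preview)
Your framework is sound and largely parallel to the paper's, but with a different emphasis. The mirror-image observation is exactly what the paper uses too (it notes ``$u$ and $v$ are mirror images of each other as cyclic words'' to dispose of the symmetric case), and your explicit distinguishing subwords $N=(0,a,a,b)$ in fact coincide, as cyclic words, with the paper's witnesses $10^{a}10^{a}10^{b}1$ for the cases $n=4m+3,4m+4,4m+6$. Only for $n=4m+5$ do you differ: the paper uses the three-$1$ word $10^{2m+1}10^{m+1}1$, while you use a four-$1$ word of the same length; both are valid.

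Where the two treatments genuinely diverge is the lower bound. The paper does not run a global $j=3$ versus $j=4$ gap-vector classification; instead, for the one case it details ($n=4m+5$) it argues directly that any distinguishing subword $w$ of $u$ must contain specific blocks of $0$'s in full (or nearly so), because omitting any of them leaves a word that embeds into an explicit subword of $v$. Concretely: $w$ must contain the second block $0^m$, the fourth block $0^{m+1}$, at least $m$ zeros of the third block, and at least three $1$'s, forcing $|w|\ge 3m+5$. This is short and elementary, and the paper asserts the other three cases are similar.

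Your systematic route via the reversal-invariance criterion $s\in\mathrm{Sub}(u)\Leftrightarrow\widetilde s\in\mathrm{Sub}(u)$ and gap-vector domination is correct in principle and arguably cleaner conceptually, but as written it is only a plan: you identify the $j=4$ alignment analysis as ``where the real work lies'' without performing it, and your $j=3$ paragraph is a description rather than an argument. If you carry the case analysis through, it will work; but note that the paper's block-forcing argument sidesteps the need to enumerate alignments entirely, which is why it can be done in a paragraph per case.
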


\begin{proof} 

We provide a proof for the case $n = 4m+5$; the proofs for the other cases are similar.

Consider the words $u = 0^{m-1}1 0^m 1 0^{m+1}10^{m+1}1$ and $v = 0^{m}1 0^{m-1}1 0^{m+1}10^{m+1}1$. Let $w$ be a distinguishing subword for $u$ and $v$. 
We need to prove that $|w|\geqslant 3m+5$. Without loss of generality we can assume that  $w$ is a subword of the word $u$ and is not a subword of $v$ (the case when $w$ is a subword of $v$ and not a subword of $u$ is symmetric, since $u$ and $v$ are mirror images of each other as cyclic words).
Since the word $0^{m-1}1 0^{m-1} 1 0^{m+1}10^{m+1}1$ is a subword of $v$, $w$ must include the second block $0^m$ of $u$. Indeed, if $w$ does not include the second block $0^m$, then $w$ is a subword of $0^{m-1}1 0^{m-1} 1 0^{m+1}10^{m+1}1$, and hence it is a subword of $v$. Similarly, since the words $0^{m-1}1 0^m 1 0^{m+1}10^{m}1$ and $0^{m-1}1 0^m 1 0^{m-1}10^{m+1}1$ are subwords of $v$, then $w$ must contain the fourth block $0^{m+1}$ of $u$ 
and must contain at least $m$ zeros from the third block $0^{m+1}$ of $u$. Besides that, $w$ must have at least three 1's, because each subword of $u$ which has at most two occurrences of 1 is a subword of $v$. So, $|w| \geqslant (m+(m+1)+m+3)+1 = 3m+5$. To finish the proof, it remains to notice that the word $10^{2m+1}10^{m+1}1$ of length $3m+5$ is a subword of the word $u$ and is not a subword of $v$.

For other cases we provide words which are subwords of $u$ and are not subwords of $v$. For $n = 4m+3$ one can take the subword  $10^{m-1}10^{m-1}10^{m}1$ of length $3m+2$, for $n = 4m+4$ the subword $10^{m-1}10^{m-1}10^{m+1}1$  of length $3m+3$, $n = 4m+6$ the subword $10^{m-1}10^{m-1}10^{m+1}1$ of length $3m+3$. \qed

\end{proof}

\begin{proposition}\label{prop n_0, n_1, l}
Let $u$ be a cyclic word of length $n$ and $S$ be the set of its subwords of length at most $\frac34n+4$. Then we can recover $n_0$, $n_1$ and $l$ from $S$ and $n$.
\end{proposition}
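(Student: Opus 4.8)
The plan is to recover the three quantities $n_0$, $n_1$, $l$ one at a time from the set $S$ of short subwords, using the fact that extreme subwords of specific shapes reveal global counts. Since $n = n_0 + n_1$ is given, it suffices to recover $n_1$ (which then determines $n_0$) and to recover $l$ separately.

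First I would recover $n_1$, the total number of $1$'s. The key observation is that the word $1^{n_1}$ is always a subword of $u$ (just read off all the $1$'s cyclically), while $1^{n_1+1}$ is not, since $u$ only contains $n_1$ ones. So $n_1$ is characterized as the largest integer $k$ with $1^k \in S$, provided this extreme subword is short enough to lie in $S$, i.e. provided $n_1 \leqslant \frac34 n + 4$. The mirror statement gives $n_0$ as the largest $k$ with $0^k \in S$ when $n_0$ is small. The obstacle here is the regime where \emph{both} $n_0$ and $n_1$ exceed the threshold $\frac34 n + 4$; this forces $n_0, n_1 > \frac34 n + 4$, whence $n = n_0 + n_1 > \frac32 n + 8$, a contradiction for $n > 0$. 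Hence at least one of $n_0, n_1$ is at most $\frac34 n + 4 - 1 < \frac34 n + 4$ (one should check the constant carefully, but $\min(n_0,n_1) \leqslant n/2 \leqslant \frac34 n + 4$ suffices), so at least one of them is directly readable as a maximal power, and the other follows from $n_0 + n_1 = n$.

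Next I would recover $l$, the number of blocks of $0$'s (equivalently of $1$'s). The natural invariant to detect is the alternating pattern: the subword $(10)^{l}$, obtained by taking all $l$ blocks of $1$'s collapsed to single $1$'s and all $l$ blocks of $0$'s collapsed to single $0$'s, is a subword of $u$, whereas $(10)^{l+1}$ is not, because any occurrence of an alternating word of $2k$ letters in a cyclic binary word forces at least $k$ maximal blocks of each symbol. Thus $l$ is the largest $k$ for which $(10)^k \in S$. Here the length of the extremal subword is $2l$, so this works as long as $2l \leqslant \frac34 n + 4$. The main obstacle is again the boundary regime: if $l$ is large (close to $n/2$, i.e. the word is close to $(10)^{n/2}$), then $2l$ may exceed the allowed length and $(10)^l$ need not be in $S$. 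In that case one argues that a word with many blocks is close to alternating and has few $0$'s in each block, so $n_0$ and $n_1$ are both near $n/2$; I would then recover $l$ indirectly, for instance by combining the already-recovered values $n_0, n_1$ with the observation that $l$ can be read from a shorter distinguishing pattern, or by noting that in this regime $l = \min(n_0, n_1)$ up to a controlled correction (each of the $l$ blocks of the minority symbol contributes at least one letter).

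I expect the boundary analysis to be the genuine difficulty: the clean characterizations ``$n_1 = \max\{k : 1^k \in S\}$'' and ``$l = \max\{k : (10)^k \in S\}$'' only hold when the relevant extremal subword has length at most $\frac34 n + 4$, and one must handle the complementary regime (where the word is very unbalanced or very close to periodic/alternating) by a separate counting argument that still fits within the length budget. The guiding principle throughout is that $\frac34 n + 4$ comfortably exceeds $n/2$, so whichever of the competing quantities is the \emph{smaller} of a complementary pair is always directly observable, and the larger is then pinned down by the linear relation $n_0 + n_1 = n$ together with the block structure.
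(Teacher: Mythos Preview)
Your recovery of $n_0$ and $n_1$ is correct and matches the paper: $\min(n_0,n_1)\leqslant n/2$, so the shorter of $0^{n_0}$, $1^{n_1}$ lies in $S$ and the other follows from $n_0+n_1=n$. Your detection of $l$ via the maximal $k$ with $(10)^k\in S$ is also the paper's approach when $2l\leqslant\frac34n+4$.

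The gap is in the ``boundary regime'' $l>\frac38n+2$, where you offer only heuristics. The suggestion that $l=\min(n_0,n_1)$ up to a controlled correction does not work: for instance with $n=100$, $n_0=n_1=50$, one can have $l=40$ (forty blocks of $0$'s, ten of them of length $2$, the rest of length $1$; and similarly for the $1$'s), and there is no way to read off the correction $\min(n_0,n_1)-l=10$ from $n_0,n_1$ alone. A different subword, not just $(10)^k$ or $1^k$, is needed here.

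The paper's idea for this regime is concrete and worth noting. Assume $n_1\leqslant n_0$, and take the subword $u'$ of $u$ that keeps \emph{all} $n_0$ zeros but deletes every block of $1$'s of length exactly $1$. If $l_1$ is the number of such singleton $1$-blocks and $k$ is the number of remaining $1$-blocks (each of length $\geqslant 2$), then $l=l_1+k$ and $l_1+2k\leqslant n_1\leqslant n/2$; combined with $l\geqslant\frac38n$ this forces $l_1\geqslant n/4$, so $|u'|=n-l_1\leqslant\frac34n$ and $u'\in S$. One can identify $u'$ in $S$ as the maximal word with $n_0$ zeros and all $1$-blocks of length $\geqslant 2$, and then $l=k+(n-|u'|)$. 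This is the missing ingredient.
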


\begin{proof} Without loss of generality assume that the number of 1's in $u$ is less than or equal to the number of 0's (we can check which letter is more frequent by checking if the word $0^{\lfloor \frac{n}{2}\rfloor+1}$ is a subword). Then $S$ contains the word $1^{n_1}$ and does not contain the word $1^{n_1+1}$. So we can recover $n_1$, and $n_0 = n - n_1$.

It is clear that $l \leqslant  n_1 \leqslant  \frac{n}2$.
If $l \leqslant  \frac38 n,$ then $S$ contains the word $(01)^{l}$. In this case we can find the word $(01)^k$ in $S$ with maximal $k$ and recover $l$. 

If $l > \frac38 n$, then we can consider the subword $u' = 0^{a_1}1^{b_1} \cdots 0^{a_k}1^{b_k}$ of the word $u$, which has $n_0$ occurrences of 0 and each block of 1's containing at least two occurrences of 1. Let $l_{1}$ be the number of blocks 1 which have length 1. Then $l_1 + k = l \geqslant  \frac38 n$ and $l_1 + 2k \leqslant  n_1 \leqslant  \frac{n}{2}$. So, $\frac{3}{4}n \leqslant  2l_1 + 2k \leqslant  l_1 + n_1 \leqslant  l_1 + \frac{n}{2}$. It means that $l_1 \geqslant  \frac{n}{4}$, and hence $|u'| = n - l_1 \leqslant  \frac34n$. Thus, $u'$ is in $S$. Moreover, we can detect it in $S$ as a word of the form $0^{a_1}1^{b_1} \cdots 0^{a_k}1^{b_k}$, with $n_0$ occurrences of $0$, $b_i \geqslant  2$ for all $1 \leqslant  i \leqslant  k$, and with $k$ and $b_i$ being maximal.
So, we can recover $l$ from $u'$: $l = k+l_1 = k + (n-|u'|)$. \qed
\end{proof}

\begin{corollary} \label{cor:parameters}
Let $u$ and $v$ be two cyclic words of length $n$. If $n_{0,u} \neq n_{0,v}$ or $n_{1,u} \neq n_{1,v}$ or $l_{u} \neq l_{v}$, 
then there exists a distinguishing subword of length at most $\frac34n+4$ for $u$ and $v$.
\end{corollary}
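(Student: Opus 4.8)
The plan is to argue by contraposition, reducing everything to Proposition~\ref{prop n_0, n_1, l}. Suppose, contrary to the desired conclusion, that $u$ and $v$ admit no distinguishing subword of length at most $\frac34n+4$. Let $S_u$ and $S_v$ denote the sets of subwords of length at most $\frac34n+4$ of $u$ and $v$, respectively. First I would observe that the absence of a distinguishing subword of this length is precisely the statement $S_u = S_v$: any word lying in exactly one of $S_u$, $S_v$ would be a subword of exactly one of $u$, $v$ of length at most $\frac34n+4$, i.e.\ a distinguishing subword of the forbidden length.

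Next I would apply Proposition~\ref{prop n_0, n_1, l} twice. Since $u$ and $v$ both have length $n$, the proposition guarantees that $n_{0,u}$, $n_{1,u}$, $l_u$ are functions of the pair $(S_u,n)$, and likewise $n_{0,v}$, $n_{1,v}$, $l_v$ are the same functions of $(S_v,n)$. Because $S_u = S_v$ and the two words share the common length $n$, the inputs to these recovery functions coincide, and hence so do the outputs: $n_{0,u}=n_{0,v}$, $n_{1,u}=n_{1,v}$, and $l_u=l_v$. This is exactly the negation of the hypothesis of the corollary, completing the contrapositive.

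There is essentially no analytical obstacle here, since all the work has already been carried out in Proposition~\ref{prop n_0, n_1, l}; the only point requiring mild care is the logical translation in the first step, namely the equivalence between ``no distinguishing subword of length at most $\frac34n+4$ exists'' and ``$S_u=S_v$'', which rests on the definition of a distinguishing subword as a subword of exactly one of the two words. Once that equivalence is stated cleanly, the corollary follows immediately from the fact that the triple $(n_0,n_1,l)$ is determined by the set of short subwords together with $n$.
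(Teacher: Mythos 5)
Your argument is correct and is exactly the intended one: the paper's proof is the single line ``Follows from Proposition~\ref{prop n_0, n_1, l},'' and your contrapositive reading (equal short-subword sets force equal recovered parameters $n_0$, $n_1$, $l$) is precisely what that line compresses. No issues.
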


\proof Follows from Proposition~\ref{prop n_0, n_1, l}. \qed

\bigskip

The following proposition and corollary are key tools for the proof of the main theorem for the case of neither special nor periodic words (we use it in Lemma~\ref{lemma 2}).

\begin{proposition}
Let $u = 01 ^{\alpha_1}01^{\alpha_2}\cdots01^{\alpha_l}$ be a cyclic word which is neither special nor periodic, with $l \geqslant  2$, and $\alpha_i \geqslant  1$ for each $1 \leqslant  i \leqslant  l$. Then there exists a unioccurrent subword $u_{\fix}$ of $u$ which contains $n_{1,u} = n_1$ occurrences of 1 (all 1's from $u$) and at most $\frac{l+2}{2}$ occurrences of 0.
\end{proposition}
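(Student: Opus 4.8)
The plan is to reduce the statement to a combinatorial problem about the cyclic positions of the zeros of $u$. Since $u_{\fix}$ is required to contain all $n_1$ occurrences of $1$, any embedding of $u_{\fix}$ into $u$ sends the $1$'s bijectively and in cyclic order, so it is exactly a turn of $u$ in the sense of Definition~\ref{def:turn}. I would index the $1$'s cyclically and let $A\subseteq\mathbb{Z}/n_1\mathbb{Z}$ be the set of indices $i$ such that a $0$ follows $1_i$; then $|A|=l$ and the cyclic gaps of $A$ are exactly $\alpha_1,\dots,\alpha_l$. Describing $u_{\fix}$ by the subset $B\subseteq A$ of zeros it keeps, the turn that shifts the $1$'s by $j$ yields an occurrence of $u_{\fix}$ if and only if $B+j\subseteq A$. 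Thus $u_{\fix}$ is unioccurrent precisely when $j=0$ is the only shift with $B+j\subseteq A$, while the number of zeros of $u_{\fix}$ is $|B|$. The goal becomes: find $B\subseteq A$ with $|B|\leqslant\frac{l+2}{2}$ and $\{j:B+j\subseteq A\}=\{0\}$.

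Concretely, choosing $B$ amounts to grouping the cyclic sequence $(\alpha_1,\dots,\alpha_l)$ into $|B|$ consecutive groups (deleting a zero merges two neighbouring blocks of $1$'s), and a shift $j$ with $B+j\subseteq A$ is nothing but a second cyclic parse of $(\alpha_1,\dots,\alpha_l)$ into consecutive runs producing the same ordered sequence of group-sums. My construction is to keep every other zero: this yields $\lceil l/2\rceil$ groups, all of size $2$ except, when $l$ is odd, one group of size $1$, and the starting point of the alternation is free. When $l$ is even I would spend the single unit of slack in $\frac{l+2}{2}$ to keep one extra zero, again creating one group of size $1$. In both cases $|B|$ stays within the bound, and this short group plays the role of an anchor, which I would place at a block whose length is made distinctive by non-speciality, so as to force where a competing parse must begin.

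\textbf{The main obstacle} is the uniqueness verification, i.e. excluding every alternative parse. First, non-periodicity is genuinely needed: if $u$ is periodic then $A$ is invariant under a nontrivial shift and even $B=A$ fails. The delicate phenomenon is a local ambiguity — a group of sum $2m$, where $m=\min_i\alpha_i$, can be realized either as a single block $1^{2m}$ or as two merged blocks $1^m1^m$ — and this is exactly the degeneracy built into the special words of the second and third types, whose blocks are all equal to $m$ except for one or two equal to $2m$. The heart of the argument is to use non-speciality to produce an unambiguous anchor: a non-special word must have either a block of length outside $\{m,2m\}$, or, if all lengths lie in $\{m,2m\}$, at least three blocks of length $2m$ whose (aperiodic) arrangement pins the rotation. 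Anchoring the grouping at such a block and then checking that the group-sum sequence admits no other cyclic parse is the technical core; keeping this anchoring compatible with the $\frac{l+2}{2}$ budget is exactly what forces the parity split described above.
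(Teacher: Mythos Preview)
Your reformulation is correct and clean: with $A\subseteq\mathbb{Z}/n_1\mathbb{Z}$ the set of zero-positions, the task is to find $B\subseteq A$ with $|B|\leqslant\frac{l+2}{2}$ such that $B+j\subseteq A$ forces $j=0$. The gap is in the construction. The ``every other zero plus one anchor'' scheme does not always produce a unioccurrent subword, even with the anchor placed optimally. Take $l=10$ and $(\alpha_1,\dots,\alpha_{10})=(1,1,1,2,1,1,1,2,1,2)$; this is aperiodic (no cyclic period $2$ or $5$) and not special (three blocks $1^2$). Here $n_1=13$ and $A=\{0,1,2,3,5,6,7,8,10,11\}$. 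The odd-indexed zeros give $B_0=\{0,2,5,7,10\}$, and already $B_0+1=\{1,3,6,8,11\}\subseteq A$. Adjoining any one of the five remaining elements of $A$ still leaves a nontrivial shift: $B_0\cup\{1\}+1\subseteq A$, $B_0\cup\{3\}+3=\{0,3,5,6,8,10\}\subseteq A$, $B_0\cup\{6\}+1\subseteq A$, $B_0\cup\{8\}+3\subseteq A$, $B_0\cup\{11\}+3\subseteq A$; the even-indexed base behaves symmetrically. So no choice within your scheme attains both $|B|\leqslant 6$ and unioccurrence. The case you yourself flag as delicate --- all $\alpha_i\in\{m,2m\}$ with at least three $2m$-blocks --- is exactly where the alternating skeleton collapses, and you have not supplied the promised argument that the ``aperiodic arrangement pins the rotation'' in a way compatible with that skeleton.

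The paper's construction is essentially orthogonal to yours: rather than spreading the kept zeros evenly, it \emph{clusters} them. With $k=\min_i\alpha_i$ and $s$ the length of the longest run of minimal blocks, it starts from $u'=(01^k)^s01^{n_1-sk}$, keeping $s+1$ consecutive zeros. Maximality of $s$ forces distinct occurrences of $u'$ to use disjoint zeros, so there are at most $\frac{l}{s+1}$ of them, and one extra zero per non-identity occurrence suffices, giving at most $s+\frac{l}{s+1}\leqslant\frac{l+2}{2}$ zeros when $s\leqslant\frac{l}{2}$. When $s>\frac{l}{2}$ a separate explicit construction handles the long run of minimal blocks. In the example above this produces $B=\{0,1,2,3,7\}$ with $|B|=5$, which one checks directly is unioccurrent.
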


\begin{proof}

Let $k$ be the minimal length of blocks of 1's: $k=\min_{1\leqslant  i \leqslant  l} \alpha_i$, and let $s$ denote the maximal integer such that $(01^k)^s0$ is a factor of $u$. So,  $u = (01^k)^s01^{\alpha_1}\cdots 01^{\alpha_r}$, where $s + r = l$, $\alpha_i \geqslant  k$ for each $i$. Since $s$ is chosen maximal, we have in particular that $\alpha_1, \alpha_r > k$. Since $u$ is not special, we have $s < l$. There are two cases to consider:

\medskip

\noindent {\bf Case 1.} $s \leqslant  \frac{l}2$. 

\medskip

Consider the subword $u' = (01^k)^s01^{n_1-ks}$ of $u$. 

Since $s$ is chosen maximal, for any two occurrences of subwords equal to $u'$, there is no occurrence of $0$ that is included in both subwords. The words $u$ and $u'$ have $l$ and $s+1$ occurrences of 0, respectively. So, there are at most  $\frac {l}{s+1}$ ways to take an occurrence of a subword of $u$ which is equal to $u'$.

Consider an occurrence of a subword $w$ in $u$ which is equal to $u'$. Since $u$ is not periodic, for every turn of $u$ which translates $u'$ to $w$ we can find a zero in $u$ which is translated to an empty place. Consider a word $w'$ obtained by adding all these 0's to $w$. Since there are at most  $\frac {l}{s+1}$ ways to take an occurrence of a subword of $u$ which is equal to $u'$, $w'$ has $n_1$ occurrences of 1 and at most $s+1 +(\frac {l}{s+1}-1) = s+\frac{l}{s+1}$ occurrences of 0, and $w'$ is unioccurrent. For $1 \leqslant  s \leqslant  \frac{l-2}{2}$ we have $w'_0 \leqslant  s+\frac{l}{s+1} \leqslant  \frac{l+2}2$. For $\frac{l-2}{2} < s \leqslant  \frac{l}{2}$, the subword $u'$ is unioccurrent and $u'_0 = s+1 \leqslant  \frac{l+2}2$. In both cases we find a unioccurrent subword with $n_1$ occurrences of 1 and at most $\frac{l+2}2$ occurrences of 0.

\medskip

\noindent {\bf Case 2.} $s > \frac{l}2$.

\medskip

We spit this case into three subcases as follows. Case {\bf 2.1} corresponds to $k \nmid \alpha_i$ for some $i$. If $k \mid \alpha_i$, we distinguish between two cases: either $u = (01^k)^s01^{k\beta_1}01^{k\beta_2}\cdots 01^{k\beta_r}$ and $\beta_1 \geqslant  3$ (or $\beta_r \geqslant  3)$ (Case {\bf{2.2}}), or $u = (01^k)^s01^{k\beta_1}01^{k\beta_2}\cdots 01^{k\beta_r}$, $\beta_1 = \beta_r = 2$ (Case {\bf 2.3}); in the latter case we have $\beta_i \geqslant  2$ for some $1 < i < r$ since $u$ is not special. We now consider these three subcases.

\medskip

\noindent {\bf Case 2.1.} $k \nmid \alpha_i$.

\medskip

Let $1^t$ be the smallest block of 1's such that $k \nmid t$. Then $t$ cannot be equal to the sum of two or more integers from $\{k, \alpha_1, \ldots, \alpha_r\}$. Let $s'$ be the maximal number of consecutive blocks $1^t$. Since $s > \frac{l}2$, we have $s' < \frac{l}2$. Consider the word $u' = (01^t)^{s'}01^{n_1-ts'}$. There are at most $\frac{l}{s'+1}$ ways to take an occurrence of a subword of $u$ which is equal to $u'$. So, with an argument similar to Case {\bf 1} we prove that we can find a unioccurrent subword with $n_1$ occurrences of 1 and at most $\frac{l+2}2$ occurrences of 0.

So, it remains to consider subcases with  $\alpha_i = k\beta_i$.

\medskip

\noindent {\bf Case 2.2.} $u = (01^k)^s01^{k\beta_1}01^{k\beta_2}\cdots 01^{k\beta_r}$ and $\beta_1 \geqslant  3$ (or $\beta_r \geqslant  3)$ 

\medskip

Without loss of generality we assume that $\beta_1\geqslant  3$. Recall that in Case {\bf 2} we have $s>r$, since $s>l/2$. Consider the word $$u'= \begin{cases}(01^k)^r(1^k01^k)^{\frac{s-r}{2}}01^{k\beta_1}1^{k\beta_2}\cdots 1^{k\beta_r}, & \text{if } 2 \mid (s-r), \\  (01^k)^r(1^k01^k)^{\frac{s-r-1}{2}}1^k01^{ k\beta_1}1^{k\beta_2}\cdots 1^{k\beta_r}, & \text{if } 2 \nmid (s-r).\end{cases}$$ 
In both cases $u'$ has $n_1$ occurrences of 1 and at most $r+\frac{s-r}2+1 = \frac{s+r+2}{2} =  \frac{l+2}2$ occurrences of 0. 
For example, if $u = \underline{0}1\underline{0}1\underline{0}1\,01\underline{0}1\,01\underline{0}1\,\underline{0} \, 1^30101^2$ ($ r = 3$, $s = 7$, $k = 1$), then $u'$ contains all $1$'s and underlined $0$'s: $u' = \underline{0}1\underline{0}1\underline{0}1\,1\underline{0}1\,1\underline{0}1\,\underline{0} \, 1^311^2$.
As another example, take $u = \underline{0}1\underline{0}1\underline{0}1\,01\underline{0}1\,01\underline{0}1\,01\underline{0} \, 1^30101^2$ ($ r = 3$, $s = 8$, $k = 1$); then $u'$ contains all $1$'s and underlined $0$'s: $u' = \underline{0}1\underline{0}1\underline{0}1\,1\underline{0}1\,1\underline{0}1\,1\underline{0} \, 1^311^2$.

Now we prove that $u'$ is unioccurrent in the case $2 \mid (s-r)$ (the case $2 \nmid (s-r)$ is similar). Assume the converse: suppose that there exists 
another occurrence $u''$ of the subword $u'$, i.e. $u''=u'$ and a turn $\sigma$ of $u$ which translates $u'$ to $u''$. We now index 0's in $u$: $u = 0_11^k0_21^k\cdots 0_s1^k0_{s+1}1^{k\beta_1}0_{s+2}1^{k\beta_2}\cdots 0_{s+r}1^{k\beta_r}$. If $\sigma(0_1) = 0_2$, then $\sigma(0_{s+1}) = \varnothing$, a contradiction. If $\sigma (0_1) = 0_3$ or $\sigma(0_1) = 0_4$, then $\sigma (0_{s}) = \varnothing$, a contradiction. If $\sigma (0_1) = 0_5$ or $\sigma(0_1) = 0_6$, then $\sigma (0_{s-2}) = \varnothing$, a contradiction. Continuing this line of reasoning, we get that  $\sigma(0_1) \neq 0_2,0_3, \ldots, 0_{s+1}$. Similarly, $\sigma(0_1) \neq 0_{s+2}, 0_{s+3}, \ldots, 0_{s+r}$, since in this case there  exists $1 \leqslant  i \leqslant  r-1$, such that $\sigma (0_i) = 0_{s+r}$ and $\sigma(0_{i+1}) = 0_1$. Recall that the number of 1's between $0_i$ and $0_{i+1}$ is equal to $k$, and $\beta_r > k$ is the number of 1's between $0_{s+r}$ and $0_1$. We reach a contradiction, since both $u'$ and $u''$ have $n_1$ occurrences of 1 and $0_1, 0_2, \ldots,0_r$. So, $\sigma$ is the identical turn, and hence $u'$ is a unioccurrent subword. 

\medskip
\noindent {\bf Case 2.3.} $u = (01^k)^s01^{k\beta_1}01^{k\beta_2}\cdots 01^{k\beta_r}$, $\beta_1 = \beta_r = 2$ and $\beta_i \geqslant  2$ for some $1 < i < r$.

\medskip

So, $u = (01^k)^s01^{2k}01^{k\beta_2}\cdots 01^{2k}$. We can rewrite $u$ in the form $$u = (01^k)^s0 1^{2k}(01^k)^{b-1}01^{k\beta_{b+1}}\cdots 01^{k\beta_j}(01^k)^{a-1}01^{k\beta_j+a+1}\cdots01^{2k},$$ where $a,b \geqslant  1$, $\beta_{b+1}, \beta_j, \beta_{j+a+1} > 1$, and $a$ is chosen maximal. So, there are $b-1$ consecutive blocks $1^{k}$ after the first block $1^{2k}$, and $a-1$  is the maximal number of consecutive blocks $1^{k}$ in the remaining part. Without loss of generality we can assume that $a \geqslant  b$ (indeed,  otherwise we can take a mirror image of $u$: $(01^k)^s01 ^{k\beta_r}01^{k\beta_{r-1}}\cdots 01^{k\beta_1}$). We can also suppose that the parts $(01^k)^{b-1}$ and $(01^k)^{a-1}$ do not coincide, since there exists $\beta_i \geqslant  2$. Let $t$ be an integer such that $s = (a+1) + t(2b+2) + x$, where $x < 2b+2$. 

Consider the following subword of $u$:
$$u' = \begin{cases} (01^k)^{a+1}((1^k)^{b+1}(01^k)^{b+1})^{t}(01^k)^{x}01^{k\beta_1}1^{k\beta_2}\cdots1^{k\beta_r}, & \text{ if } x \leqslant  b, \\ (01^k)^{a+1}((1^k)^{b+1}(01^k)^{b+1})^{t}(1^k)^{x-b}(01^k)^{b}01^{k\beta_1}1^{k\beta_2}\cdots1^{k\beta_r}, & \text{ if } b < x < 2b+2.\end{cases}$$

For example, if $u = 010101010101010 \, 1^201^20101^2$ ($ r = 4$, $a = 2$, $b = 1$, $k = 1$, $s = 7 = 3 + 1\cdot 4+0$), then $u'$ contains all 1's from $u$ and underlined 0's: $u = \underline{0}1\underline{0}1\underline{0}1\,0101\underline{0}1\underline{0}1\,\underline{0} \, 1^201^20101^2$. 

The word $u'$ has $n_1$ occurrences of 1 and $a+1 + t(b+1) + \min(b,x) + 1 \leqslant  \frac{l+2}2$ occurrences of 0, since $$l \geqslant  s+1+a+b =  a+1 + t(2b+2) + x + 1 + (a + b) =$$ $$= 2(a+1) + 2t(b+1) + (b+x) \geqslant  2(a+1 + t(b+1) + \min(b,x) + 1) - 2.$$

We now prove that $u'$ is unioccurrent in the case $2 \mid (s-r)$ (in the case $2 \nmid (s-r)$ the proof is similar). Assume the converse, i.e. suppose that there exists a subword $u''$ such that $u'' = u'$ and a turn of $u$ which translates $u'$ to $u''$. We index 0's and 1's in $u$ as follows: $$u = 0_11^k0_21^k\cdots 0_s1^k0_{s+1}1^{2k}0_{s+2}1^{k\beta_2}\cdots 0_{s+r}1^{2k}.$$

If $\sigma(0_1) = 0_{s+1}$, then $\sigma(0_{2}) = \varnothing$, a contradiction. 

If $\sigma(0_1) \in  \{0_{s}, 0_{s-1}, \ldots, 0_{s-(a-2)}\}$, then for $\sigma(0_1) = 0_{s-i}$ we have $\sigma(0_{i+3}) = \varnothing$ and $0_{i+3} \in u'$. A contradiction.

If $\sigma(0_1) = 0_{s-(a-2)-t'(2b+2)-x'}$ and $s-(a-2)-t'(2b+2)-x' > 0$ for $1 \leqslant  x' \leqslant  b+1$, then $\sigma(0_{a+1+t'(2b+2)+(b+1)+x'} )= \varnothing$ and $0_{a+1+t'(2b+2)+(b+1)+x'} \in u'$. A contradiction.

If $\sigma(0_1) = 0_{s-(a-2)-t'(2b+2)-x'}$ and $s-(a-2)-t'(2b+2)-x'$ for $b+2 \leqslant  x' \leqslant  2(b+1)$, then $\sigma(0_{a+1+t'(2b+2)+x'}) = \varnothing$ and $0_{a+1+t'(2b+2)+x'} \in u'$. A contradiction.

We also have $\sigma(0_1) \not\in \{0_{s+2},0_{s+3}, \ldots, 0_{s+r}\}$, since otherwise the turn of the subword $0_1 1^k 0_2 1^k \cdots 0_{a}1^k0_{a+1}$ is translated to another subword of $\sigma(0_1 1^k 0_2 1^k \cdots 0_{a}1^k0_{a+1}) = 1^{2k}0_{s+2}1^{k\beta_2}\cdots 0_{s+r}1^{2k}$, a contradiction with the maximality of $a$.

So, $u'$ is unioccurrent. \qed

\end{proof}

\begin{corollary} \label{main corollary}
Let $u = 0^x1 ^{\alpha_1}0^x1^{\alpha_2}\cdots0^x1^{\alpha_l}$ be a cyclic word which is neither special nor periodic, with $l \geqslant  2$, and $\alpha_i \geqslant  1$ for each $1 \leqslant  i \leqslant  l$. Then there exists a unioccurrent subword $u_{\fix}$ of $u$ which contains $n_1$ occurrences of 1 (all 1's from $u$) and at most $\frac{l+2}{2}$ blocks $0^x$ (and no other zeros).  
\end{corollary}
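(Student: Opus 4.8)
The plan is to reduce the statement to the preceding proposition by collapsing every block of zeros to a single zero. Concretely, first I would pass from $u = 0^x1^{\alpha_1}0^x1^{\alpha_2}\cdots 0^x1^{\alpha_l}$ to the companion word $\bar u = 01^{\alpha_1}01^{\alpha_2}\cdots 01^{\alpha_l}$ obtained by replacing each block $0^x$ with a single $0$. The words $u$ and $\bar u$ have the same number of ones $n_1$, the same block count $l$, and the same sequence $(\alpha_1,\dots,\alpha_l)$ of lengths of blocks of ones; moreover all their blocks of zeros have a common length ($x$ for $u$, $1$ for $\bar u$). Since the definitions of \emph{special} and \emph{periodic} depend only on the sequence $(\alpha_1,\dots,\alpha_l)$ together with the fact that all blocks of zeros are equal --- conditions shared by $u$ and $\bar u$ --- the word $\bar u$ is neither special nor periodic. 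Applying the preceding proposition to $\bar u$ then yields a unioccurrent subword $\bar u_{\fix}$ of $\bar u$ that contains all $n_1$ ones and at most $\frac{l+2}{2}$ zeros.

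Next I would lift $\bar u_{\fix}$ back to $u$. Each zero kept in $\bar u_{\fix}$ comes from one of the $l$ blocks of $\bar u$; let $T$ be the set of those blocks, so $|T|\leqslant \frac{l+2}{2}$. I define $u_{\fix}$ to be the subword of $u$ consisting of all $n_1$ ones of $u$ together with the full block $0^x$ at exactly the positions in $T$ (and no other zeros). By construction $u_{\fix}$ is a subword of $u$ containing all the ones and at most $\frac{l+2}{2}$ blocks $0^x$, which is the required count.

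The hard part will be showing that $u_{\fix}$ is unioccurrent, and this is where the hypothesis $x = x_u$ (all blocks of zeros equal) is essential. Since $u_{\fix}$ contains every one of $u$, any occurrence must match the ones by a turn, say the turn shifting indices by $j$. Under this turn a kept block $0^x$, sitting between two consecutive ones, must be placed into the corresponding gap of $u$, which therefore has to accommodate $x$ zeros; but every gap of $u$ has length $0$ or $x$, so this is possible precisely when the target gap is an actual block of zeros, and in that case the $x$ indexed zeros are matched in the unique order-preserving way. Hence occurrences of $u_{\fix}$ in $u$ correspond bijectively to the turns $j$ that send $T$ into the set $B\subseteq\{1,\dots,n_1\}$ of block positions (read off the cyclic order of the ones). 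The very same condition --- a turn sending $T$ into $B$ --- governs the occurrences of $\bar u_{\fix}$ in $\bar u$, because there each block gap holds exactly one zero, so the kept single zero fits exactly when the target gap is a block. Therefore $u_{\fix}$ and $\bar u_{\fix}$ have identical sets of admissible turns and hence the same number of occurrences; since $\bar u_{\fix}$ is unioccurrent, so is $u_{\fix}$, which completes the argument. The only routine checks left are the preservation of the \emph{special}/\emph{periodic} dichotomy under the collapse, and the verification that the order-preserving matching inside a full block is genuinely unique.
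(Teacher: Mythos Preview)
Your proposal is correct and follows exactly the paper's approach: collapse each block $0^x$ to a single $0$, apply the preceding proposition to the resulting word, then expand each retained $0$ back to a full block $0^x$. The paper's proof is a terse two-sentence version of this; your additional justification of unioccurrence (via the bijection between admissible turns for $u_{\fix}$ in $u$ and for $\bar u_{\fix}$ in $\bar u$, using that every gap in $u$ has length $0$ or $x$) simply fills in the step the paper leaves implicit.
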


\begin{proof}
The previous proposition implies that the word $01^{\alpha_1}01^{\alpha_2}\cdots01^{\alpha_l}$ has a unioccurrent subword $01^{\beta_1}01^{\beta_2}\ldots01^{\beta_r}$, which has $n_1$ occurrences of 1 and at most $\frac{l+2}{2}$ occurrences of 0. Then the word $0^x1^{\beta_1}0^x1^{\beta_2}\cdots0^x1^{\beta_r}$ is a unioccurrent subword of $u$. \qed
\end{proof}

\bigskip

In the proof of the main result, for finding a distinguishing subword for the words $u$ and $v$, we often use a technique described in the following proposition:

\begin{proposition}  \label{prop popular idea}
    Let $u \neq v$ be two cyclic words such that $n_{0,u} = n_{0,v}$, $n_{1,u} = n_{1,v} = n_1$, ${x_u} = {x_v} = x$. Suppose that $u'$ is either a unioccurrent subword of $u$ or a subword of $u$ which contains $n_1$ occurrences of 1 and such that the only turn translating $u'$ to an occurrence of an equal subword is the identity map. Suppose also that $|u'|+y+1\leqslant  \frac34n+4$, where $y$ is the length of the longest block in $u$ which is shorter than $0^x$. Then there exists a distinguishing subword of length at most $\frac34n+4$ for $u$ and $v$.
\end{proposition}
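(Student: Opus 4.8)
### Proof plan for Proposition~\ref{prop popular idea}

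<br>

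The plan is to use $u'$ as the seed of a distinguishing subword by exploiting the hypothesis $x_u = x_v = x$ together with the fact that $u$ and $v$ are genuinely different cyclic words. The key point is the following. If $u'$ is a subword of $u$ that either is unioccurrent or has the identity as its unique self-translating turn, then $u'$ ``rigidly'' pins down how $u$ sits relative to any overlay. So first I would suppose, for contradiction, that $u'$ is also a subword of $v$ (otherwise $u'$ itself, which has length $\le |u'|+y+1$, is already distinguishing and we are done). Since $u'$ contains all $n_1$ occurrences of $1$ from $u$ and $n_{1,u}=n_{1,v}=n_1$, any occurrence of $u'$ in $v$ uses all the $1$'s of $v$ as well, so I can fix a $1$-overlay of $u$ on $v$ in which the occurrence of $u'$ in $u$ and its occurrence in $v$ coincide (using the mechanism described after Definition~\ref{def:2}).

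<br>

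Next I would compare $u$ and $v$ block-by-block through this overlay. Because $u$ and $v$ are distinct as cyclic words but have the same letter counts and the same maximal block length $x$, and because $u'$ already forces the alignment, there must be some block of $0$'s where $u$ and $v$ disagree under the overlay: concretely, some place $i$ where $\alpha_{u,i}\ne\alpha_{v,i}$. The goal is to augment $u'$ by a single short block so that the resulting word witnesses this disagreement. The natural candidate is to add a block $0^x$ to $u'$: since $x=x_u$, there is a big block in $u$ available to add, and the rigidity of $u'$ (unioccurrence, or identity-only self-turn) guarantees that whatever block of $v$ sits in the corresponding place has length at most $y$, because it cannot be a full $0^x$ (if it were, the overlay would extend to an equal subword of length $|u'|+x$, contradicting the rigidity that forces the turn to be the identity). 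Hence adding $0^x$ to $u'$ at that place produces a subword of $u$ that is \emph{not} a subword of $v$.

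<br>

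To control the length I would argue in the other direction as well: either we append $0^x$ to $u'$ inside $u$ to obtain a distinguishing word of length $|u'|+x$, or—if appending $0^x$ still yields a subword of $v$—then the corresponding block in $v$ must have length exactly $x$ there too, and I instead look at the place where $u$ carries a short block; there I append that short block (of length at most $y$) to $u'$ to separate the two words, giving length at most $|u'|+y+1$. The bookkeeping should be arranged so that the length never exceeds $|u'|+y+1$, which by hypothesis is at most $\tfrac34 n+4$. The main obstacle I anticipate is the careful case analysis needed to guarantee that the augmenting block genuinely breaks every occurrence in $v$: the rigidity hypothesis on $u'$ is exactly what rules out spurious alternate occurrences after augmentation, so the crux is to show that the unique-turn property of $u'$ propagates to the augmented word, i.e. that no nontrivial turn of $v$ can reproduce both $u'$ and the added block simultaneously.
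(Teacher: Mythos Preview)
Your outline has the right skeleton (if $u'$ is not a subword of $v$ we are done; otherwise fix a $1$-overlay making the two copies of $u'$ coincide, then locate a block where $u$ and $v$ differ and augment), but the direction in which you build the distinguishing word is backwards, and this is not a cosmetic issue.

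The rigidity hypothesis on $u'$ is a statement about $u$: every occurrence of $u'$ in $u$ uses the same alignment of $1$'s. It says nothing about how many, or which, occurrences $u'$ has in $v$. Consequently, if you augment $u'$ inside $u$ and try to argue that the result is not a subword of $v$, rigidity gives you no leverage: $v$ may admit a different $1$-overlay under which your augmented word still embeds. This is exactly the ``crux'' you flag at the end, and it is not provable in general---the unique-turn property simply does not propagate from $u$ to $v$. In addition, your first branch (append a full block $0^x$) produces a word of length $|u'|+x$, which can exceed $\frac34 n+4$; the hypothesis only controls $|u'|+y+1$ with $y<x$.

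The fix is to swap the roles. Since $n_{0,u}=n_{0,v}$ and $u\neq v$, under the fixed overlay there is a position where $u$'s block $0^s$ is strictly shorter than $v$'s; because $x_v=x$, that forces $s<x$, hence $s\leqslant y$. Now add $0^{s+1}$ at that position to the copy of $u'$ sitting in $v$, obtaining a subword $v''$ of $v$ with $|v''|\leqslant |u'|+y+1\leqslant \frac34 n+4$. This $v''$ cannot be a subword of $u$: any occurrence of $v''$ in $u$ would in particular give an occurrence of $u'$ in $u$, which by rigidity must use the original alignment of $1$'s, and at the chosen position $u$ has only $s$ zeros, not $s+1$. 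Here rigidity is used where it actually applies---to pin occurrences in $u$---and the length bound falls out immediately. This is precisely the paper's argument.
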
 

\begin{remark} 
     Note that in the second case, when $u'$ is a subword of $u$ which contains $n_1$ occurrences of 1 and such that the only turn translating $u'$ to an equal subword is the identity map, $u'$ does not have to be unioccurent, since we can choose different $0$'s from blocks of length greater than $1$.
\end{remark}

\begin{proof}
    Note that either $v$ does not have a subword which is equal to $u'$ or there is a subword $v' = u'$ of $v$. In the first case $u'$ is a subword of the word $u$ and is not a subword of $v$ and $|u'| \leqslant  \frac34n+3$. In the second case we can consider a 1-overlay of $v$ on $u$ such that $v'$ and $u'$ coincide. Since $v \neq u$, there is a block $0^s$ in $u$ ($s$ is the length of the block, and we set $s=0$ if the block is empty) which is smaller than the corresponding block of 0's in $v$ (we let $y$ denote the length of this block, so that $s \leqslant  y$). We add the block $0^{s+1}$ to $v'$ to the corresponding place. We get a subword $v''$  of the word $v$ which is not a subword of $u$, and $|v''| \leqslant  |v'| + s+1  \leqslant  |v'|+y+1 \leqslant  \frac34n+4$. \qed
\end{proof}

\section{Proof of Theorem~\ref{th:main}.}

In this section, we provide a proof of Theorem~\ref{th:main}. In Subsection~\ref{subsec:proof_structure} we give a general structure of the proof, splitting it into several lemmas, and introduce some auxiliary notation used throughout the proof.  In Subsection~\ref{subsec:proof_lemmas}, we state and prove lemmas constituting the proof.

\subsection{Notation and general structure of the proof} \label{subsec:proof_structure}

In this subsection, we fix some notation and give a general description of the proof of the main result of this paper, Theorem~\ref{th:main}. Namely, let $u \neq v$ be two cyclic words of length $n$. We will prove that there is a word $w$ of length at most $\frac34n+4$ such that $w$ is a distinguishing subword for the words $u$ and $v$. By Corollary~\ref{cor:parameters}, it remains to prove the theorem when $n_{0,u} = n_{0,v} = n_0$, $n_{1,u} = n_{1,v} = n_1$ and $l_u = l_v = l$. Without loss of generality we can assume that $n_0 \geqslant  n_1$ (or, equivalently, $n_1 \leqslant  \frac{n}2$).

Recall that $0^{x_u}$ and $0^{x_v}$ are the longest blocks of 0's in $u$ and $v$, respectively, and that $y = y_u$ is the length of the longest block of 0's in $u$ which is shorter than $0^{x_u}$ (if all the blocks have the same length, we set $y = 0$). We let $a$ and $b$ denote the numbers of blocks $0^{x_u}$ and $0^{x_v}$ in $u$ and $v$, respectively. We  divide the proof of our theorem into five lemmas according to different cases as follows:

\begin{itemize}

\item $x_u \neq x_v$ (Lemma~\ref{lemma 1}),

\item $u_{\longg}$ is a neither special nor periodic word with $a \geqslant  3$, or, analogously, $v_{\longg}$ is a neither special nor periodic word with $b \geqslant  3$
(Lemma~\ref{lemma 2}).

\item $u_{\longg}$ is periodic and not special, $a \geqslant  3$, or, analogously, $v_{\longg}$ is periodic and not special, $b \geqslant  3$ (Lemma~\ref{lemma 3}).

\item one of the words $u_{\longg}, v_{\longg}$ is not a special word with at most two blocks $0^x$, and the other word is either special or contains at most two blocks $0^x$ (Lemma~\ref{lemma 4}).

\item
 $u_{\longg}$ and $v_{\longg}$ are special words (Lemma~\ref{lemma 5}).
   
\end{itemize}

It is not hard to see that all the cases are covered. In the first lemma we prove the theorem for words which have different sizes of big blocks. In the second and third lemmas we prove the theorem in the case when one of the words is not special and has at least three big blocks. In the fourth and fifth lemmas we prove the theorem in the case when either both words are special or one of the word is not special and has one or two big blocks. 

\subsection{Lemmas constituting the proof} \label{subsec:proof_lemmas}

In this subsection, we prove five lemmas corresponding to different cases of the proof of Theorem~\ref{th:main}.

\begin{lemma} \label{lemma 1}
    Let $x_u \neq x_v$. Then $u$ and $v$ have a distinguishing subword of length at most $\frac34n+4$.
\end{lemma}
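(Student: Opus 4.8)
```latex
The plan is to exploit the fact that $x_u \neq x_v$, so that the words have different longest-block lengths. Without loss of generality assume $x_u > x_v$. The key observation is that the single block $0^{x_u}$ by itself gives immediate information: the word $0^{x_u}$ is a subword of $u$ but, since every block of $0$'s in $v$ has length at most $x_v < x_u$, it is \emph{not} a subword of $v$. If $x_u$ happened to be large, namely $x_u \leqslant \frac34 n + 4$, we would already be done, but of course $x_u$ can be close to $n$, so this crude subword is generally far too long. Hence the real task is to witness the difference $x_u > x_v$ using a \emph{short} subword.

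First I would set $x = x_v$ (the smaller value) and consider the subword $0^{x+1}$. By definition of $x_v$, this is not a subword of $v$. On the other hand, $u$ contains a block $0^{x_u}$ with $x_u \geqslant x+1$, so $0^{x+1}$ \emph{is} a subword of $u$. This already produces a distinguishing subword, and its length is $x_v + 1$. So the entire question reduces to bounding $x_v + 1$ by $\frac34 n + 4$. If $x_v + 1 \leqslant \frac34 n + 4$, i.e. $x_v \leqslant \frac34 n + 3$, we are finished with the trivial witness $0^{x_v+1}$.

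The main obstacle is therefore the regime where $x_v$ is very large, close to $n$ — say $x_v > \frac34 n + 3$ — which forces $x_u > x_v$ to be even larger. In this situation both words consist of one enormous block of $0$'s together with a small number of remaining letters. Here I would switch to exploiting the $1$'s and the \emph{short} complementary structure rather than the giant block. Concretely, since $n_{0,u} = n_{0,v} = n_0$ and $n_{1,u} = n_{1,v} = n_1$ are fixed (by Corollary~\ref{cor:parameters} we may assume this), a block $0^{x_v}$ consuming more than $\frac34 n$ letters forces $l$ to be small and the total number of small blocks to be tiny; one can then describe $u$ and $v$ almost completely by a short word. The idea is to take a subword of $u$ that records all the $1$'s together with the large block $0^{x_u}$ truncated to length $x_v + 1$: this still exceeds the capacity of any block in $v$ while costing only $x_v + 1 + n_1$ letters in the worst naive count, and one then argues that when $x_v$ is large, $n_1$ and the number of extra $0$'s outside the big block are correspondingly small, so the total stays below $\frac34 n + 4$. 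I expect the delicate part to be the precise bookkeeping that trades off the size of the big block against the number of remaining letters, verifying that in every range of $x_v$ either the trivial witness $0^{x_v+1}$ or the augmented witness fits within the bound $\frac34 n + 4$.
```
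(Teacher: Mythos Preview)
There is a fundamental error at the very first step. In this paper ``subword'' means \emph{scattered} subword (subsequence), not factor. Consequently the cyclic word $0^{x_v+1}$ \emph{is} a subword of $v$ whenever $n_{0,v}\geqslant x_v+1$, which holds as soon as $v$ has more than one block of $0$'s (i.e.\ $l\geqslant 2$): simply delete all the $1$'s and some of the $0$'s. Your claim ``by definition of $x_v$, $0^{x_v+1}$ is not a subword of $v$'' confuses block length with total letter count. The same confusion appears in your opening sentence about $0^{x_u}$.

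The correct idea --- which you partially rediscover in your ``large $x_v$'' regime --- is to anchor the block of $0$'s by including \emph{all} the $1$'s. The word $1^{n_1}0^{x_v+1}$ genuinely is a subword of $u$ and not of $v$: keeping all $n_1$ ones forces the $x_v+1$ kept zeros to lie between two cyclically adjacent $1$'s, i.e.\ inside a single block, which is impossible in $v$. This is what the paper uses throughout, and it is needed in every case, not just when $x_v$ is large. Once you use $1^{n_1}0^{x_v+1}$, the length is $n_1+x_v+1$, and the problem is that this can exceed $\frac34n+4$ even for moderate $x_v$ (take $n_1$ close to $n/2$ and $x_v$ close to $n/4$). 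The paper handles this by a case split on how many maximal blocks each word has: if $a\geqslant 2$ then $x_u\leqslant n_0/2$ and $1^{n_1}0^{x_u}$ (or $1^{n_1}0^{x_u+1}$) works directly; the hard case $a=b=1$ requires introducing two further candidate distinguishing subwords (one built from $(01)^{l-1}$ and one exploiting a mismatched block found via a $1$-overlay) and showing that the \emph{sum} of the three lengths is at most $2n+7$, so at least one of them is short enough. Your sketch does not contain these ingredients.
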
 

\noindent\textit{Proof.} We consider two cases: 

\medskip

\noindent {\bf Case 1.} $a \geqslant  2$ or $b \geqslant  2$.

\medskip

Without loss of generality assume that $a \geqslant  2$. If $x_u > x_v$, then consider the word $u_1 = 1^{n_1}0^{x_u}$. This word is the subword of the word $u$ and is not a subword of $u$. Since $a \geqslant  2$,
we have $x_u \leqslant  \frac{ax_u}2 \leqslant  \frac{n_0}2$. So, $|u_1| = n_1 + x_u \leqslant  n_1+\frac{n_0}{2} = \frac{n}2 + \frac{n_1}2 \leqslant  \frac34 n$. If $x_u < x_v$, then consider the word $u_1 = 1^{n_1}0^{x_u+1}$. This word is a subword of the word $v$ and is not a subword of $u$, and its length is $|u_1| = n_1 + x_u+1 \leqslant  n_1+\frac{n_0}{2} + 1 \leqslant  \frac34 n + 1$. In both cases we have a desired distinguishing subword.

\medskip

\noindent {\bf Case 2.} $a = b = 1$. 

\medskip

Without loss of generality we assume that $x_u > x_v$. The words $u_1 = 1^{n_1}0^{x_v+1}$ and $u_2 = (01)^{l-1}0^{x_v+1}1$ are subwords of the word $u$ and are not subwords of $v$. If $|u_1| \leqslant  \frac34n+4$ or $|u_2| \leqslant  \frac34n+4$, then we have a required subword. Otherwise $|u_1| = n_1+x_v+1 > \frac34n+4$ and $|u_2| = 2l+x_v > \frac34n+4$. If $x_v \leqslant  \frac{n_0}{2}$, then $|u_1| \leqslant  n_1+\frac{n_0}{2} + 1 \leqslant  \frac34 n + 1$. So, it remains to consider the case $x_u > x_v > \frac{n_0}2$ and each block of 0's in $u$ except for $0^{x_u}$ contains less than $x_v$ occurrences of 0 (in particular, the block $0^y$).

We now consider a 1-overlay of $v$ on $u$ such that the blocks $0^{x_u}$ and $0^{x_v}$ coincide. Since $x_u > x_v$, there are two neighboring 1's such that $v$ has more 0's between them than $u$ has for this overlay. We let $t$ and $s > t$  denote the numbers of occurrences of 0 between them in  $u$ and $v$, respectively. We let $p$ denote the number of occurrences of 1 between this block of 0's and the block $0^{x_v}$ in $v$. Consider the word $v_1=1^p0^{y+1}1^{n_1-p}0^{t+1}$. The word $v_1$ is a subword of the word $v$ and is not a subword of $u$. Since $u$ has at least $l+(x_u-1)+(y-1)$ occurrences of 0 (that is, $n_0 \geqslant l+(x_u-1)+(y-1)$) and $y \geqslant  t$, we have $$|u_1| + |u_2| + |v_1|=(n_1+x_v+1) + (2l+x_v) + (n_1+y+1+t+1) \leqslant $$ $$\leqslant  2n_1+2(l+x_v+y)+3 \leqslant  2n_1+2(n_0+2)+3 = 2n+7.$$ So, the length of at least one of the words $u_1$, $u_2$, $v_1$ is at most $\frac{2n+7}{3} < \frac{3}{4}n+4$.  \qed

\bigskip

In the following text we assume that  $x_u = x_v = x$. Note that since $n_{0,u} = n_{0,v} = n_0$, $n_{1,u} = n_{1,v} = n_1 \leqslant  \frac{n}2$ and $l_u = l_v$, we have $x \neq 1$. Otherwise  both words are equal to $(01)^{k}$. So, we also assume that $x \geqslant  2$. Recall that $u_{\longg}$ and $v_{\longg}$ are subwords of $u$ and $v$, respectively, which contain all 1's and all big blocks $0^x$.

\begin{lemma} \label{lemma 2}
Let $u_{\longg}$ be a neither special nor periodic word with $a \geqslant  3$. 
Then there exists a distinguishing subword of length at most $\frac34n+4$ for $u$ and $v$.
\end{lemma}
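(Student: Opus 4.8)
The plan is to apply Corollary~\ref{main corollary} to the word $u_{\longg}$ and then feed the resulting unioccurrent subword into Proposition~\ref{prop popular idea}. Since $u_{\longg}$ is neither special nor periodic and has $a \geqslant 3$ big blocks $0^x$, it is exactly a word of the form treated by the corollary (after grouping all $1$'s into blocks and all big zeros into blocks $0^x$). Thus $u_{\longg}$ contains a unioccurrent subword $u_{\fix}$ with all $n_1$ occurrences of $1$ and at most $\frac{a+2}{2}$ blocks $0^x$ and no other zeros. The length of this subword is therefore at most $n_1 + \frac{a+2}{2}\,x$.

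The heart of the argument is the length bound: I must verify that $|u_{\fix}| + y + 1 \leqslant \frac34 n + 4$, which is precisely the hypothesis Proposition~\ref{prop popular idea} needs. First I would observe that $u_{\fix}$ is also a unioccurrent subword of $u$ itself (adding small zeros to distinguish it cannot create a new occurrence, since it was already unioccurrent among the big blocks), so the second-case hypothesis of the proposition is met. Then, using $|u_{\fix}| \leqslant n_1 + \frac{a+2}{2}x$ together with $ax \leqslant n_0$ (the $a$ big blocks account for at most all the zeros) and $n_1 \leqslant \frac{n}{2}$, I would estimate $|u_{\fix}|$. Writing $\frac{a+2}{2}x = \frac{a}{2}x + x \leqslant \frac{n_0}{2} + x$, and bounding the stray term $x$ together with $y+1$, the desired inequality should follow. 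The clean way is to note $x + y \leqslant n_0 - (a-1)x - (l - a)\cdot 1$ or a similar counting estimate, so that the "extra" blocks beyond the $\frac{n_0}{2}$ main term are absorbed by the $+4$ slack.

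The main obstacle I anticipate is controlling the two additive terms $x$ (the leftover big block in $\frac{a+2}{2}x = \frac{a}{2}x + x$) and $y+1$ simultaneously. The bound $\frac{a}{2}x \leqslant \frac{n_0}{2}$ is immediate, but $x + y + 1$ must be shown to be at most $\frac{n}{4} - (\text{something nonnegative}) + 4$ after combining with $n_1 \leqslant \frac{n}{2}$. Since $u$ has $l \geqslant a \geqslant 3$ blocks of zeros, the zeros not in the accounted big block are at least $(a-1)x + (\text{small blocks})$, and $x + y \leqslant n_0 - (\text{zeros in the other blocks})$; the worst case is when there are few small blocks, so I would track carefully how many zeros sit outside the single leftover $0^x$ and the block $0^y$. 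Once this counting is pinned down, the conclusion is a direct invocation of Proposition~\ref{prop popular idea}, which produces a distinguishing subword of length at most $\frac34 n + 4$ for $u$ and $v$.

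\medskip

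\noindent The analogous statement with $v_{\longg}$ in place of $u_{\longg}$ and $b \geqslant 3$ follows by symmetry, interchanging the roles of $u$ and $v$ throughout.
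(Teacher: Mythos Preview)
Your plan covers only the easy half of the argument. The inequality $|u_{\fix}|+y+1\leqslant \frac34 n+4$ that you need for Proposition~\ref{prop popular idea} is \emph{not} always true under the hypotheses of the lemma. Concretely, take $a=4$, $x=9$, $y=8$ and $u=0^9\,1\,0^9\,1\,0^9\,1\,0^9\,1\,0^8\,1\,1\,1$, so that $u_{\longg}=0^9 1\,0^9 1\,0^9 1\,0^9 1^4$ (neither special nor periodic), $n_1=7$, $n=51$. Corollary~\ref{main corollary} only promises $u_{\fix}$ with at most $\frac{a+2}{2}=3$ blocks $0^9$, giving $|u_{\fix}|+y+1\leqslant 7+27+9=43$, while $\frac34 n+4=42.25$. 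So the bound fails. Worse, for $a=5$ with large blocks of $1$'s (e.g.\ block sizes $10,10,10,10,13$ with $x=10$, $y=9$) the gap is enormous: $|u_{\fix}|+y+1\leqslant 98$ versus $\frac34 n+4=88$. Your proposed ``counting estimate'' for $x+y$ cannot close this, because in these examples almost all zeros already sit in big blocks, leaving nothing to absorb the extra $x+y$.

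The paper's proof treats your argument as Case~1 (when $n_1+\frac{ax}{2}+x+y<\frac34 n+3$) and then devotes the bulk of the work to Case~2, where this inequality fails. There the key new idea is to switch from $u_{\fix}$ to the word $u_{\zeros}$ (all $1$'s plus one zero from each block), whose length $n_1+l$ is controlled by the inequality $l\leqslant \frac{n}{4}-\frac{(a-2)(x-2)}{2}$ derived precisely from the failure of Case~1. One then applies Proposition~\ref{prop popular idea} to $u_{\zeros}$ when it is non-periodic, and to a hybrid of $u_{\zeros}$ and $u_{\fix}$ when $u_{\zeros}$ is periodic; further subcase analysis is needed for $a=4$ and $a=3$. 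Without this second ingredient your proof is incomplete.
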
 

\noindent\textit{Proof.} Due to Corollary~\ref{main corollary}, there exists a unioccurrent subword $u_{\fix}$ of $u_{\longg}$ such that $u_{\fix}$ has $n_1$ occurrences of 1 and at most $\frac{a}2 + 1$ blocks $0^x$. It is easy to see that $u_{\fix}$ is also a unioccurrent subword of $u$. The proof of the lemma is split into several cases as follows.  Cases {\bf 1} and {\bf 2} correspond to $a \geqslant 4$; Case {\bf 1} gives a proof under the condition $n_1+\frac{ax}2 + 2x < \frac34n+3$, and Case {\bf 2} treats the opposite inequality. Case {\bf 2} is divided into subcases {\bf 2.1}  and {\bf 2.2} corresponding to $a \geqslant  5$ and $a = 4$, respectively. Case {\bf 2.1} is further subdivided to subcases {\bf 2.1.1}  and {\bf 2.1.2} depending on whether  $u_{\zeros}$ is periodic or not. Case {\bf 3} corresponds to $a=3$.


















\medskip

We now proceed with the proofs in each of these cases.

\medskip

\noindent{\bf  Case 1.} {\it $n_1+\frac{ax}2 + x+y <  \frac34n+3$, $a \geqslant 4$.} 

\medskip

The subword $u_{\fix}$ of $u$ is unioccurrent and $|u_{\fix}|+y+1 \leqslant  n_1 + (\frac{a}2+1)x+y+1 \leqslant  n_1+\frac{ax}2 + x+y+1 <  \frac34n+4$. Then applying Proposition~\ref{prop popular idea} to the word $u_{\fix}$ we get that there is a distinguishing subword of length at most $\frac34n+4$ for $u$ and $v$.

\medskip
\noindent {\bf Case 2.} {\it $n_1+\frac{ax}2 + x+y \geqslant  \frac34n+3$, $a \geqslant 4$.} 

\medskip

\noindent {\bf Case 2.1.} $a \geqslant  5$. 

\medskip

Recall that $n_{\longg}$ and $n_{\short}$ are the numbers of 0's in big blocks (blocks $0^x$) and small blocks (blocks $0^{<x}$), respectively. Notice that $n_0 = n_{\longg} + n_{\short} = ax + n_{\short}$, $n = n_1 + n_0 = n_1 +ax + n_{\short}.$ So, 
\begin{equation} \label{eq:one} 
n_{\short} = n-n_1 - ax.
\end{equation} 
The word $u$ has $a$ blocks $0^x$ and at most $n_{\short} - y+1$ small blocks. So, 
\begin{equation} \label{eq:two} 
l \leqslant  n_{\short} - y + 1 + a.
\end{equation}
Since $n_1+\frac{ax}2 + x+y \geqslant  \frac34n+3$, \eqref{eq:one} and \eqref{eq:two} imply that 
$$l \leqslant  n_{\short} - y + 1 + a = n - n_1 - ax - y + a+1 = n - (n_1+\frac{ax}2+x+y) - \frac{ax}2+x+a+1 \leqslant$$ 
\begin{equation} \label{eq:three}
\\ \leqslant \frac{n}4 - \frac{ax-2x-2a+4}2 = \frac{n}4 - \frac{(a-2)(x-2)}{2}.
\end{equation}

Consider the words $u_{\zeros}$ and $v_{\zeros}$  (recall that they are subwords of $u$ and $v$ which contain all 1's and one zero from each block of 0's). Notice that $|u_{\zeros}| = n_1 + l \leqslant  \frac{n}2 + \frac{n}4 - \frac{(a-2)(x-2)}{2} \leqslant  \frac34n$, since $a \geqslant  2, x \geqslant  2$. If $v_{\zeros} \neq u_{\zeros}$, then $u_{\zeros}$ is a subword of the word $u$ and is not a subword of $v$, and $|u_{\zeros}| \leqslant \frac34n$. So it remains to prove the lemma in the case $v_{\zeros} = u_{\zeros}$. Consider two cases.

\medskip

\noindent {\bf Case 2.1.1.} $u_{\zeros}$ is not periodic. 

\medskip

In this case there is only one way to take an occurrence of $u_{\zeros}$ in $u$ modulo the selection of one zero from each block of 0's. Since 
$n_1 \leqslant  \frac{n}2$ and by \eqref{eq:three}, we have  $$|u_{\zeros}|+y+1 = n_1 + l + y+1 < n_1 + \frac{n}4 - \frac{ax-2x-2a+4}{2} + x + 1 \leqslant$$ 
\begin{equation} \label{eq:four}
\leqslant  \frac{n}2 + \frac{n}4 - \frac{ax-4x-2a+2}{2} =  \frac34n - \frac{(a-4)(x-2)}{2}+3 \leqslant  
\frac34n + 3,
\end{equation}
since $a \geqslant  5$ and $x \geqslant  2$. Applying Proposition~\ref{prop popular idea} to the word $u_{\zeros}$ we get that there is a distinguishing subword for the words $u$ and $v$ of length at most $\frac34n+4$.

\medskip
\noindent {\bf Case 2.1.2.} $u_{\zeros}$ is periodic. 

\medskip

The word $u_{\zeros}$ is of the following form: $u_{\zeros} = (1^{\alpha_1}01^{\alpha_2}0\cdots 1^{\alpha_t}0)^\frac{l}t$, where $\frac{l}t\geqslant 2$ is an integer and $t$ is minimal.
Consider the word $u_{\zeros}' = (1^{\alpha_1}01^{\alpha_2}0\cdots 1^{\alpha_t}0)(10)^{l-t}$. The word $1^{\alpha_1}01^{\alpha_2}0\cdots 1^{\alpha_t}0$ contains $\frac{n_1}{\frac{l}t}$ occurrences of 1, hence $u_{\zeros}'$ contains $\frac{n_1\cdot t}{l}+l-t$ occurrences of 1 and $l$ occurrences of 0. The function $f(t)=\frac{n_1\cdot t}{l}+l-t$ is increasing and $t \leqslant  \frac{l}2$ (since $\frac{l}t \geqslant  2$ is integer), so $u_{\zeros}'$ contains at most $\frac{n_1+l}{2}$  occurrences of 1. We add to $u_{\zeros}'$ all blocks $0^x$ from $u_{\fix}$ defined in the beginning of the proof. Let $u_{\zeros}''$ denote the obtained subword of $u$. There is only one way to take $u_{\zeros}''$ in $u$ modulo the selection of 1's and 0's from blocks from which we do not take all symbols. Since $l \leqslant  n_{\short}-y+a+1$, we have $n = n_1+ax+n_{\short} \geqslant  n_1+ax+l+y-a-1$. From these  inequalities and \eqref{eq:three} it follows that
$$|u_{\zeros}''|+y+1 \leqslant  \frac{n_1+l}{2} + l + \left(\frac{a+2}2\right)(x-1) + y +1=$$ $$= \frac{n_1+ax+l+y-a-1}{2} + l + x + \frac{y}2 +\frac{1}2< \frac{n}2 + \frac{n}4 - \frac{ax-2x-2a+4}{2} + \frac{3x}2=$$

\begin{equation}\label{eq:five}
    \frac34n - \frac{ax-2a-5x+4}{2} = \frac34n - \frac{(a-5)(x-2)-6}{2} \leqslant \frac34n+3
\end{equation} 
since $a \geqslant  5$ and $x \geqslant  2$.  Applying Proposition~\ref{prop popular idea} to the word $u_{\zeros}''$, we get that there exists a distinguishing subword of length at most $\frac34n+4$ for $u$ and $v$.

\medskip

\noindent {\bf Case 2.2.} $a = 4$.  

\medskip

In this case the proof is similar to the proof in Case {\bf 2.1}. All inequalities from Case {\bf 2.1} hold true, except for the inequality \eqref{eq:five} in Case {\bf 2.1.2}. If we prove that $l \leqslant  \frac{n}4 - \frac{ax-2x-2a+4}2 - \frac{y}2+1$, then 
we can rewrite inequality \eqref{eq:five} as

$$|u_{\zeros}''|+y+1\leqslant  \frac{n_1+ax+l+y-a-1}2 + l + x + \frac{y}2 + \frac12 < \frac{n}2 + \frac{n}4 - \frac{ax-2x-2a+4}2-$$ $$-\frac{y}2+1 + x + \frac{x}2=  \frac34n - \frac{ax-2a-4x+2}2 =  \frac34n - \frac{(a-4)(x-2)}{2}+3= \frac34n+3,$$ which gives us the proof in this case similarly to the proof in Case {\bf 2.1}.

It remains to prove that $l \leqslant  \frac{n}4 - \frac{ax-2x-2a+4}2 - \frac{y}2+1$. For $a = 4$ this is equivalent to the inequality \begin{equation}\tag{$3'$}l \leqslant  \frac{n}4-x+3-\frac{y}2. \end{equation} 

We now consider several cases according to the number of blocks of 0's of length at least $\frac{y}{2}$ in $u$ and in $v$. Since in Case {\bf 2.2}, the one we consider now, we have $a=4$, i.e., the number of blocks of 0's of length $x$ (the longest blocks) in $u$ is $4$, $u_{\longg}$ is not special by the conditions of the lemma, we have that the number of blocks of length at least $\frac{y}{2}$ in $u$ is at least $5$. 
 
If there are at least six blocks in $u$ with lengths at least $0^{\frac{y}{2}}$,  i.e. four blocks $0^x$, one block $0^y$ and at least one block $0^{\frac{y}2}$, then we can rewrite inequality \eqref{eq:three} as $l \leqslant  n_{\short} - y - (\frac{y}{2}-1)+ 1 + a$. So, in this case we have $l \leqslant  \frac{n}4 - \frac{ax-2x-2a+4}2 - \frac{y}2+1$. 

Now we assume that $u$ has only five blocks of 0's with lengths at least $\frac{y}2$ (four blocks $0^x$ and one $0^y$). The word $v$ can have  different numbers blocks of 0's with lengths at least $\frac{y}2$; we consider several cases accordingly.

If there are at least six blocks of 0's in $v$ which have lengths at least $\frac{y}2$, then we consider the subword of $v$ which contains $n_1$ occurrences of 1 and six blocks $0^{\frac{y}2}$. Then this word is a subword of the word $v$ and is not a subword of $u$, and its length is $n_1+3y+3 \leqslant  n_1+3x$. Then either $n_1+3x \leqslant \frac34n + 4$, in which case we conclude, or $n_1+3x > \frac34n + 4$. In the latter case we can rewrite inequality \eqref{eq:three} as follows: $$l \leqslant  n_{\short} - y + 1 + a = n - n_1 - 4x - y + 5 = n - (n_1+3x) -x-y+5 \leqslant  \frac{n}4 - x -y+1,$$ which implies inequality ($3'$). If there are at most four blocks of 0's in $v$ which have lengths at least $\frac{y}2$, then we can consider the subword of the word $u$ and not of $v$ which has $n_1$ occurrences of 1 and five blocks $0^{\frac{y}2}$. So, we can get  ($3'$) similarly to the previous case, where $v$ has at least six blocks of 0's which have lengths at least $\frac{y}2$.

Now we assume that both $u$ and $v$ have five blocks of 0's with lengths at least $\frac{y}2$. We call them {\it major} blocks. Now we prove the following claim:

\begin{claim}
Let $u$ be not a special word with $a_u = 4$. Then there is a word $u'$ with $n_1$ occurrences of 1 and two blocks $0^x$ such that there is at least one and are at most two occurrences of $u'$ in $u$.    
\end{claim}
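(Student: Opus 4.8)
The plan is to reduce the claim to a finite combinatorial analysis of how the four big blocks of $u$ can be paired. First I would record the structural observation underlying everything: any subword $u'$ that contains all $n_1$ ones and exactly two blocks $0^x$ must have the form $u' = 0^x1^p0^x1^q$ with $p+q = n_1$, and each of its two blocks $0^x$ can only be realized by taking an \emph{entire} big block of $u$. Indeed, since $x$ is the maximal block length and all $n_1$ ones are kept, $x$ consecutive $0$'s of $u'$ must come from $x$ consecutive $0$'s of $u$, i.e. from a full block $0^x$. Consequently an occurrence of $u'$ in $u$ is the same datum as an unordered pair of distinct big blocks whose two arcs of $1$'s have lengths forming the multiset $\{p,q\}$; the number of occurrences of $u'$ therefore equals the number of such pairs (doubled if $p=q$, because of the cyclic symmetry of $u'$).

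Writing $u_{\longg} = 0^x1^{\alpha_1}0^x1^{\alpha_2}0^x1^{\alpha_3}0^x1^{\alpha_4}$ with $\alpha_i \geqslant 1$ and $\sum_i \alpha_i = n_1$, the six pairs of big blocks produce the six multisets $M_i = \{\alpha_i,\, n_1-\alpha_i\}$ for $1\leqslant i \leqslant 4$ (adjacent blocks) together with $N_1 = \{\alpha_1+\alpha_2,\,\alpha_3+\alpha_4\}$ and $N_2 = \{\alpha_2+\alpha_3,\,\alpha_4+\alpha_1\}$ (opposite blocks). The claim then becomes: one of these six multisets is repeated at most twice among the six. I would first note the key simplification that two gaps can never be complementary, since $\alpha_i+\alpha_j = n_1$ would force the remaining two gaps to sum to $0$, impossible as each is $\geqslant 1$. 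Hence $M_i = M_j$ (for $i\neq j$) holds if and only if $\alpha_i = \alpha_j$, so all coincidences among the $M_i$ come solely from equal gaps.

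Next I would run the case analysis on the multiset $\{\alpha_1,\alpha_2,\alpha_3,\alpha_4\}$. If the four gaps are pairwise distinct, the four $M_i$ are distinct, and since the two opposite multisets $N_1,N_2$ can coincide with at most two of them, at least two of the $M_i$ retain multiplicity one; any such $M_i$ yields a $u'$ with at most two occurrences. If exactly one pair of gaps coincide, the two ``singleton'' gap-values give $M$-multisets of multiplicity one among the $M_i$, and a short check shows at least one of them ends with total multiplicity at most two; here I would use that at most one of them can be symmetric (equal to $\{n_1/2, n_1/2\}$) to control the $p=q$ doubling. The remaining configurations are the degenerate ones: two gaps each repeated (either the alternating pattern, which makes $u_{\longg}$ periodic, or the block pattern, where a surviving $M$-value stays at multiplicity two and is non-symmetric), three equal gaps, and four equal gaps. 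In the alternating two-pairs case and the all-equal case $u_{\longg}$ is periodic; and in the three-equal case the only way for every one of the six multisets to repeat at least three times is that the distinguished gap $d$ equals $2c$, i.e. the pattern $(0^x1^c)^3 0^x1^{2c}$, which is a special word of the second type. All of these are excluded by the hypotheses of Lemma~\ref{lemma 2}, and in every surviving case I exhibit a concrete $u'$ with one or two occurrences, proving the claim.

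The main obstacle I expect is not any single case but two bookkeeping points. First, one must prove precisely that a total collapse (all six multisets having multiplicity $\geqslant 3$) forces $u_{\longg}$ to be periodic or second-type special, so that it is genuinely ruled out by the standing hypotheses rather than merely ``usually'' avoided. Second, there is the symmetry subtlety that a palindromic choice $p=q$ doubles the occurrence count; this must be absorbed by always selecting, when possible, a pair of big blocks whose $1$-arcs are unequal, or a multiset of multiplicity one, so that the doubling never pushes the count past two.
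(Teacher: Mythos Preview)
Your reduction to counting pairs of big blocks with a prescribed arc-length multiset is correct, and the case analysis you sketch would go through. The paper, however, uses a much shorter device that bypasses almost all of your case work. It sets $t=\min(\alpha,\beta,\gamma,\delta)$, notes $t<n_1/2$, and takes $u'=0^x1^{t}0^x1^{n_1-t}$. Because any sum of two or three of the gaps is strictly larger than $t$, the only occurrences of $u'$ come from adjacent big-block pairs whose gap equals $t$; so if at most two of the four gaps equal $t$ the bound is immediate. In the remaining situation three gaps equal $t$ and the fourth is some $t_0>t$; ``not special'' excludes $t_0\in\{t,2t\}$, hence $2t<n_1/2$, and $u'=0^x1^{2t}0^x1^{n_1-2t}$ is realised only by the two opposite pairs. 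This single ``look at the minimum'' idea replaces your partition-type analysis entirely, and since the chosen parameter is always strictly below $n_1/2$, the $p=q$ doubling you flag as an obstacle simply never occurs.

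Two remarks on your route. First, the alternating two-pairs configuration does not need to be thrown out via the non-periodicity hypothesis of Lemma~\ref{lemma 2}: in your own notation $M_1=M_3=\{a,a+2b\}$ already has multiplicity exactly two among the six multisets and is non-symmetric, so $u'=0^x1^{a}0^x1^{a+2b}$ works directly. By appealing to periodicity you prove something marginally weaker than the claim as literally stated (which assumes only ``not special''), though this is harmless in the ambient context. Second, your exhaustive approach trades brevity for a fuller structural picture of how big-block pairs can collide; the paper's minimum-gap shortcut is far more efficient but less illuminating about the underlying combinatorics.
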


\begin{proof}
By the conditions of the claim $u_{\longg}$ has the following form: $u_{\longg} = 0^x1^{\alpha} 0^x1^{\beta} 0^x 1^{\gamma} 0^x 1^{\delta}$ for some positive integers $\alpha, \beta, \gamma, \delta$. Denote $t = min(\alpha, \beta, \gamma, \delta)$; it is obvious that $t < n_1/2$. If among the numbers $\alpha, \beta, \gamma, \delta$ at most two are equal to $t$,  then there are at most two ways to take an occurrence of the subword  $u' = 0^x1^t0^x1^{n_1-t}$ in $u$. In the other case $u = 0^x1^t0^x1^t0^x1^t0^x1^{t_0}$. Since $u$ is not special, we have $t_0 \neq t$, $t_0 \neq 2t$. Since $t_0 > t$, we have $2t < \frac{n_1}2$. In this case we can take $u' = 0^x1^{2t}0^x1^{n_1-2t}$, and  there are at most two ways to take an occurrence of the subword $u' = 0^x1^{2t}0^x1^{n_1-2t}$ in $u$. The claim is proved. \qedSymC 
\end{proof}

By the claim there is a subword $u'$ of $u$ with $n_1$ occurrences of 1 and two blocks $0^x$ such that there is at least one and there are at most two subwords of $u$ which are equal to $u'$. We let $u_1$ and $u_2$ denote these subwords, which are equal to $u'$ (if $u_2$ exists). Then either $v$ does not have a subword which is equal to $u'$ or there is a subword $v' = u'$ of $v$. In the first case $u'$ is the subword of the word $u$ and is not a subword of $v$, and $|u'| = n_1+2x \leqslant  n_1+\frac{n_0}2 \leqslant  \frac34n$. In the second case we consider two 1-overlays of the word $v$ on $u$ such that $v'$ and $u_1$ (resp., $v'$ and $u_2$)  coincide. For both overlays we have that since $v \neq u$, there is a block $0^{s_1}$ (resp, $0^{s_2}$) in $u$ (possibly empty) which is smaller than the corresponding block of 0's in $v$ (since $s \leqslant  y$). We add a block $0^{s_1+1}$ ($0^{s_2+1}$) to $v'$ in the corresponding place. So, we add at most two blocks. Note that for one of the overlays $0^{s_i} \neq 0^y$; otherwise in both overlays the major blocks of $v$ coincide with the major blocks of $u$. So, there is a turn of $u$ which translates the major blocks of $u$ to the major blocks of $u$. Then the subword of $u$ which contains $n_1$ occurrences of 1, all big blocks $0^x$ and block $0^y$ is equal to $0^x1^t0^x1^t0^x1^t0^x1^t0^y1^t$. That is $u_{\longg} = 0^x1^t0^x1^t0^x1^t0^x1^{2t}$ is special; a contradiction. So after adding blocks $0^{s_1}$ and $0^{s_2}$ (one of them contains less than $\frac{y}2$ occurrences of 0) we get the word $v''$ such that $v''$ is a subword of the word $v$ and is not a subword of $u$ and $|v''| \leqslant  n_1+2x+y+\frac{y}2+2$. Then either $v''$ is a distinguishing subword of a desired length, or $n_1+2x+y+\frac{y}{2} \geqslant  \frac34n+2$. But if $n_1+2x+y+\frac{y}{2} \geqslant  \frac34n+2$, then $$l \leqslant  n - n_1 - 4x - y + 5 = n - (n_1+2x+y+\frac{y}2) - 2x+\frac{y}2+5 \leqslant  \frac{n}4 - 2x -\frac{y}{2}+5,$$ so $l \leqslant \frac{n}4 - x -\frac{y}{2}+3$, which is what we needed to get.

\medskip

\noindent {\bf Case 3.} $a = 3$.

\medskip
In this case $u_{\fix}$ contains two blocks $0^x$. So, we can prove the lemma in this case similarly to Cases {\bf 1} and {\bf 2}, but instead of Case {\bf 1} we consider Case ${\bf 1'}$:  $n_1+2x + y \leqslant  \frac34n+3$, and instead of Case {\bf 2} we consider Case {$\bf 2'$}: $n_1+2x + y \geqslant  \frac34n+3$. 

In Case ${\bf 1'}$ the subword $u_{\fix}$ of $u$ is unioccurrent and $|u_{\fix}|+y+1 \leqslant  n_1 + 2x+y+1 \leqslant   \frac34n+4$. Then applying Proposition~\ref{prop popular idea} to the word $u_{\fix}$, we get that there exists a distinguishing subword of length at most $\frac34n+4$ for $u$ and $v$.

In Case ${\bf 2'}$, since $a = 3$, we can rewrite inequality \eqref{eq:three} as follows:

$$l \leqslant  n_{\short} - y + 1 + a = n-n_1-3x-y+a+1 = $$
\begin{equation} \tag{$3''$}\label{eq1}
= n - (n_1 + 2x + y+1) - x + a \leqslant  \\ \frac{n}4 - x + a-3 = \frac{n}4 - \frac{(a-1)(x-2)}2-2. 
\end{equation}
Using ($3''$), we can rewrite inequality \eqref{eq:four} in Case {\bf 2.1.1} as  $$|v_{\zeros}|+y+1= n_1 + l + y+1 \leqslant  n_1 + \frac{n}4 - \frac{(a-1)(x-2)}{2}-2 + x =  $$ $$=  \frac{n}2 + \frac{n}4 - \frac{ax-3x-2a+6}{2} =  \frac34n - \frac{(a-3)(x-2)}{2} \leqslant  \frac34n + 2,$$ 
since $a \geqslant  3$ and $x \geqslant  2$. Using ($3''$), we can rewrite inequality \eqref{eq:five} in Case {\bf 2.1.2} as: $$|v_{\zeros}| \leqslant  \frac{n_1+l}2 + l + 2(x-1)+y =  \frac{n_1+3x+y+(l-4)}2 + l + \frac{x+y}2\leqslant $$ $$\leqslant  \frac{n}2+\frac{n}4 - \frac{(a-1)(x-2)}2 - 2 + x =  \frac34n - \frac{(a-3)(x-2)}2,$$
since $a \geqslant  3$ and $x \geqslant  2$.  \qed

\begin{lemma} \label{lemma 3}
    Let $u_{\longg}$ be periodic and not special with $a \geqslant  3$. Then there exists a 
    distinguishing subword of length at most $\frac34n+4$.
\end{lemma}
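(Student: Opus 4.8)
The plan is to re-run the entire case analysis of Lemma~\ref{lemma 2} (the regimes $n_1+\frac{ax}2+x+y<\frac34n+3$ and $\geqslant\frac34n+3$, the subcases according to $a$ and to the periodicity of $u_{\zeros}$, and the case $a=3$), making a single structural substitution throughout. Since a periodic word has no unioccurrent subwords, Corollary~\ref{main corollary} no longer produces $u_{\fix}$, so every appeal in Lemma~\ref{lemma 2} to the \emph{unioccurrence} of $u_{\fix}$ must instead be routed through the second alternative of Proposition~\ref{prop popular idea}, which asks only that the identity be the unique turn translating the chosen subword to an equal one. The counting is untouched: equations \eqref{eq:one}--\eqref{eq:three} still hold, and in the regime $n_1+\frac{ax}2+x+y\geqslant\frac34n+3$ they again give $l\leqslant\frac n4-\frac{(a-2)(x-2)}2$, whence $|u_{\zeros}|=n_1+l\leqslant\frac34n$. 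As a first reduction, if $u_{\zeros}\neq v_{\zeros}$ then $u_{\zeros}$ is itself a distinguishing subword of length at most $\frac34n$ and we are done; so I would henceforth assume $u_{\zeros}=v_{\zeros}$, which means $u$ and $v$ carry the same skeleton of $1$-blocks with one zero per block and differ only in how the remaining $n_0-l$ zeros are distributed among the $l$ blocks.

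Next I would split on whether the full word $u$ is periodic. If $u$ is \emph{not} periodic, then even though $u_{\longg}=(0^x1^{\gamma_1}\cdots0^x1^{\gamma_p})^r$ is periodic, the small blocks of $u$ destroy the periodicity of $u$ itself, and one can exhibit a subword $u'$ with all $n_1$ ones whose only self-turn is the identity: either $u_{\zeros}$ itself works (when it is aperiodic, exactly as in Case~{\bf 2.1.1}), or, following Case~{\bf 2.1.2}, one replaces a single period of the $10$-skeleton by the genuine $1$-pattern and grafts on the needed big blocks, this time \emph{locating} them by the aperiodicity of $u$ rather than by $u_{\fix}$. The only-identity-turn verification is the same index-chasing ($\sigma(0_1)\neq\dots$) used in the proof of the Proposition preceding Corollary~\ref{main corollary}, now carried out inside $u$. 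The length bounds are literally the analogues of \eqref{eq:four} and \eqref{eq:five}, giving $|u'|+y+1\leqslant\frac34n+4$, so Proposition~\ref{prop popular idea} applies and yields the distinguishing subword.

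The main obstacle is the genuinely periodic case $u=Q^r$ with $r\geqslant 2$: here \emph{no} subword of $u$ can satisfy Proposition~\ref{prop popular idea}, since turning by the period is always a nontrivial symmetry, and one must extract the distinguishing subword from $v$ instead. The clean half is the regime where $|u_{\zeros}|\leqslant\frac34n$: since $v\neq u$ and $n_{0,u}=n_{0,v}$ while $u_{\zeros}=v_{\zeros}$, some block of $v$ strictly exceeds the corresponding (periodically repeated, hence constant-length) $u$-block; moreover $x_v=x=x_u$ forbids $v$ from exceeding $u$ at a \emph{big} block, so the excess occurs at a block of $u$-length $u_i\leqslant y$. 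Bumping that block of $v_{\zeros}$ up to $u_i+1$ zeros produces a subword of $v$ that is \emph{not} a subword of $u$ — periodicity forces every $u$-block in that phase to have length $u_i$ — of length $n_1+l+u_i\leqslant|u_{\zeros}|+y\leqslant\frac34n+2$. The genuinely delicate point, which needs the inequalities on $l$, $x$, $y$ pushed rather than used trivially, is the complementary regime $n_1+\frac{ax}2+x+y<\frac34n+3$, where $l$ (and hence $|u_{\zeros}|$) need not be small: there one cannot afford the whole skeleton and must instead build from $v$ a short subword carrying only $O(a)$ big blocks together with the single witnessing zero, verifying it is not a subword of $u$ against the period symmetries and bounding its length exactly as $u_{\fix}$ was bounded in Case~{\bf 1} of Lemma~\ref{lemma 2}. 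Controlling this last length is where I expect the real work to lie.
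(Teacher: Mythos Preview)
Your proposal has a genuine gap precisely where you flag it (``where I expect the real work to lie''), and the paper closes that gap by a device you do not consider.

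Write the periodic word as $u_{\longg}=(1^{\alpha_1}0^x\cdots1^{\alpha_r}0^x)^t$ with $rt=a$ and $t\geqslant 2$; since $u_{\longg}$ is not special one also has $r\geqslant 2$, so $r\leqslant a/2$ and $t\leqslant a/2$. The paper takes $u_1$ to be one period $1^{\alpha_1}0^x\cdots1^{\alpha_r}0^x$ followed by all remaining $1$'s. This subword is \emph{not} unioccurrent and does \emph{not} admit only the identity as a self-turn --- it has exactly $t$ of them --- so neither alternative of Proposition~\ref{prop popular idea} applies. That is fine: for each of the (at most) $t$ overlays of $v$ on $u$ matching $u_1$, one adds a single correcting block $0^{s+1}$ with $s\leqslant y$. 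The resulting distinguishing word (a subword of $v$, not of $u$) carries $r$ blocks $0^x$ and at most $t$ correction blocks of size $\leqslant y+1\leqslant x$, hence has length at most $n_1+rx+(t-1)x+(y+1)\leqslant n_1+\frac{ax}{2}+x+y+1$, using $r+t\leqslant\frac{a}{2}+2$. This is exactly the Case~{\bf 1} threshold of Lemma~\ref{lemma 2}; when it exceeds $\frac34n+4$ one falls through to inequality~\eqref{eq:three} and the $u_{\zeros}$ analysis verbatim, again with $u_1$ (its $r$ big blocks and $\leqslant t$ overlays) playing the role that $u_{\fix}$ played there.

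The point is that you should not insist on ``only the identity turn'': the right extension is ``at most $t$ turns, pay $t$ corrections,'' and the constraint $rt=a$ with $r,t\geqslant 2$ is what keeps $r+t$ small. Your route instead demands a subword with roughly $\frac{a}{2}$ big blocks and trivial turn-stabiliser; Corollary~\ref{main corollary} no longer supplies one, and you give no replacement construction --- the sketch is vague in the non-periodic-$u$ branch (``locating them by the aperiodicity of $u$'') and openly incomplete in the periodic-$u$ branch for the regime $n_1+\frac{ax}{2}+x+y<\frac34n+3$. The paper's device also makes your split on the periodicity of~$u$ unnecessary: the distinguishing word it builds lives in $v$, so whether $u$ has nontrivial global symmetries is immaterial.

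A side remark: your case $a=3$ is vacuous, since a periodic $u_{\longg}$ with exactly three blocks $0^x$ is $(0^x1^m)^3$, which is special; the hypotheses of the lemma force $a\geqslant 4$.
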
 

\textit{Proof.} Since for $a=3$ we have that $u_{long}$ is special as it is periodic with exactly three blocks $0^x$, we only have to look at the cases when $a\geq 4$.

\medskip
\noindent {\bf Case 1.} $a \geqslant  5$.
\medskip

In this case $u_{\longg} = (1^{\alpha_1}0^x1^{\alpha_2}0^x\cdots 1^{\alpha_r}0^x)^t$ for some $r$ and $t$ such that $rt = a$ and $t \geqslant  2$. So, $r \leqslant  a/2$. Since $u_{\longg}$ is not special, we have $r \geqslant  2$. Consider the word $u_1 =(1^{\alpha_1}0^x1^{\alpha_2}0^x\cdots 1^{\alpha_r}0^x)1^{r(t-1)}$. Then either $u_1$ is not a subword of $v$, or there is a subword $v_1 = u_1$ of $v$. In the first case $u_1$ is a subword of the word $u$ and is not a subword of $v$, and $|u_1| = n_1+\frac{ax}{t} \leqslant n_1 + \frac{n_0}2 \leqslant \frac34n$. In the second case there are at most $t$ 1-overlays of $v$ on $u$ such that $v_1$ coincides with a an occurrence of a subword equal to $v_1$. For each such 1-overlay there is a block $0^s$ $(s \leqslant  y)$ in $u$ which is smaller than the corresponding block in $v$. We add $0^{s+1}$ to $v_1$ to the corresponding place; let $v_1'$ denote the obtained subword of $v$. The word $v_1'$ is a subword of the word $v$ and is not a subword of $u$. Since $y+1 \leqslant x$, we have $$|v_1'| \leqslant n_1 + rx+\frac{a}{r}(y+1) \leqslant n_1+rx+\left(\frac{a}{r}-1\right)x+(y+1).$$ 

Consider the function $f(r)= r+\frac{a}{r}-1$. It is easy to see that for $a \geqslant 5$, in the interval $2 \leqslant  r \leqslant  a/2$ we have $f(r) \leqslant \max(f(2), f(\frac{a}2)) = \frac{a}{2}+1$.  Then 
$$|v_1'| \leqslant n_1+rx+\left(\frac{a}{r}-1\right)x+(y+1) \leqslant n_1+\frac{ax}2+x+y+1.$$
So, if $n_1 + \frac{ax}2+x+y \leqslant  \frac34n+3$, then we can take $v'$ as a  required subword. 

Assume that $n_1 + \frac{ax}2+x+y \geqslant  \frac34n+3$.  Similarly to Case {\bf 2} from Lemma~\ref{lemma 2} we can obtain inequality \eqref{eq:three}: $l \leqslant \frac{n}4 - \frac{(a-2)(x-2)}{2}$. Now we proceed with the proof similarly to Case {\bf 2} from Lemma~\ref{lemma 2}.

If $u_{\zeros}$ is not periodic, then the proof is similar to Case {\bf 2.1.1} from Lemma~\ref{lemma 2}. Assume that $u_{\zeros}$ is periodic. That is, $u_{\zeros} = (1^{\alpha_1}01^{\alpha_2}0\cdots 1^{\alpha_p}0)^\frac{l}p$, for some $p\geq 2$. Consider the word $u_{\zeros}' = (1^{\alpha_1}01^{\alpha_2}0\ldots 1^{\alpha_p}0)(10)^{l-p}$. Similarly to Case {\bf 2.1.2} from Lemma~\ref{lemma 2}, the word $u_{\zeros}'$ contains $l$ occurrences of 0 and at most $\frac{n_1+l}2$ occurrences of 1. 
We add to $u_{\zeros}'$ all blocks $0^x$ from $u_1$ (recall that $u_1 = (1^{\alpha_1}0^x1^{\alpha_2}0^x\ldots 1^{\alpha_r}0^x)1^{r(t-1)}$). We let $u_{\zeros}''$ denote the obtained subword of $u$. So either $u_{\zeros}''$ is not a subword of $v$ and $|u_{\zeros}''| \leqslant \frac34n+4$ (we will prove this inequality later), or there is a subword $v' = u_{\zeros}''$ of $v$. In the latter case there are at most $t$ 1-overlays of $v$ on $u$ such that $v'$ coincides with an occurrence of a subword equal to $v'$. For each of these overlays there is a block $0^s$ $(s \leqslant  y)$ in $u$ which is smaller than the corresponding block in $v$. We add $0^{s+1}$ to $v'$ to the corresponding place. We let $v''$ denote the obtained subword of $v$. The word $v''$ is a subword of the word $v$ and is not a subword of $u$ and it has at most $\frac {n_1+l}{2}$ occurrences of 1, one zero from each block of 0's, $rx$ occurrences of 0 from big blocks $0^x$ and $t$ added blocks each of which contains at most $y+1$ occurrences of 0. Since $rt = a$ and $r,t \geqslant 2$, we have $r+t \leqslant  \frac{a+2}{2}+1$. Since $r+t \leqslant  \frac{a+2}{2}+1$, $y \leqslant x-1$, we have 
$$|v''| \leqslant  \frac{n_1+l}{2} + l + r(x-1) + ty \leqslant  \frac{n_1+l}2 + l + \frac{(a+2)}2(x-1)+y \leqslant   \frac34n+4.$$ The last inequality can be proved similarly to inequality \eqref{eq:five} from Lemma~\ref{lemma 2}.

\medskip

\noindent {\bf Case 2.} $a = 4$.

\medskip

Since $u_{\longg}$ is not special, we have $u_{\longg} = 0^x1^{\alpha}0^x1^{\beta} 0^x1^{\alpha}0^x1^{\beta}$, $\alpha \neq \beta$. This case is proved in the same way as Case {\bf 2.2} from Lemma~\ref{lemma 2}. \qed





\begin{lemma} \label{lemma 4}
    Suppose $u_{\longg}$ is not a special word with at most two blocks $0^x$ and  $v_{\longg}$ is either special or contains at most two blocks $0^x$ (or vice versa).
    Then there is a distinguishing subword of length at most $\frac34n+4$.
\end{lemma}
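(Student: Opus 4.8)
The plan is to follow the template of Lemma~\ref{lemma 2}, with the caveat that for $a\le 2$ the reduction of Corollary~\ref{main corollary} no longer shortens the big-block part of a unioccurrent subword, so the anchor subword must be built by hand. By the ``(or vice versa)'' clause I may assume that the word whose long part is non-special with at most two big blocks is $u$, so $a\in\{1,2\}$; the hypothesis on $v_{\longg}$ (special, or at most two big blocks) serves only to guarantee that this is exactly the configuration left uncovered by Lemmas~\ref{lemma 2} and~\ref{lemma 3}, since a non-special $v_{\longg}$ with $b\ge 3$ would already be handled there.

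First I would exhibit a short unioccurrent subword $u_{\fix}$ of $u$ containing all $n_1$ ones. For $a=1$ take $u_{\fix}=1^{n_1}0^x$: once all ones are selected, the $x$ chosen zeros must lie inside a single block, which can only be the unique big block, so the occurrence is unique. For $a=2$, non-speciality of $u_{\longg}=0^x1^{\alpha_1}0^x1^{\alpha_2}$ forces $\alpha_1\neq\alpha_2$ (the cases $\alpha_1=\alpha_2$ and $\alpha_i=2\alpha_{3-i}$ are precisely the special types with two blocks of $0$'s), and then $u_{\fix}=u_{\longg}$ is unioccurrent: both big blocks are forced and the asymmetry $\alpha_1\neq\alpha_2$ pins down the alignment. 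In either case $|u_{\fix}|=n_1+ax$, and if $n_1+ax+y+1\le\frac34n+4$, then Proposition~\ref{prop popular idea} applied to $u_{\fix}$ yields a distinguishing subword.

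It remains to treat the regime $n_1+ax+y>\frac34n+3$, in which the big blocks carry a large share of the zeros. Here I would switch to $u_{\zeros}$ (all ones and one zero per block), exactly as in Cases~{\bf 2.1.1} and~{\bf 2.1.2} of Lemma~\ref{lemma 2}: if $u_{\zeros}$ is not periodic it is rigid and Proposition~\ref{prop popular idea} applies to it directly, whereas if $u_{\zeros}$ is periodic I would re-attach the (one or two) big blocks of $u_{\fix}$ to the aperiodic subword $u_{\zeros}'=(1^{\alpha_1}0\cdots 1^{\alpha_t}0)(10)^{l-t}$ to obtain a unioccurrent $u_{\zeros}''$. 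The role of the regime inequality is to bound $l$: combining $n=n_1+ax+n_{\short}$ with $l\le n_{\short}-y+1+a$ and with $n_1+ax+y>\frac34n+3$ produces an upper bound on $n_1+l+y$, which is what feeds the length estimates $|u_{\zeros}|+y+1\le\frac34n+4$ and its periodic analogue.

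The main obstacle is precisely this final length bookkeeping. For $a\ge 4$ inequality~\eqref{eq:three} of Lemma~\ref{lemma 2} gives a strong bound on $l$, but for $a\le 2$ it degenerates, so bounding $n_1$ and $l$ separately (by $\frac n2$ and by $n_{\short}-y+1+a$) leaves a genuine gap for moderate $x$ roughly below $\frac n4$. I would close it by tracking the two sources of slack simultaneously: when $y$ is small the regime inequality already forces $x$ to be a sizeable fraction of $n$, which suffices for $u_{\zeros}$, while when $y$ is large the small blocks are long and therefore few, so $l$ is automatically small and $u_{\zeros}$ is short. A single estimate that carries both $y$ and the number of small blocks should cover the whole range. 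In the residual sliver where no subword built only from $u$ fits the budget, I would finally invoke the hypothesis on $v$: if $a\neq b$ a subword realising $\min(a,b)+1$ big blocks with all ones distinguishes $u$ from $v$ (and the word with more big blocks has fewer ones, which helps the length), and if $a=b$ with $v_{\longg}$ special the rigid one-distribution of $v$ forces $u_{\longg}\neq v_{\longg}$, so $u_{\longg}$ itself is not a subword of $v$.
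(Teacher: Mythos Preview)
Your opening move (anchor with a unioccurrent subword containing all ones, then invoke Proposition~\ref{prop popular idea}) is sound and matches the spirit of Lemma~\ref{lemma 2}. The difficulty you flag is real, but your proposed resolution does not close it.

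Concretely, take $a=b=1$, $n=100$, $n_1=50$, $x=20$, $y=19$, and let $u$ have one block $0^{20}$, one block $0^{19}$, and eleven blocks $0$ (so $l=13$). Then $n_1+ax+y=89>\tfrac34 n+3$, so you are in the hard regime. Your fallback $u_{\zeros}$ has $|u_{\zeros}|+y+1=50+13+19+1=83>\tfrac34 n+4$, so Proposition~\ref{prop popular idea} does not apply to it either. Your bound $n_1+l+y\le n_1+n_{\short}+2=n-x+2$ only gives what you need when $x\ge\tfrac n4-1$, but the regime inequality $n_1+x+y>\tfrac34 n+3$ does \emph{not} force $x$ that large; it is compatible with $x$ as small as about $\tfrac n8$. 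The ``two sources of slack'' heuristic does not save you here: $y$ is large (so $l$ is moderately small) \emph{and} $x$ is moderately small, and neither extreme is reached. The final ``residual sliver'' argument also fails for $a=b$: the subword with $\min(a,b)+1=2$ big blocks has length $n_1+2x$, which need not be a subword of one and not the other, and $u_{\longg}=1^{n_1}0^x$ is trivially a subword of $v$ as well.

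The paper abandons the Lemma~\ref{lemma 2} template entirely for $a\le 2$. Instead it does a case split on $(a,b)\in\{(1,1),(2,1),(2,2),(2,{\ge}3)\}$ and, crucially, works with \emph{both} words simultaneously via explicit $1$-overlays aligning the big blocks. For instance in the $(1,1)$ case it aligns the unique $0^x$-blocks of $u$ and $v$, locates a position where the blocks of $0$'s disagree, and builds a distinguishing word of the form $0^{y+1}1^{\alpha}0^{\beta}1^{n_1-\alpha}$ whose length is bounded by $n-x-l+4$; a second candidate of length $2l+y+\beta-1$ covers the complementary range. The $(2,2)$ case is the most delicate and splits further on whether $(\alpha_1,\alpha_2)=(\beta_1,\beta_2)$, in each subcase producing several candidate distinguishing words and showing that their lengths cannot all exceed $\tfrac34 n+4$ simultaneously. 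The point is that for $a\le 2$ a unioccurrent anchor in $u$ alone is too expensive; one has to exploit the structure of $v$ from the outset.
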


\textit{Proof.} Consider four cases corresponding to possible values of $a$ and $b$.

\medskip

\noindent {\bf Case 1.} $(a,b) = (1,1)$. 

\medskip

Recall that $0^{y_u} = 0^y$ and $0^{y_v}$ are the second largest blocks of 0's in $u$ and $v$, respectively. Without loss of generality we may assume that $y \geqslant  y_v$. We now consider a 1-overlay of $u$ on $v$ such that the blocks $0^x$ coincide; we let $\pi$ denote this overlay. If for this 1-overlay $\pi$ there are two neighboring 1's such that there are 0's between them in only one of the words $u$ and $v$, then the word $v_1 = 0^{y+1}1^{\alpha}01^{n_1-\alpha}$ (or $u_1 = 0^{y_v+1}1^{\alpha}01^{n_1-\alpha}$) for some $\alpha$ is a distinguishing subword for the words $u$ and $v$. Moreover, the length of this word is at most $n_1+y+2 \leqslant  n_1 + (x+y)/2 + 2 \leqslant  n_1 + n_0/2+2 \leqslant  \frac34n+2$. So, in this case we have a required subword.

Otherwise, for a 1-overlay $\pi$ for each place where $u$ has a block of 0's there is a block of 0's in $v$ and vice versa. In this case, there are two blocks $0^{\beta}$ and $0^{\gamma}$ in $u$ such that $0^{\beta}$ is bigger than the corresponding block in $v$ and $0^{\gamma}$ is smaller than the corresponding block in $v$. Without loss of generality we assume that $\beta \leqslant  \gamma$ (the case $\beta > \gamma$ is similar). Then for some integer $\alpha$ the word $u_1 = 0^{y+1}1^{\alpha}0^{\beta}1^{n_1-\alpha}$  is a subword of the word $u$ and is not a subword of $v$ and $|u_1| = n_1+(y+1)+\beta = n_1+n_0-x-(n_{\short}-y-\beta)+1 \leqslant  n - x - l + 4$, since $n_{\short}-y-\beta \geqslant  l - 3$. So, if $n - x - l + 4 \leqslant \frac34n+4$, then we have a required subword. If $n - x - l + 4 > \frac34n+4$, then we get $x+l < \frac{n}4$. Since for a 1-overlay $\pi$ for each place where $u$ has a block of 0's there is a block of 0's in $v$ and vice versa, we have that $u_2 = 0^{y+1}1(01)^{\alpha-1}0^{\beta}1(01)^{l-\alpha-1}$ is a subword of the word $u$ and is not a subword of $v$ and $|u_2| = 2l + y + \beta - 1 < 2l+2x < \frac{n}2$. We proved the lemma in Case {\bf 1}.

\medskip

\noindent {\bf Case 2.} $(a,b) = (2,1)$. 

\medskip

In this case the word $u_1 = 0^{x}1^{\alpha_1}0^{x}1^{\alpha_2}$ is a subword of the word $u$ and is not a subword of $v$ and $|u_1| = n - n_{\short}$. If $n - n_{\short}\leqslant \frac{3}{4}n$, then we have a required subword, so it remains to consider the case $n_{\short} < \frac{n}4 - 4$. This inequality implies that $l \leqslant  n_{\short}-y+3 <  \frac{n}4 - y - 1$. The word $u_2 = 0^{x}1(01)^{\alpha_1-1}0^{x}1(01)^{\alpha_2-1}$ is a subword of the word $u$ and is not a subword of $v$, and $|u_2| = 2l + 2x -2$. If $2l + 2x -2\leqslant \frac{3}{4}n$, then we again have a required subword, so it remains to consider the case $2l+2x-2 > \frac34n+4$. The latter inequality implies that $2(\frac{n}4 - y - 1) + 2x - 2 >  \frac34n +4$, which can be rewritten as $2x-2y \geqslant  \frac{n}4 + 8$. Now consider the subword $v_3 = 1^{n_1}0^{y+1}$ of $v$. There are two 1-overlays of $v$ on $u$ such that $v_3$ coincides with an occurrence of an equal subword. For both overlays we find blocks $0^{s_1}$ and $0^{s_2}$ which are shorter than the corresponding blocks in $v$, and we add to $v_3$ blocks $0^{s_1+1}$ and $0^{s_2+1}$ to the corresponding places; we let $v_3'$ denote the obtained subword of $v$. The word $v_3'$ is a subword of the word $v$ and is not a subword of $u$, and $$|v_3'| \leqslant  n_1 + 3(y+1) = n_1 + y + 2x + 2(y-x) + 3 \leqslant  n - \frac{n}4-5,$$ since $n_1+y+2x \leqslant  n$ and $2(y-x) \leqslant  -\frac{n}4 - 8$. This completes the proof in Case {\bf 2}.

\medskip

\noindent {\bf Case 3.} $(a,b) = (2,2)$.

\medskip

Let $u_{\longg} = 0^{x}1^{\alpha_1}0^{x}1^{\alpha_2}$ and $v_{\longg} = 0^{x}1^{\beta_1}0^{x}1^{\beta_2}$, where $\alpha_1+\alpha_2 = \beta_1+\beta_2 = n_1$. Assume without loss of generality that $\alpha_1 \leqslant \beta_1 \leqslant \beta_2 \leqslant \alpha_2$. Let $$u = 0^x\,\, 1^{r_1}0^{\gamma_1}\cdots 0^{\gamma_{i-1}}1^{r_i}\,\,0^x\,\, 1^{r_{i+1}}0^{\gamma_{i}}\cdots 0^{\gamma_{l-2}}1^{r_{l}}$$
$$v = 0^x\,\, 1^{p_1}0^{\phi_1}\cdots 0^{\phi_{j-1}}1^{p_j}\,\,0^x\,\, 1^{p_{j+1}}0^{\phi_{j}}\cdots 0^{\phi_{l-2}}1^{p_{l}},$$
where $r_1+\ldots+r_i = \alpha_2$, $r_{i+1}+\ldots+r_l = \alpha_1$, $p_1+\ldots+p_j = \beta_2$, $p_{j+1}+\ldots+p_l = \beta_1$. Recall that $y_u$ and $y_v$ are the lengths of the second longest blocks of 0's in $u$ and $v$, i.e. the largest length smaller than $x$. Let $y_{max} = \max(y_u, y_v)$. Consider two cases.

\medskip

\noindent {\bf Case 3.1.} $\beta_1 > \alpha_1$. 

\medskip


Since $\beta_1 > \alpha_1$, we have $\alpha_1 < \beta_1 \leqslant \beta_2 < \alpha_2$. Then the word $u' = 0^{y_{max}+1}1^{\alpha_1}0^{y_{max}+1}1^{\alpha_2}$ is a subword of the word $u$ and is not a subword of $v$, and $|u'| \leqslant n_1+2y_{max}+2$. Then either we have a required subword, or $n_1+2y_{max} > \frac34n+2$. Now we assume that $n_1+2y_{max} > \frac34n+2$. 

If $i \neq j$, then the word $u'' = 0^{y_{max}+1}1(01)^{i-1}0^{y_{max}+1}(01)^{l-i}$ is a subword of the word $u$ and is not a subword of $v$ and $|u''| \leqslant 2l+y_{max}$. If $2l+y_{max} > \frac34n+4$, then $$(2l+2y_{max}) + 2(n_1+2y_{max}) > \frac34n+4 + 2\left(\frac34n+2\right),$$ which implies $2n_1+2l+6y_{max} > \frac94n+8$. Since $n_0 > 2x+y_{max}+(l-3)$, we have $$2n = 2n_1 + 2n_0 > 2n_1 + 2(2x+y_{max}+(l-3)) > 2n_1+2l+6y_{max}-6 > \frac94n+2;$$ a contradiction. So, $|u'| \leqslant \frac34n+4$, and hence in the case $i \neq j$ we have a required subword. Now assume that $i = j$.

Since $\alpha_1 < \beta_1$ and $\alpha_2 > \beta_2$, there are indices $t_1 \leqslant i$ and $t_2 \leqslant l-i$ such that $r_{t_1} < p_{t_1}$ and $r_{i+t_2} > p_{i+t_2}$. Assume without loss of generality that $r_{t_1} < r_{i+t_2}$. Consider the subword $v' = 0^{y_{max}+1}(10)^{t_1-1}1^{r_{t_1}+1}(01)^{l-t_1}$ of $v$. Then there is at most one 1-overlay of $v$ on $u$ such that $v'$ coincides with an occurrence of an equal word (modulo selection of 0's and 1's from each block). Since $u \neq v$, for this 1-overlay either there is a block $0^{s_1}$ in $u$ which is smaller than the  corresponding block in $v$ ($s_1 \leqslant y_{max}$) or there is a block $1^{s_2}$ in $u$ which is bigger than the corresponding block in $v$. In the first case we add to $v'$ the block $0^{s_1+1}$ to the corresponding place; we let $v_1$ denote the obtained subword of $v$. In the second case we add to $v'$ the block $1^{s_2}$ to the corresponding place; we let $u_1$ denote the obtained subword of $u$. The words $v_1$ and $u_1$ are distinguishing subwords for $u$ and $v$.
We have that $|v_1| \leqslant 2l+2y_{max}+r_{t_1}$ and $|u_1| \leqslant n_1+l+y_{max}$. 

If $n_1+l+y_{max} > \frac34n+4$, then $$(n_1+l+y_{max})+(n_1+2y_{max}) > \frac34n+4+\frac34n+2 = \frac32n+6.$$ However, $$\frac32n > \frac{n}2+n\geqslant n_1 + (n_1+2x+y+(l-3)) \geqslant 2n_1 + 3y_{max} + l - 1 \geqslant \frac32n+5;$$ a contradiction. Thus $|u_1| \leqslant \frac34n+4$. 

Since $(r_{t_1}-1) + (r_{i+t_2}-1) \leqslant n_1-l$, we have $$|v_1| \leqslant 2l+2y_{max}+\frac{r_{t_1}+r_{i+t_2}}2 \leqslant 2l+2y_{max}+\frac{n_1-l}2+1 = \frac{n_1}2+\frac32l+2y_{max}+1.$$ 
If $\frac{n_1}2+\frac32l+2y_{max} > \frac34n+3$, then $$\left(\frac{n_1}2+\frac32l+2y_{max}\right)+(n_1+2y_{max}) > \left(\frac34n+3\right) + \left(\frac34n+2\right) = \frac32n+5.$$ We have that $$\left(\frac{n_1}2+\frac32l+2y_{max}\right)+(n_1+2y_{max}) < \frac32(n_1+2x+y_{max}+(l-3))+\frac92 < \frac32n+5;$$ a contradiction. Thus $|v_1| \leqslant \frac34n+4$.

Since for the lengths of the obtained subwords in both cases we have $|u_1| \leqslant \frac34n+4$ and $|v_1| \leqslant \frac34n+4$, we have a proof of the lemma in this case. 

\medskip

\noindent {\bf Case 3.2.} $\beta_1 = \alpha_1$. 

\medskip

Consider a 1-overlay of $v$ on $u$ such that $u_{\longg}$ and $v_{\longg}$ coincide. There is only one such overlay, since $\alpha_1 < \frac{n_1}2$ ($u_{\longg}$ is not special). Since $u \neq v$, there are two blocks $0^s$ and $0^t$ in $u$ such that $0^s$ is bigger than the corresponding block in $v$ and $0^t$ (possible empty) is smaller than the corresponding block in $v$ ($s, t \leqslant  y$). Without loss of generality assume that $s > t$ (in the case $s \leqslant  t$ the proof is similar). We now consider the word $v_1 = 1^{n_1}$ and add $y_{max}+1$ zeros from each of the block $0^x$ to $v_1$ to the corresponding places. We also add $t+1$ zeros to $v_1$ from the block which is bigger than the corresponding block $0^t$ in $u$. We let $v_1'$ denote the obtained subword of $v$. The word $v_1'$ is a subword of the word $v$ and is not a subword of $u$, and $|v_1'| = n_1+2(y_{max}+1)+(t+1)$. If $n_1+2y_{max}+t \leqslant  \frac34n + 2$, then we have a required subword, so it remains to consider the case $n_1+2y_{max}+t >  \frac34n + 2$. 

Consider the subword $v_2 = 0^{y_{max}+1}1(01)^{l-1}$ of $v$. There are two 1-overlays of $v$ on $u$ such that $v_2$ coincides with an occurrence of an equal subword (modulo selection of 0's and 1's from each block). For each 1-overlay there exists either a block of 0's or a block of 1's in $u$ which is smaller than the corresponding block in $v$ and vice versa. Consider three cases.

\smallskip

\noindent {\bf Case 3.2.1.} If for both 1-overlays there are blocks of 1's in $v$ which are bigger than the corresponding blocks of 1's in $u$, then we add them to $v_2$. We let $v_2'$ denote the obtained subword  of $v$. The word $v_2'$ is a subword of the word $v$ and is not a subword of $u$, and $|v_2'| \leqslant n_1+l+y_{max}$, since $v_2'$ has at most $n_1$ occurrences of 1 and $l+y_{max}$ occurrences of 0. If $n_1+l+y_{max} >  \frac34n + 4$, then $$\frac32n = \frac{n}2+n > n_1+(n_1+2x+y+t+(l-4)) > 2n_1+3y_{max}+t+l-2 =$$ $$(n_1+l+y_{max})+ (n_1+2y_{max}+t) - 2 > \left(\frac34n + 4\right)+\left(\frac34n + 2\right) - 2 = \frac32n+4;$$ a contradiction. Thus $|v_2'| < \frac34n+4$. In this case the lemma is proved.

\medskip

\noindent {\bf Case 3.2.2.} If for both 1-overlays there are blocks $0^{s_1}$ and $0^{s_2}$ in $u$ which are smaller than the corresponding blocks of 0's in $v$, then we add blocks $0^{s_1+1}$ and $0^{s_2+1}$ to $v_2$ to the corresponding places. We let $v_2''$ denote the obtained subword of $v$. The word $v_2''$ is a subword of the word $v$ and is not a subword of $u$ and $|v_2''| \leqslant 2l+3y_{max}$ since $|v_2| = 2l+y_{max}$ and we add $s_1+s_2 \leqslant 2y_{max}$ occurrences of 0. If $2l+3y_{max} >  \frac34n + 4$, then $$2n + y_{max} = 2(n_1+2x+y+t+(l-4))+y_{max} > 2n_1+7y_{max}+2t+2l-4 = $$ $$(2l+3y_{max}) + 2(n_1+2y_{max}+t) - 4 > \left(\frac34n + 4\right)+2\left(\frac34n + 2\right)-4= \frac94n+4,$$ which implies $y_{max} > \frac{n}4+4$. Then $$n > (n_1+2y_{max}+t)+y_{max} > \left(\frac34n + 2\right) + \left(\frac{n}4+4\right) = n +6;$$ a contradiction. Thus $|v_2''| < \frac34n+4$. In this case the lemma is proved.

\medskip
 
\noindent {\bf Case 3.2.3.} If for one 1-overlay there is a block of 1's in $v$ which is bigger than the corresponding block of 1's in $u$ and for the other 1-overlay there is a block of 0's in $u$ which is smaller than the corresponding block of 0's in $v$, then for the first 1-overlay there are blocks $1^{t_1}$ and $1^{t_2}$ in $u$ such that $1^{t_1}$ is smaller than the corresponding block in $v$ and $1^{t_2}$ is bigger than the corresponding block in $v$. Assume without loss of generality that $t_1 < t_2$. Let $0^{s}$ be the block of 0's in $u$ which is smaller than the corresponding block in $v$ for the second 1-overlay. We add $t_1$ occurrences of 1 and $s$ occurrences of 0 to $v_2$. We let $v_2'''$ denote the obtained subword word of $v$. The word $v_2'''$ is a subword of the word $v$ and is not a subword of $u$, and $|v_2'''| \leqslant 2l+\frac{n_1-l}2+2y_{max}$, since $|v_2| = 2l+y_{max}$ and we added at most $\frac{n_1-l}2$ occurrences of 1 and and at most $y_{max}$ occurrences of 0. If $2l+\frac{n_1-l}2+2y_{max} > \frac34n+4$, then $$\frac32n = \frac32(n_1+2x+y+t+(l-4)) > (n_1+2y_{max}+t)+\left(2l+\frac{n_1-l}2+2y_{max}\right)-3 > $$ $$\left(\frac34n + 2\right) + \left(\frac34n+4\right) - 3 = \frac32n+3;$$ a contradiction. Thus $|v_2'''| < \frac34n+4$. In this case the lemma is proved.

\medskip

\noindent {\bf Case 4.} $(a,b) = (2,\geqslant  3)$. 

\medskip

In this case we have $u_{\longg} = 0^{x}1^{\alpha_1}0^{x}1^{\alpha_2}$ and $v_{\longg} = 0^{x}1^{\beta_1}0^{x}1^{\beta_2}\cdots 0^x1^{\beta_b}$, for some integers $x$, $\beta_1$, \dots, $\beta_b$ and $\alpha_1 \leqslant \alpha_2$. Since $u_{\longg}$ is not special and $b \geqslant 3$, we have $\alpha_2 > \frac{n}2 > \alpha_1$ and $\beta_i \leqslant \frac{n}2$ for each $1 \leqslant i \leqslant b$. If there is $\beta_i \neq \alpha_1$ then the proof is similar to the proof in Case {\bf 3.1} for the words $u_{\longg} = 0^{x}1^{\alpha_1}0^{x}1^{\alpha_2}$ and $v_{\longg} = 0^{x}1^{\beta_i}0^{x}1^{n_1-\beta_i}$, since $(\alpha_1,\alpha_2) \neq (\beta_i, n_1-\beta_i)$. Assume that $\beta_i = \alpha_1$ for each $i$. Since $u$ is not special, we have $\alpha_1 \neq \frac{n_1}3$. Then $b \geqslant 4$. So, we can prove the lemma as in the Case {\bf 3.1} for the words $u' = u_{\longg} = 0^{x}1^{\alpha_1}0^{x}1^{\alpha_2}$ and $v' = 0^{x}1^{2\beta_1}0^{x}1^{n_1-2\beta_1}$ since $(\alpha_1,\alpha_2) \neq (2\beta_1, n_1-2\beta_1)$.  \qed 

\begin{lemma} \label{lemma 5}
    Let $u_{\longg}$ and $v_{\longg}$ be special words. Then there is a distinguishing subword  of length at most $\frac34n+4$ for $u$ and $v$.
\end{lemma}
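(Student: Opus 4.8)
The plan is to use the rigidity of special words together with Proposition~\ref{prop popular idea} as the main engine. First I would record the invariants of the two skeletons: since $u_{\longg}$ and $v_{\longg}$ are special with $t=x$, each is determined by its number of big blocks ($a$, resp.\ $b$), its minimal distance $m_u=dist(u_{\longg})$ (resp.\ $m_v$), and its type, and the count of $1$'s pins these down via $n_1=am_u$, $(a+1)m_u$ or $(a+2)m_u$ according to the type (similarly for $v$). I would also note that $u$ is obtained from $u_{\longg}$ by inserting the small blocks of $0$'s, i.e.\ the blocks recorded in $u_{\short}$, into the blocks of $1$'s, so the only freedom beyond the skeleton lies in $u_{\short}$ and in its placement relative to the big blocks --- which is precisely what the relation $\cong$ of Definition~\ref{def_cong} is designed to measure.

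The proof then splits according to whether $u_{\longg}=v_{\longg}$ as cyclic words. If $u_{\longg}\neq v_{\longg}$, the discrepancy is already visible in the $1$-structure: either $m_u\neq m_v$, or the types differ, or the doubled blocks $1^{2m}$ sit differently, or $a\neq b$. In each situation I would exhibit a short subword $u'$ containing all $n_1$ ones and only a bounded number of big blocks $0^x$ (for instance a pattern of the form $0^x1^{m_u}0^x1^{n_1-m_u}$, or a fixed arrangement of the two anchors $1^{2m}$) that is realized in exactly one of $u,v$. The doubled blocks, when present (types two and three), act as anchors making such a subword turn-rigid, so that the only turn translating $u'$ to an equal subword is the identity; in the purely periodic first-type situation one instead breaks the residual symmetry using the small blocks, as discussed below. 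Proposition~\ref{prop popular idea} then upgrades $u'$ to a distinguishing subword, and the length check $|u'|+y+1\le\frac34 n+4$ follows from $n_1\le\frac{n}{2}$, $x\ge2$ and the special counts above.

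If $u_{\longg}=v_{\longg}$, then $a=b$, $m_u=m_v$ and the types agree, so $u$ and $v$ share the same big-block skeleton and differ only in the insertion of their small blocks. Here I would compare $u_{\short}$ and $v_{\short}$. If $u_{\short}\neq v_{\short}$, a distinguishing subword is obtained from the differing small-block pattern together with one big block $0^x$. If $u_{\short}=v_{\short}$ but $u_{\short}\ncong v_{\short}$ in the sense of Definition~\ref{def_cong}, then the placement of the small blocks relative to a big block differs between $u$ and $v$, and adjoining a block $0^x$ to the common word $u_{\short}$ separates them. Finally, if $u_{\short}\cong v_{\short}$, the two insertions are compatible and one concludes that $u$ and $v$ coincide, contradicting $u\neq v$. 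Throughout, the constructed subword consists of all $n_1$ ones, $O(1)$ big blocks, and (a subword of) $u_{\short}$, and I would bound its length by $\frac34 n+4$ using $x\ge2$, $n_1\le\frac{n}{2}$ and the relation $n_1=(a+O(1))m$.

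The main obstacle is the near-total symmetry of first-type skeletons $(0^x1^m)^a$: here $u_{\longg}$ is periodic, so it carries no unioccurrent or turn-rigid witness by itself, and the candidate subwords above are a priori fixed by several turns of the skeleton rather than just the identity. One must therefore break the symmetry using the small blocks while simultaneously controlling all the competing alignments of the big blocks --- exactly the role played by the relation $\cong$, which is why it is introduced only for special words. The length estimates are also tightest precisely in this regime, when $x=2$ and $n_1$ is close to $\frac{n}{2}$, so each case requires a careful accounting of big blocks, small blocks, and the slack coming from $y$.
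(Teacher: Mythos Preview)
Your proposal has a genuine gap in the case $u_{\longg}=v_{\longg}$: the implication ``$u_{\short}\cong v_{\short}\Rightarrow u=v$'' is false. The relation $\cong$ of Definition~\ref{def_cong} only asks that \emph{one} big block can be added consistently to the common word $u_{\short}=v_{\short}$; it says nothing about the remaining $a-1$ big blocks, and with a periodic skeleton these need not align. Concretely, take $x\ge 2$, $m=2$, the common second-type skeleton $u_{\longg}=v_{\longg}=0^x1^20^x1^20^x1^4$, and set
\[
u=0^x\,1\,0\,1\,0^x\,1^2\,0^x\,1^4,\qquad v=0^x\,1^2\,0^x\,1^2\,0^x\,1\,0\,1^3.
\]
Then $u_{\short}=v_{\short}=01^8$ (cyclically), and one checks $u_{\short}\cong v_{\short}$ (e.g.\ $0^x101^7$ is a subword of both $u$ and $v$), yet $u\neq v$: the multiset of $1$-block lengths is $\{1,1,2,4\}$ for $u$ and $\{1,2,2,3\}$ for $v$. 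This is precisely why the paper's proof does not stop at $\cong$ but has an entire Case~3 (with the intricate subcases 3.2.1--3.2.4) devoted to the situation $u_{\short}^{-\alpha,\beta}\cong v_{\short}^{-\alpha,\beta}$.

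A second issue is that your length estimates are asserted rather than proved. The bound $|u'|+y+1\le\frac34n+4$ for $u'=0^x1^{m_u}0^x1^{n_1-m_u}$ amounts to $n_1+2x+y\le\frac34n+3$, and this does \emph{not} follow from $n_1\le\frac n2$ and $x\ge2$ alone (for instance when $a=b=2$ and $x$ is close to $n_0/2$). The paper handles this by pairing such a witness with a second, ``block-count'' witness $(01)^{l-1}0^{x}\cdots$ and summing the two failing inequalities to reach a contradiction (see Case~1.2), and more generally by first establishing and then exploiting the dichotomy~(\ref{ineq: 1}): either $n_1+(a+1)x$ is small, or one has quantitative control on $n_{\short}$ and $l$. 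Your plan to run everything through Proposition~\ref{prop popular idea} does not provide this second witness, so the length checks in both of your main cases are incomplete.
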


\textit{Proof.} We recall that a special word has at least two blocks of 0's, that is, $a \geqslant  2$ and $b \geqslant  2$, and that  $dist(u_{\longg})$ denotes the length of the shortest block of 1's in a special word. 

First we prove this lemma in the case  $dist(u_{\longg}) \neq dist(v_{\longg})$:

\medskip

\noindent {\bf Case 1.} $dist(u_{\longg}) \neq dist(v_{\longg})$. 

\medskip

 The proof in the case $a = b = 2$ is similar to the proof in Case {\bf 3.1} of Lemma~\ref{lemma 4}. Now assume that $a \geqslant  3$.

\medskip

\noindent {\bf Case 1.1.} $a \geqslant  3$ and $b \geqslant  3$. 

\medskip

Assume that $dist(u_{\longg}) > dist(v_{\longg})$. In this case we can take the following word $v_1$, which is a subword of $v$ and not a subword of $u$: the word $v_1$ contains $n_1$ occurrences of 1 and two blocks $0^{y+1}$ at distance   $dist(v_{\longg})$. Since $a \geqslant  3$, we have $$|v_1| = n_1 + 2(y+1) \leqslant  n_1 + \frac{ax+y}{2}+2 \leqslant  \frac{n_1+n_0}{2} + \frac{n_1}{2} + 2 \leqslant  \frac{n}2 + \frac{n}{4} + 2\leqslant  \frac34n + 2.$$

\medskip

\noindent {\bf Case 1.2.} $a \geqslant  3$ and $b = 2$ (or, symmetrically, $a = 2$ and $b \geqslant  3$). 
\medskip

Without loss of generality, consider the case $a \geqslant  3$ and $b = 2$. In this case $dist(u_{\longg}) < dist(v_{\longg})$, since $dist(u_{\longg}) \leqslant  \frac{n_1}3$ and $dist(v_{\longg}) \geqslant  \frac{n_1}3$  (the latter inequality comes from the fact that for $b=2$, we have either $v_{\longg}=0^x1^\frac{n_1}{2}0^x1^\frac{n_1}{2}$ or $v_{\longg}=0^x1^\frac{n_1}{3}0^x1^\frac{2n_1}{3}$, which gives in the first case $dist(v_{\longg}) = \frac{n_1}{2}$ and in the second case $dist(v_{\longg}) = \frac{n_1}{3}$). Consider two subwords of $u$: a subword $u_1$ containing $n_1$ occurrences of 1 and two blocks $0^x$ at distance $dist(u_{\longg})$ and a subword $u_2$ containing one 0 and one 1 from each block and three blocks $0^x$. Both words $u_1$ and $u_2$ are subwords of the word $u$ and are not subwords of $v$. So either we have a distinguishing subword, or $|u_1| = n_1+2x \geqslant  \frac34n+5$ and $|u_2| = 2l+3(x-1) \geqslant  \frac34n+5$. Summing the second inequality with the tripled first inequality we get: $$3(n_1+2x)+(2l+3x-3) \geqslant  3n+20.$$ 
However, we have $$3n +20\geqslant  3(n_1+3x+(l-3))+20 \geqslant  3n_1+9x+2l+11,$$ so we have a contradiction.

\medskip
From now on we assume that $dist(u_{\longg}) = dist(v_{\longg}) = k$. Then $u_{\longg}$ and $v_{\longg}$ are of the form 

$$0^x1^k \cdots 0^x1^k(0^x|\varepsilon)1^k0^x1^k\cdots  0^x1^k(0^x|\varepsilon)1^k0^x1^k \cdots 0^x1^k,$$
where the notation $(0^x|\varepsilon)$ is the standard notation for regular expressions meaning that we either take $0^x$ or the empty word $\varepsilon$. So, if $u_{\longg}$ or $v_{\longg}$ is of type 1, then we choose $0^x$ in both places; if $u_{\longg}$ or $v_{\longg}$ is of type 2, then we choose $0^x$ in one of the places and $\varepsilon$ in the other place; if $u_{\longg}$ or $v_{\longg}$ is of type 3, then we choose $\varepsilon$ in both places.
We now add to $u_{\longg}$ and $v_{\longg}$ blocks of 0's from $u$ and $v$ which are at positions of $(0^x|\varepsilon)$. These blocks are either $0^x$ (if we had $0^x$ in $u_{\longg}$ or analogously $v_{\longg}$), or they can be shorter  (if we had $\varepsilon$ in $u_{\longg}$ or analogously $v_{\longg}$). The new subwords are of the following form: 
$$0^x1^k \cdots 0^x1^k0^{\alpha_u}1^k0^x1^k\cdots  0^x1^k0^{\beta_u}1^k0^x1^k \cdots 0^x1^k,$$ 
$$0^x1^k \cdots 0^x1^k0^{\alpha_v}1^k0^x1^k\cdots  0^x1^k0^{\beta_v}1^k0^x1^k \cdots 0^x1^k,$$
where $0 \leqslant  \alpha_u,  \alpha_v, \beta_u,\beta_v \leqslant  x$; we denote them by  $u_{\longg}^{+\alpha, \beta}$ and $v_{\longg}^{+\alpha, \beta}$, respectively. So, $0 \leqslant  a - b \leqslant  2$. Let $u_{\short}^{-\alpha,\beta}$ be the subword of $u$ which contain $n_1$ occurrences of 1 and all small blocks (all blocks except $0^x$, $0^{\alpha}$ and $0^{\beta}$). In other words, $u_{\short}^{-\alpha,\beta}$ is obtained from $u_{\short}$ by removing two blocks $0^{\alpha_u}$ and $0^{\beta_u}$. The word $v_{\short}^{-\alpha,\beta}$ is defined analogously. 

Consider the subword $v' = 1^{n_1}0^{x}$ of $v$. We now consider a 1-overlay of $v$ on $u$ such that $v'$ coincides with an occurrence of an equal subword in $u$. Since $u$ and $v$ are not equal, for each such 1-overlay we can find and add to $v'$ a block of 0's in $v$ which is bigger than the corresponding block in $u$. Adding such blocks of 0's to $v'$, we get a subword $v''$ of $v$ which is not a subword of $u$, and $|v''| \leqslant  n_1+(a+1)x$. So, either we have a desired distinguishing subword, or $n_1+(a+1)x \geqslant  \frac34n+5$. So, in the rest of the proof we assume that 
\begin{equation}
    n_1+(a+1)x \geqslant  \frac34n+5 \quad \text{and} \quad n_1+(b+1)x \geqslant  \frac34n+5. \label{ineq: 1} 
\end{equation}

\medskip


We can now describe the plan of the rest of the proof. We distinguish between two main cases: Case {\bf 3} treats the situation when $u_{\short}^{-\alpha,\beta} \cong v_{\short}^{-\alpha,\beta}$ (here we use notation from Definition \ref{def_cong}) and $u_{\short}^{-\alpha,\beta}$ is periodic with a period containing $ks$ occurrences of 1 for some integer $s$; Case {\bf 2} treats the opposite case. Case {\bf 2} is divided into two subcases corresponding to whether inequality $l\geqslant  \frac{n}4-(b-2)(x-2)$ holds or not. Case {\bf 3} is divided into subcases {\bf 3.1} corresponding to $s \geqslant  2$, and {\bf 3.2} corresponding to $s = 1$. The latter case is further split into two subcases depending on whether $\alpha_u = \alpha_v$ or not (note that in the case of equality we also have $\beta_u = \beta_v$ due to the conditions of Case {\bf 3}).










\medskip

\noindent {\bf Case 2.} $u_{\short}^{-\alpha,\beta} \ncong v_{\short}^{-\alpha,\beta}$ or $u_{\short}^{-\alpha,\beta} \cong v_{\short}^{-\alpha,\beta}$ and $u_{\short}^{-\alpha,\beta}$ is either non-periodic or periodic, but there is no period with $ks$ blocks of 1's for some integer $s$. 
\medskip

Without loss of generality we assume that $|u_{\short}^{-\alpha,\beta}| \leqslant  |v_{\short}^{-\alpha,\beta}|$ (the proof in the case $|u_{\short}^{-\alpha,\beta}| \geqslant  |v_{\short}^{-\alpha,\beta}|$  is symmetric). 

\medskip

\noindent {\bf Case 2.1.} $l \geqslant  \frac{n}4-(b-1)(x-2)+2$.

\medskip

Consider the subword $v_1 = 1^{n_1}0^x$ of $v$ and fix some of its occurrence in $v$. There are at most $a$ ways to take an occurrence of a subword of $u$ which is equal to $v_1$. For each of these $a$ occurrences we can consider a 1-overlay of $v$ on $u$ such that the chosen occurrence of $v_1$ in $v$  coincides with the chosen occurrence of $v_1$ in $u$.
Recall that we can assume that $dist(u_{\longg}^{+\alpha,\beta}) = dist(v_{\longg}^{+\alpha,\beta})$, since the case of inequality has been considered earlier (Case {\bf 1}). 

If $u_{\short}^{-\alpha,\beta} \ncong v_{\short}^{-\alpha,\beta}$, then for each 1-overlay there is a block of 0's in $v$ which is not one of big blocks $0^x$ or one of the two blocks $0^{\alpha}$ and $0^{\beta}$ we removed from $u_{\short}$, and which is bigger than the corresponding block in $u$. We now add all such blocks to $v_1$, and we let $v_1'$ denote the obtained subword of $v$. 

If $u_{\short}^{-\alpha,\beta} \cong v_{\short}^{-\alpha,\beta}$ and $u_{\short}^{-\alpha,\beta}$ does not have a period with $ks$ blocks of 1's, then the subwords $u_{\short}^{-\alpha,\beta}$ and $v_{\short}^{-\alpha,\beta}$ coincide only for one 1-overlay (if they coincide for more than one overlay, then they have a period with $ks$ blocks of 1's). For each 1-overlay for which they do not coincide, we add blocks of 0's to $v_1$ in the same way as in the case $u_{\short}^{-\alpha,\beta} \ncong v_{\short}^{-\alpha,\beta}$. Consider a 1-overlay for which $u_{\short}^{-\alpha,\beta}$ and $v_{\short}^{-\alpha,\beta}$ coincide. For this 1-overlay one of the blocks $0^{\alpha_u}$ and $0^{\beta_u}$ is smaller than the corresponding block in $v$. We then add a block $0^{\alpha_u+1}$ or $0^{\beta_u+1}$ to $v_1$ to the corresponding place, and we let $v_1'$ denote the obtained subword of $v$. 

The word $v_1'$ is the subword of the word $v$ and is not a subword of $u$. When we constructed $v_1'$, we did not add at least $l-b-a$ small blocks. So, we added to $v_1$ (of length $n_1+x$) at most $n_{\short,v} - (l-a-b)+1$ zeros. Indeed, if $u_{\short}^{-\alpha,\beta} \ncong v_{\short}^{-\alpha,\beta}$, then we added zeros to $v_1$ from only small blocks. If $u_{\short}^{-\alpha,\beta} \cong v_{\short}^{-\alpha,\beta}$, then $a-1$ times we added zeros from small blocks which are not equal to $0^{\alpha_v}$ and $0^{\beta_v}$, and one time we added the block $0^{\alpha_u+1}$ or $0^{\beta_u+1}$ which is less then $0^{\alpha_v+1}$ or $0^{\beta_v+1}$ since $\alpha_u+\beta_u = \alpha_v+\beta_v$. So,

$$|v_1'|  \leqslant 
(n_1 + x) + n_{\short,v} - (l-a-b-1).  $$

Since $2 \geqslant  a-b \geqslant  0$, we have 
$$(n_1 + x) + n_{\short,v} - (l-a-b-1) \leqslant n_1+n_{\short,v}-l+2b+x+3.  $$

Now since $n_1+n_{\short,v}+bx = |v| = n$, we have

$$ n_1+n_{\short,v}+bx-bx-l+2b+x+3 \leqslant   n - l - (b-1)(x-2)+5 \leqslant  \frac34n  + 3,$$ where the latter inequality comes from the condition $l \geqslant  \frac{n}4-(b-1)(x-2)+2$ of Case {\bf 2.1}. Combining this series of inequalities, we obtain $|v_1'|  \leqslant \frac34n  + 3.$

\medskip 

\noindent {\bf Case 2.2.} $l <  \frac{n}4-(b-1)(x-2)+2$.

\medskip

Consider the subword $u_{\zeros}$ of $u$, which contains $n_1$ occurrences of 1 and one 0 from each block of 0's, and fix some its occurrence in $u$. Then either $v$ does not have a subword which is equal to $u_{\zeros}$ or there is a subword $v_{\zeros} = u_{\zeros}$ of $v$. The first case is simple: the word $u_{\zeros}$ is a subword of $u$ and not a subword of $v$ and $|u_{\zeros}| \leqslant  \frac34n+4$ (we will prove this inequality later, together with the second case). In the second case we proceed depending on the form of the word $u_{\zeros}$. Fix some occurrence $v_{\zeros}$ in $v$.

If $u_{\zeros}$ is not periodic, then there is only one way to take $u_{\zeros}$ in $u$ modulo the selection of 0 from each block of 0's. We now consider the unique 1-overlay of $v$ on $u$ such that such that $v_{\zeros}$ and $u_{\zeros}$ coincide. Since $v \neq u$, there is a block $0^s$ in $v$ which is bigger than the corresponding block of 0's in $u$. We add this block to $v_{\zeros}$ to the corresponding place. Let $v'$ denote the obtained subword of $v$. The word $v'$ is a subword of the word $v$ and is not a subword of $u$. We now estimate its length: $$|v'| \leqslant |v_{zeros}| + (x-1) = n_1 + l + (x-1) $$
Now using the inequality  $l <  \frac{n}4-(b-1)(x-2)+2$ from the condition of Case {\bf 2.2}, we get
$$n_1 + l + (x-1)\leqslant n_1+\frac{n}{4}-(b-1)(x-2)+x+1 = n_1+\frac{n}4-(b-2)(x-2) + 3\leqslant  \frac34n+3,$$ where the last inequality follows from the inequalities $b \geqslant  2$, $x \geqslant  2$ and $n_1 \leqslant  \frac{n}2$. In particular, we have $|u_{\zeros}| \leqslant \frac34n+4$. 

If $u_{\zeros}$ is periodic, then $u_{\zeros} = (1^{\alpha_1}01^{\alpha_2}0\cdots 1^{\alpha_s}0)^\frac{l}s$ for some integer $s$. Consider the word $u_1 = (1^{\alpha_1}01^{\alpha_2}0\cdots 1^{\alpha_s}0)(10)^{l-s}$ and fix an occurrence of it in $u$. The word $u_1$ has at most $\frac{n+l}{2}$ occurrences of 1 and $l$ occurrences of 0 (proved in Lemma~\ref{lemma 2}, Case {\bf 2.1.2}). Then either $v$ does not have a subword which is equal to $u_1$ or there is a subword $v_1 = u_1$ of $v$. In the first case $u_1$ is a subword of the word $u$ and not a subword of $v$, and we  show that $|u_1| \leqslant  \frac34n+4$. In the second case we add one block $0^x$ to $v_1$. We let $v_1'$ denote the obtained subword of $v$ and fix an occurrence of it in $v$. Then there are at most $a$  1-overlays of $v$ on $u$ such that $v_1'$ coincides with an occurrence of an equal subword. For each 1-overlay we can add blocks as in Case {\bf 2.1} depending on whether $u_{\short}^{-\alpha,\beta} \cong v_{\short}^{-\alpha,\beta}$ or not. We let $v_1''$ denote the obtained subword of $v$. The word $v_1''$ has at most $\frac{n_1+l}{2}$ occurrences of $1$, one block $0^x$, one $0$ from each of the remaining $b-1$ blocks $0^x$ and at most $n_{\short}$ added 0's from other blocks of 0's. Now, by inequality \eqref{ineq: 1}, we have 
\begin{equation}
    n_{\short,v} = n_0-bx = (n_0+n_1+x) - (n_1+(b+1)x) \leqslant  n+x-\frac34n - 5 = \frac{n}4-5+x. \label{ineq: 2}
\end{equation} 
So, $v_1''$ is a subword of $v$ and is not a subword of $u$; we now estimate its length. Since $\frac{n_1+n_{\short,v}+bx}{2} = \frac{n}2$, by \eqref{ineq: 2} and due to inequality  $l <  \frac{n}4-(b-1)(x-2)+2$ from the condition of Case {\bf 2.2}, we have the following:
$$|v_1''|\leq \frac{n_1+l}2 + x + n_{\short,v}+b-1 \leqslant  
\frac{n_1+n_{\short,v}+bx}{2} - \frac{bx}2 + \frac{l}{2} + \frac{n_{\short,v}}2 +x+b-1 \leqslant $$ 

$$\leqslant  \frac{n}2 - \frac{bx}{2} + \frac{\frac{n}4-(b-1)(x-2)+2}{2} + \frac{\frac{n}4-5+x}2 + x + b-1 =$$
$$=\frac34n - \frac{2(b-2)(x-2)}{2}+\frac12 \leqslant  \frac34n+\frac12,$$ since $b \geqslant  2$ and $x \geqslant  2$. In particular, $|u_{\zeros}| \leqslant \frac34n+4.$

\bigskip 
\noindent {\bf Case 3.}  $u_{\short}^{-\alpha,\beta} \cong v_{\short}^{-\alpha,\beta}$ and $u_{\short}^{-\alpha,\beta}$ is periodic with period containing $ks$ blocks of 1's. In other words, there exist positive integers $p_1, \dots, p_i$ and $q_1, \dots, q_i$ such that $u_{\short}^{-\alpha,\beta} = (1^{p_1}0^{q_1}\cdots 1^{p_i}0^{q_i})^{\frac{n_1}{ks}}$, where $p_1+\ldots+p_i = ks$, for some integer $s$.

\medskip

Since $u_{\short}^{-\alpha,\beta} \cong v_{\short}^{-\alpha,\beta}$, $dist(u) = dist(v)$ and $n_{0,u} = n_{0,v}$, we have $\alpha_u+\beta_u = \alpha_v + \beta_v$. So, if one of the words $u_{\longg}$ and $v_{\longg}$ is a word of the first type, then the other 
word is a word of the first type as well. Then both words have period with $ks$ blocks of 1's, and since $u_{\short}^{-\alpha,\beta} \cong v_{\short}^{-\alpha,\beta}$, we have $u = v$, which gives a contradiction. Hence $u_{\longg}$ and $v_{\longg}$ are words of the second or the third type. In particular, $0 \leqslant  a - b \leqslant  1$. Consider two cases depending on $s$.

\medskip

\noindent {\bf Case 3.1.} $s \geqslant  2$. We consider two subcases depending on the types of the words $u_{\longg}$ and $v_{\longg}$. The proofs of the two cases are similar; however, we provide details for both.

\medskip

\noindent {\bf Case 3.1.1.} At least one of the words $u_{\longg}$ and $v_{\longg}$ is a word of the second type.

\medskip

Assume that $u_{\longg}$ is a word of the second type, that is, in $u_{\longg}^{+\alpha,\beta}$ we have $\alpha<k, \beta=k$. Then $n_1 = k(a+1)$. Consider the subword $u_1 = (1^{p_1}0^{q_1}\cdots 1^{p_i}0^{q_i})1^{n_1-ks}$ of $u$. There are at most $\frac{k(a+1)}{ks} \leqslant  \frac{a+1}{2}$ ways to choose an occurrence of a subword of $u_{\short}^{-\alpha,\beta}$ which is equal to $u_1$. Then either $v_{\short}^{-\alpha,\beta}$ does not have a subword which is equal to $u_1$ or there is a subword $v_1 = u_1$ of $v$. In the first case we add one block $0^x$ to $u_1$. Let $u_1'$ denote the obtained subword of $u$. The word $u_1'$ is a subword of the word $u$ and is not a subword of $v$, and we later prove that $|u_1'| \leqslant  \frac34n+4$.

In the second case we fix some occurrence of $v_1$ in $v$. There are at most $\frac{a+1}2$ different 1-overlays of $v$ on $u$ such that $v_{\short}^{-\alpha,\beta}$ coincides with $u_{\short}^{-\alpha,\beta}$ (in particular, $v_1$ coincides with an occurrence of an equal subword). For each 1-overlay there is a block $0^x$ in $v$ which is bigger than the corresponding block $0^{\alpha_u}$ in $u$. So, for each 1-overlay we can add the block $0^{\alpha_u+1}$ to $v_1$ to the corresponding place, and for one of the overlays we take the block $0^x$ instead of the block $0^{\alpha_u+1}$. Let $v_1'$ denote the obtained subword of $v$. The word $v_1'$ is a subword of the word $v$ and is not a subword of $u$, and it contains $n_1$ occurrences of 1, at most $\frac{n_{\short}-\alpha}{2}$ occurrences of 0 from small blocks, at most $\frac{a+1}{2}-1$ blocks $0^{\alpha+1}$ and one block  $0^x$. Since $\alpha+1 \leqslant x$, then $$|v_1'| \leqslant  n_1 + \frac{(n_{\short}-\alpha)}2 + \frac{(a-1)(\alpha+1)}2 + x = n_1 + \frac{n_{\short}}2 + \frac{(a-2)(\alpha+1)+1}2 + x  \leqslant $$ $$ \leqslant  n_1 + \frac{n_{\short}}2 + \frac{ax}2+\frac12 
= n_1 + \frac{n_0}{2} + \frac12 \leqslant  \frac34n + 1.$$

Since $|u_1'| = |u_1|+x$ and $|v_1'| \geqslant |u_1|+x$, we have $|u_1'| \leqslant \frac34n+1$.

\medskip

\noindent {\bf Case 3.1.2.} $u_{\longg}^{+\alpha,\beta}$ and $v_{\longg}^{+\alpha,\beta}$ are words of the third type.

\medskip

In this case $a = b$ and $n_{\short,u} = n_{\short,v} = n_{\short}$. Let $\beta_u \leqslant  \alpha_u, \alpha_v, \beta_v$. Then $n_1 = k(a+2)$. Consider the subword $u_1 = (1^{p_1}0^{q_1}\cdots 1^{p_i}0^{q_i})1^{n_1-ks}$ of $u$. There are at most $\frac{k(a+2)}{ks}  =  \frac{a+2}{2}$ ways to choose an occurrence of a subword of $u_{\short}^{-\alpha,\beta}$ which is equal to $u_1$. Then either $v_{\short}^{-\alpha,\beta}$ does not have a subword which is equal to $u_1$ or there is a subword $v_1 = u_1$ of $v_{\short}^{-\alpha,\beta}$. In the first case we add to $u_1$ a block $0^x$; and we let $u_1'$ denote the obtained subword of $u$. The word $u_1'$ is a subword of the word $u$ and is not a subword of $v$. We later prove that $|u_1'| \leqslant  \frac34n+4$. 

In the second case we fix some occurrence $v_1$ of $v$, where all 0's of this occurrence are taken from small blocks, except for $0^{\alpha_v}$ and $0^{\beta_v}$ (we do not take any 0's from these blocks). 
There are at most $\frac{a+2}2$ different 1-overlays of $v$ on $u$ such that $v_{\short}^{-\alpha,\beta}$ coincides with $u_{\short}^{-\alpha,\beta}$ (in particular, $v_1$ coincides with an occurrence of a subword equal to $u_1$). For each 1-overlay there is a block $0^x$ in $v$ which is bigger than the corresponding block $0^{\alpha_u}$ or $0^{\beta_u}$ in $u$. So, for each 1-overlay we can add the block $0^{\alpha_u+1}$ or $0^{\beta_u+1}$ to $v_1$ to the corresponding place. Notice that there is at least one 1-overlay for which we can add the block $0^{\beta_u+1}$ (which is not the case for $0^{\alpha_u+1}$). Besides that, in the resulting subword we take one block $0^x$ instead of one of the blocks $0^{\alpha_u+1}$ (or instead of one of the blocks $0^{\beta_u+1}$ if we did not add the block $0^{\alpha_u+1}$). Let $v_1'$ denote the obtained subword of $v$. The word $v_1'$ is a subword of the word $v$ and is not a subword of $u$ and it contains $n_1$ occurrences of 1, at most $\frac{n_{\short}-\alpha_v-\beta_v}{2} = \frac{n_{\short}-\alpha_u-\beta_u}{2}$ occurrences of 0 from small blocks, one block $0^{\beta+1}$, one block $0^x$ and at most $\frac{a+2}{2}-2$ blocks $0^{\alpha_u+1}$ or $0^{\beta_u}+1$. Since $\beta_u+1 \leqslant \alpha_u+1 \leqslant x$, we have
$$|v_1'| \leqslant  n_1 + \frac{(n_{\short}-\alpha_u-\beta_u)}2+x+\beta+1+ \frac{(a-2)(\alpha+1)}2 = $$ $$ = n_1 + \frac{n_{\short}}2 + \frac{2x+(\beta+1)+ (a-3)(\alpha+1)+2}2 \leqslant $$ $$\leqslant  n_1 + \frac{n_{\short}}2 + \frac{ax}2 + 1  = n_1 + \frac{n_0}{2} + 1  \leqslant  \frac34n + 1.$$

Since $|u_1'| = |u_1|+x$ and $|v_1'| \geqslant |u_1|+x$, we have $|u_1'| \leqslant \frac34n+1$.

\medskip

\noindent {\bf Case 3.2.} $s = 1$.

\medskip

First we prove the following claim. 


\begin{claim}
    Under the conditions of Case {\bf 3.2}, let $w$ be a distinguishing subword for the words $u$ and $v$. Suppose that $w$ has $n_1$ occurrences of 1, at least one block $0^x$ and at most $\frac23 ax + \frac45 (\alpha_u+\beta_u)+\frac{23}{10}$ occurrences of 0 from $t$ blocks of 0's. If $t \geqslant \frac23a$, then there is a distinguishing subword for the words $u$ and $v$ of length at most $\frac34n+4$.
\end{claim}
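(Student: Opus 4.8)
The plan is to show that the given distinguishing subword $w$ already has length at most $\frac34 n+4$ whenever the parameters lie in a suitable range, and to fall back on a separately constructed short subword otherwise. Since $w$ contains all $n_1$ ones, we have $|w|=n_1+Z$, where $Z\le \frac23 ax+\frac45(\alpha_u+\beta_u)+\frac{23}{10}$ is its number of zeros. Using the identity $n=n_1+ax+n_{\short}$ (there are exactly $a$ big blocks $0^x$) and substituting $\frac34 n=\frac34 n_1+\frac34 ax+\frac34 n_{\short}$, the inequality $|w|\le\frac34 n+4$ follows as soon as
\begin{equation}\label{eq:claim_red}
\frac14 n_1+\frac45(\alpha_u+\beta_u)\le \frac{1}{12}ax+\frac34 n_{\short}+\frac{17}{10}.
\end{equation}
So the whole claim reduces to verifying \eqref{eq:claim_red}, the only left‑hand term not obviously dominated being $\frac45(\alpha_u+\beta_u)$.

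I would then argue by a dichotomy on whether \eqref{eq:claim_red} holds. If it does, then $w$ itself is the required distinguishing subword and we are done; here one uses $n_1\le\frac n2$, the fact that in Case 3 the blocks $0^{\alpha_u},0^{\beta_u}$ lie among the small blocks (at most one equals $0^x$, in the second type), so that $\alpha_u+\beta_u\le n_{\short}+x$, together with the standing inequality \eqref{ineq: 1}. If \eqref{eq:claim_red} fails, the reverse inequality $\frac14 n_1+\frac45(\alpha_u+\beta_u)>\frac{1}{12}ax+\frac34 n_{\short}+\frac{17}{10}$ holds, and—since $n_{\short}$ enters the right side with the large coefficient $\frac34$—this forces $n_{\short}$ to be small; by \eqref{eq:two} the number of blocks $l$ is then small as well, while $ax=n_0-n_{\short}$ is large and $\alpha_u+\beta_u$ (hence $x$) is large. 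In this regime $w$ may genuinely exceed $\frac34 n+4$, so a different subword is needed.

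In the small‑$n_{\short}$ regime I would build a short distinguishing subword exploiting the periodicity of $u_{\short}^{-\alpha,\beta}=(1^{p_1}0^{q_1}\cdots 1^{p_i}0^{q_i})^{n_1/k}$ that defines Case 3.2, together with the relation $n_1=k(a+1)$ or $n_1=k(a+2)$ coming from the second/third type. Since the small‑block skeleton repeats with period containing $k$ ones, the difference between $u$ and $v$ is carried entirely by the placement of the big blocks and of $0^{\alpha},0^{\beta}$ relative to this skeleton, and one may keep only a bounded number of big blocks together with one zero from each small block (a $u_{\zeros}$‑type word) to obtain a subword of length $n_1+l+O(x)$, which is short precisely because $l$ is small here. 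The hypothesis $t\ge\frac23 a$ is what guarantees that $w$ spreads over enough active blocks to localize the $u$–$v$ discrepancy into such a trimmed word. The main obstacle is this last step: proving the trimmed word still \emph{distinguishes} $u$ from $v$, not merely that it is short. This requires bounding the number of admissible occurrences of the candidate subword through the period, exactly as the periodic case of $u_{\zeros}$ is handled in Lemma~\ref{lemma 2}; once distinguishing is secured, the length bound follows by tracking the constants $\frac23$, $\frac45$ and $\frac{23}{10}$ in \eqref{eq:claim_red}, the tightest configuration being $n_1=\frac n2$ with $\alpha_u+\beta_u$ close to $2x$.
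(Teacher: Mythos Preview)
Your dichotomy is right in spirit, but the second branch misses the actual construction and misidentifies the role of the hypothesis $t\ge\frac23 a$.

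In the paper, when $|w|>\frac34 n+4$ one does \emph{not} build a fresh $u_{\zeros}$-type word of length $n_1+l+O(x)$. Instead one keeps the zero-pattern of $w$ and \emph{reduces the ones}: start from $w'=(01)^l$ (only $l$ ones, one per block of $1$'s) and attach to it exactly the same $t$ blocks of zeros that $w$ carried, obtaining $w''$. The crucial point of Case~3.2 is that $u_{\short}^{-\alpha,\beta}$, $v_{\short}^{-\alpha,\beta}$, $u_{\longg}^{+\alpha,\beta}$, $v_{\longg}^{+\alpha,\beta}$ are all periodic with period containing exactly $k$ ones; hence whether a candidate word is a subword of $u$ or of $v$ depends only on the pattern of zero-blocks relative to the blocks of $1$'s, not on how many $1$'s sit in each block. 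So $w''$ is distinguishing \emph{for free}---no occurrence-counting argument as in Lemma~\ref{lemma 2} is needed. Your proposal keeps all $n_1$ ones and tries to trim zeros, which leaves you with a word of length $n_1+l+O(x)$; when $n_1$ is close to $n/2$ this cannot be controlled by the inequality you obtain from the failure of the first branch.

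The length bound for $w''$ is purely arithmetic: from $|w|>\frac34 n+4$ one extracts $ax+\frac35(\alpha_u+\beta_u)+3l-3a<\frac34 n+\frac{9}{10}$ (this is just the complement count, multiplied by $3$), and then $|w''|\le 2l+\bigl(\frac23 a(x-1)+\frac45(\alpha_u+\beta_u)+\frac{23}{10}-t\bigr)$ is compared against it. It is \emph{here}, and only here, that $t\ge\frac23 a$ is used: it enters the final inequality $\frac13 ax-\frac43 a-\frac15(\alpha_u+\beta_u)+t\ge -\frac45$, not the distinguishing argument. Your reading of $t\ge\frac23 a$ as a localisation device for the discrepancy is therefore off the mark.
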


\begin{proof}
Since $\alpha_u+\beta_u = \alpha_v+\beta_v$,  without loss of generality we can assume that $w$ is a subword of the word $u$ and is not a subword of $v$. Notice that there are at least $\frac13 ax + \frac15 (\alpha_u+\beta_u)+(l-(a+2))-\frac{23}{10}$ occurrences of 0 which do not belong to $w$. If $|w| \leqslant  \frac34 n + 4$, then $w$ itself is a desired subword. Otherwise  $$\frac13 ax + \frac15 (\alpha_u+\beta_u)+l-a-2-\frac{23}{10} <  \frac14n-4.$$ Multiplying the inequality by 3, we get an equivalent inequality 
$$ax + \frac35 (\alpha_u+\beta_u)+3l-3a <  \frac34n+\frac{9}{10}.$$ 

Consider the word $w' = (01)^l$ and fix some occurrence of it in $u$. We add to $w'$ the same blocks of 0's as in $w$, and we let $w''$ denote the obtained subword of $u$. Since $u_{\short}^{-\alpha,\beta}$, $v_{\short}^{-\alpha,\beta}$, $u_{\longg}^{+\alpha,\beta}$, $v_{\longg}^{+\alpha,\beta}$ are periodic with period $k$, $w''$ is a subword of the word $u$ and is not a subword of $v$.  Moreover, 
$$|w''| \leqslant  |w'| +\left (\frac23 a(x-1) + \frac45 (\alpha_u+\beta_u)+\frac{23}{10}-t\right) = 2l + \frac23 a(x-1) + \frac45 (\alpha_u+\beta_u)+\frac{23}{10}-t=$$ 
$$=(ax + \frac35 (\alpha_u+\beta_u)+3l-3a) - (l-a) - \left(\frac13ax - \frac43a-\frac15(\alpha_u+\beta_u)+t\right)+\frac{23}{10} < 
\frac34n+4,$$ since $l - a \geqslant  0$, $a \geqslant  2$ and  
$$\frac13ax - \frac43a-\frac15(\alpha_u+\beta_u)+t \geqslant  \frac13ax - \frac43a-\frac25x + \frac23a = \frac13\left(a-\frac65\right)(x-2) - \frac45 \geqslant  -\frac{4}{5},$$
where the two latter inequalities follow from the inequalities $t \geqslant \frac23a$ and $\alpha_u, \beta_u \leqslant x$. The claim is proved. \qedSymC
    
\end{proof}

If $u_{\longg}^{+\alpha,\beta} = v_{\longg}^{+\alpha,\beta}$, then $u = v$ since $u_{\short}^{-\alpha,\beta} \cong v_{\short}^{-\alpha,\beta}$ and $u_{\short}^{-\alpha,\beta}$ is periodic and the length of period is $k$. A contradiction. So, $u_{\longg}^{+\alpha,\beta} \neq v_{\longg}^{+\alpha,\beta}$. Let $$u_{\longg}^{+\alpha,\beta} = 0^{\alpha_u}1^k(0^x1^k)^{r_1-1}0^{\beta_u}1^k(0^x1^k)^{s_1-1},$$ $$v_{\longg}^{+\alpha,\beta} = 0^{\alpha_v}1^k(0^x1^k)^{r_2-1}0^{\beta_v}1^k(0^x1^k)^{s_2-1}.$$ We know that $\alpha_u + \beta_u = \alpha_v + \beta_v$. Without loss of generality we assume that $\alpha_u \geqslant  \alpha_v \geqslant  \beta_v \geqslant  \beta_u$. Consider four cases.

\medskip

\noindent {\bf Case 3.2.1.} $\beta_u < \alpha_u$ and $\beta_u+1 \leqslant  \frac23x$.  

\medskip

Consider the subword $v_1$ of $v$ which contains $n_1$ occurrences of 1 and $\beta_u+1$ occurrences of 0 from each of the blocks $0^x$ and from the block $0^{\alpha_v}$. Fix some occurrence of $v_1$ in $v$. There is at most one 1-overlay of $v$ on $u$ such that small blocks overlay on small blocks and $v_1$ coincides with an occurrence of an equal subword. If such a 1-overlay exists, then there is a block $0^x$ in $v$ which overlays on a block $0^{\alpha_u}$ or $0^{\beta_u}$. We add this block $0^x$ to $v_1$. Otherwise we add any block $0^x$ to $v_1$. Let $v_1'$ denote the obtained subword of $v$. Since $\beta_u+1 \leqslant \alpha_u$ and $\beta_u+1 \leqslant  \frac23x$, the number of occurrences of 0 in $v_1'$ is $$a(\beta_u+1)+x = \left(a-\frac{3}{2}\right)(\beta_u+1)+\frac34(\beta_u+1)+\frac34(\beta_u+1)+x \leqslant$$  
$$\left(a-\frac32\right)\frac23x+\frac34(\alpha_u+\beta_u+1)+x\leqslant  \frac23 ax + \frac45(\alpha_u+\beta_u)+\frac34.$$ Note that $v_1'$ has 0's from at least $a$ blocks of 0's. So, all conditions from the claim above hold. Hence there is a distinguishing subword for the words $u$ and $v$ of length at most $\frac34n+4$.

\medskip

In the following cases we have either $\beta_u = \alpha_u$ or $\beta_u+1>\frac23x$. In both cases we have $\beta_u+1>\frac23\alpha_u-1$. Besides that, since $\alpha_u \geqslant \alpha_v \geqslant \beta_v \geqslant \beta_u$, we have $\beta_v+1>\frac23\alpha_v$.

\medskip

\noindent {\bf Case 3.2.2.} $\alpha_u>\alpha_v$. 

\medskip
Since $\alpha_u+\beta_u= \alpha_v+\beta_v$, we have $\alpha_u > \alpha_v \geqslant \beta_v > \beta_u$. We start building a distinguishing subword by taking  $u_1$ to be the subword of $u$ containing $n_1$ occurrences of 1. We index blocks of 0's in $u_{\longg}^{+\alpha,\beta}$ clockwise, and we split all blocks into groups such that each group contains $t$ blocks with indices $i, i+r_2, i+2r_2, \ldots, i+(t-1)r_2$ for some $i$. We add $\alpha_v+1$ zeros to $u_1$ from every other 
block from each group (either from blocks with indices $i, i+2r_2, i+4r_2, \ldots$ or from blocks with indices $i+r_2, i+3r_2, i+5r_2, \ldots$). It is possible since there is only one block of 0's in $u$ which is smaller than $0^{\alpha_v+1}$ (this is a block $0^{\beta_u}$).  We added $\lceil \frac{t}{2} \rceil \leqslant  \frac23$ blocks of 0's from each group, that is, at most  $\frac{2(a+2)}3$ blocks of $0^{\alpha_v+1}$. We also replaced one block $0^{\alpha_v+1}$ with  $0^x$. Without loss of generality, we can assume that we added 0's exactly from $\frac{2(a+2)}3$ blocks. Let $u_1'$ denote the obtained subword of $u$. The word $u_1'$ is a subword of the word $u$ and is not a subword of $v$. Moreover, since $\alpha_v+1 \leqslant x$ and $\beta_v+1 \geqslant \frac23\alpha_v$, the number of 0's in $u_1'$ has at most $$\left(\frac{2(a+2)}3-1\right)(\alpha_v+1)+x \leqslant  \frac23 ax + \frac43 (\alpha_v+1) \leqslant  \frac23 ax + \frac45 (\alpha_v+\beta_v+1)+\frac43.$$ The last inequality holds since $$\frac45 (\alpha_v+\beta_v+1) \geqslant  \frac45\alpha_v+\frac45\cdot\frac23\alpha_v = \frac43 \alpha_v.$$
Since $\frac45+\frac43 < 2.3$, all conditions from the above claim hold. Hence, there is a distinguishing subword for the words $u$ and $v$ of length at most $\frac34n+4$.

\medskip

\noindent {\bf Case 3.2.3.} $\alpha_u=\alpha_v=\beta_u=\beta_v$. 

\medskip

If $\alpha_u = \alpha_v = x$, then $u = v$, a contradiction. Then $\alpha_u=\alpha_v \leqslant  x-1$. 

Since $u \neq v$, we have $r_1 \neq r_2,s_2$ and $s_1 \neq r_2, s_2$. We index blocks of 0's in $u_{\longg}^{+\alpha,\beta}$ and split all blocks into groups as in Case {\bf 3.2.2} in a way that each group contains $t$ blocks with indices $i, i+r_2, i+2r_2, \ldots, i+(t-1)r_2$ for some $i$.  
If the blocks $0^{\alpha_u}$ and $0^{\beta_u}$ are in the different groups, then we can add $\alpha_v+1$ zeros to $u_1$ from every second block from each group and
proceed with the proof as in Case {\bf 3.2.2}. Assume that blocks $0^{\alpha_u}$ and $0^{\beta_u}$ are in the same group, and the blocks $0^{\alpha_u}$ and $0^{\beta_u}$ have indices $i$ and $i+hr_2$, respectively.

Since $\{r_1,s_1\} \neq \{r_2,s_2\}$,  $0^{\alpha_u}$ and $0^{\beta_u}$ are not consecutive blocks in their group. That is, $h \neq 1$ and $h \neq t-1$. In particular, $t \geqslant  4$. From each group which does not contain the blocks $0^{\alpha_u}$ and $0^{\beta_u}$, we can add $0^x$ to $u_1$ from every second block (from blocks with indices $i, i+2r_2, i+4r_2, \ldots$). If $h$ is even, then for the group containing the blocks $0^{\alpha_u}$ and $0^{\beta_u}$ we can add $0^x$ to $u_1$ from blocks with indices $i+r_2, i+3r_2, i+5r_2, \ldots$. If $h$ is odd, then for the group containing the blocks $0^{\alpha_u}$ and $0^{\beta_u}$ we can add $0^x$ to $u_1$ from blocks with indices $i+r_2, i+3r_2, \ldots, i+(h-2)r_2, i+(h-1)r_2, i+(h+1)r_2, i+(h+3)r_2, i+(h+5)r_2, \ldots$. It is not difficult to verify that for any $t \geqslant  4$ we add at most $\frac23(a+2)$ blocks $0^x$ to $u_1$. We let $u_1'$ denote the obtained subword of $u$. The proof in this case can be completed similarly to Case {\bf 3.2.2}.

\medskip

\noindent {\bf Case 3.2.4.} $\alpha_u=\alpha_v > \beta_u = \beta_v$.  

\medskip

If $\alpha_u = \alpha_v = x$, then $u = v$, a contradiction. Then $\alpha_u=\alpha_v \leqslant  x-1$.

We now index blocks of 0's in $u_{\longg}^{+\alpha,\beta}$ clockwise and split all blocks into groups similarly to Case {\bf 3.2.2}: each group contains $t$ blocks with indices $i, i+r_2, i+2r_2, \ldots, i+(t-1)r_2$ for some $i$. If the blocks $0^{\alpha_u}$ and $0^{\beta_u}$ are in different groups, then we can proceed with the proof as in Case {\bf 3.2.2}. Assume that blocks $0^{\alpha_u}$ and $0^{\beta_u}$ are in the same group, and the blocks $0^{\alpha_u}$ and $0^{\beta_u}$ have indices $i$ and $i+hr_2$, respectively. If $h \neq t-1$ then we can proceed with the proof as in Case {\bf 3.2.3}. Assume $h = t-1$.

From each group which does not contain the blocks $0^{\alpha_u}$ and $0^{\beta_u}$ we can add $0^x$ to $u_1$ from every second block (from blocks with indices $i, i+2r_2, i+4r_2, \ldots$). If $t$ is odd, then from the group containing the blocks $0^{\alpha_u}$ and $0^{\beta_u}$ we can add to $u_1$ block $0^{\alpha_u}$ and blocks $0^x$ from blocks with indices $i+r_2, i+3r_2, \ldots, i+(t-2)r_2$. If $t$ is even, then for the group containing the blocks $0^{\alpha_u}$ and $0^{\beta_u}$ we can add to $u_1$ the block $0^{\alpha_u}$ and the blocks $0^x$ from blocks with indices $i+r_2, i+3r_2, \ldots, i+(t-3)r_2, i+(t-2)r_2$. If $t \neq 4$ then it is not difficult to verify that we add at most $\frac23t$ blocks to $u_1$ from each group. If $t = 4$, then there are at least two groups of blocks since $a \geqslant 3$ (there are at least three big blocks in $u$). So from each group which does not contain the blocks $0^{\alpha_u}$ and $0^{\beta_u}$ we add $\frac{t}2 = 2$ blocks to $u_1$ and from the group containing the blocks $0^{\alpha_u}$ and $0^{\beta_u}$ we add $3$ blocks. So we add at most $\frac23(a+2)$ blocks $0^x$ to $u_1$. We let $u_1'$ denote the subword obtained from $u$. The proof in this case can be completed similarly to Case {\bf 3.2.2}. \qed

 \section{Conclusions}
In this paper, we provided lower and upper bounds for the minimal length $k$ which is sufficient to distinguish two cyclic words of length $n$ by sets of their subwords of length $k$. The lower bound is given in Proposition~\ref{pr:lower_bound}, and the upper bound is provided in Theorem~\ref{th:main}; the difference between lower and upper bounds is bounded by a constant which is at most $5$. We note that comparing the lower bound from Proposition~\ref{pr:lower_bound} with computational results given in Table 1, one can notice that the bound from the proposition is likely to be optimal starting from some length, except for the values $n=4m+6$, where it is smaller by 1. However, the examples giving a better bound do not seem to be generalizable for bigger values of $n$, so it is likely that Proposition~\ref{pr:lower_bound} gives the optimal length. Concerning the upper bound, the proof of Theorem~\ref{th:main} can probably be pushed to reduce the upper bound (with more technical details). However, it is not clear if it could be pushed further to get the precise value of $k$. So, establishing the exact value of $k$ is still an open question. Another open problem is finding the length which allows to recover cyclic words from sets of their factors of length $k$ with multiplicites.

 \section*{Acknowledgements}

This work was supported by the Russian Science Foundation, project 23-11-00133.

\end{document}